\algnewcommand\algorithmicforeach{\textbf{for each}}
\algrenewcommand\algorithmicrequire{\textbf{Input}}
\algrenewcommand\algorithmicensure{\textbf{Output}}
\newtheorem{theorem}{Theorem}
\newtheorem{lemma}{Lemma}
\newtheorem{corollary}{Corollary}
\newtheorem{example}{\quad Example}
\newtheorem*{example2b*}{\quad Example 2 (Continuation)}
\def\*#1{\bm{#1}}
\def\btheta{\mathop{\theta\kern-.45em\hbox{$\theta$}}\nolimits}
\def\bfeta{\mathop{\eta\kern-.5em\hbox{$\eta$}}\nolimits}
\def\bxi{\mathop{\xi\kern-.45em\hbox{$\xi$}}\nolimits}
\def\P{\mathbb P}
\def\A{\mathbb A}
\def\B{\mathbb B}
\def\C{\mathbb C}
\newcommand{\RR}{\mathbb{R}}
\newcommand{\R}{\mathbb{R}}
\newcommand{\diag}{\hbox{diag}}
\newcommand\numberthis{\addtocounter{equation}{1}\tag{\theequation}}
\title{\Large Repro Samples Method for Finite- and Large-Sample Inferences}
\author{Min-ge Xie and Peng Wang}
\date{}
\begin{document}

\maketitle

\thispagestyle{empty}

\vspace{-6mm}
\begin{abstract}
This article presents a novel, general, and effective simulation-inspired approach, called {\it repro samples method}, to conduct statistical inference. The approach studies the performance of artificial samples, referred to as {\it repro samples}, obtained by mimicking  the true observed sample to achieve uncertainty quantification and construct confidence sets for parameters of interest with guaranteed coverage rates. Both exact and asymptotic inferences are developed. An attractive feature of the general framework developed is that it does not rely on the large sample central limit theorem and is likelihood-free. As such, it is thus effective for complicated inference problems which we can not solve using the large sample central limit theorem. The proposed method is applicable to a wide range of problems, including many open questions where solutions were previously unavailable, for example, those involving discrete or non-numerical parameters. To reduce the large computational cost of such inference problems, we develop a unique matching scheme to obtain a data-driven candidate set. Moreover, we show the advantages of the proposed framework over the classical Neyman-Pearson framework.  We demonstrate the effectiveness of the proposed approach on various models throughout the paper and provide a case study that addresses an open inference question on how to quantify the uncertainty  for the unknown number of components in a normal mixture model.  To evaluate the empirical performance of our repro samples method, we conduct simulations and study real data examples with comparisons to existing approaches. Although the development pertains to the settings where the large sample central limit theorem does not apply, it also has direct extensions to the cases where the central limit theorem does hold. 
\end{abstract}

\newpage
\noindent 
Key words: Artificial samples;  Conformal or matching samples; Discrete and continuous parameters; 
Parametric and nonparametric settings; Mixture model; Nuisance parameters

% \vspace{3mm}

\newpage
\setcounter{page}{1} 

\section{Introduction}

Statistical inference includes quantifying uncertainty inherited from sampling based on an observed copy of data. 
The approach of repeatedly creating similar artificial (Monte-Carlo) data 
to help assess the uncertainty has proved in the literature to be an effective method. 
Examples include bootstrap, approximate Bayesian computing (ABC), generalized fiducial inference (GFI) algorithm,
permutation, among others; see, e.g., \citet{EfroTibs93,fishman1996, Robert2016, Hannig2016}, 
and references therein.  
Most of these works, however, rely on the large sample central limit theorem (CLT) or approximations
to justify their validity of statistical inference. 
Their applicability to some complicated
inference problems, especially those to which the large sample CLT does not apply, is limited. In this article, we develop a novel and wide reaching artificial-sample-based inferential framework, which we support with both finite and large sample theories.
 The proposed framework has the advantage of providing valid inference without relying on likelihood functions or large sample theories. 
It is especially effective for difficult inference problems such as those involving discrete and/or non-numerical parameters (e.g., the unknown number of clusters, model A, B, C, etc.) among others. 
Although the repro samples method pertains to settings in which the large sample CLT does not apply, it also has direct extensions to the cases where the large sample theorem does hold.

Suppose that
the sample data $\*Y = (Y_1, \ldots, Y_n)^\top \in {\cal Y} \subset \RR^n$
are generated from a model 
$
\*Y | {\btheta}_0 \sim M_{{\bf \theta}_0}(\cdot),$ where
${\btheta}_0 \in \Theta$ is the true value of the model parameters that can be either numerical or nonnumerical or a mixture of both. 
We assume (and assume only) that we know how to simulate data from $M_{\bf \theta}$, given $\btheta$. Often,  $\*Y | {\btheta}_0 \sim M_{{\bf \theta}_0}(\cdot)$ 
can be re-expressed in a structure equation form (also referred to as an ``algorithmic model''):
\begin{equation}\label{eq:1}
\*Y = G({\btheta}_0, \*U), 
\end{equation} 
where $G(\cdot, \cdot)$ is a known mapping from $\Theta \times {\cal U} \mapsto {\cal Y}$ and $\*U = (U_1, \ldots U_r)^\top \in {\cal U}\subset \RR^r$, $r \ge n$, is a random vector whose distribution is known or from which it can be simulated.
The reason that we adopt the formulation of \eqref{eq:1} is that this model specification is very general.
For example, it includes the traditional statistical model specification using a density function as its special case:
Let $Y \sim f_{\theta_0}(y)$ be a sample from a density function $f_{\theta}(y)$ with true parameter value $\theta_0$.
We then express $Y$ in the form of \eqref{eq:1} as  $Y = G(\theta_0, U)$ where $G(\theta, U) =  F_\theta^{-1}(U)$, $U \sim U(0,1)$ and
$F_\theta(y)$ is the cumulative distribution function of $f_\theta(y)$.
Additional examples, including more complicated ones, 
will be provided throughout the paper.

The repro samples method  uses a simple yet fundamental idea: study the performance of artificial samples that are obtained by mimicking the sampling mechanism 
of the observed data; the
artificial samples then help to 
quantify the uncertainty in estimation of the associated model and parameters. Specifically,~let 
\begin{equation}
\*y_{obs} = G({\btheta}_0, \*u^{rel}) \nonumber
\end{equation}
be the realized observations, where $\*u^{rel}$ is the corresponding unobserved (thus unknown) realization of $\*U$. 
For any $\btheta \in \Theta$ and 
a copy of $\*u^* \sim \*U$, we can obtain 
a copy of artificial sample $\*y^* = G({\btheta}, \*u^*)$, 
which we refer to in this paper as {\it repro samples}.  
Note that, when $\btheta = \btheta_0$ and $\*u^*$ matches $\*u^{rel}$, the corresponding repro sample $\*y^* = G(\btheta, \*u^*)$ 
will match $\*y_{obs}$.
 However, both $\btheta_0$ and $\*u^{rel}$ are unknown. 
 Since $\*u^{rel}$ and $\*u^*$ are both realizations from  the same $\*U$ distribution, we use this fact, together with a so-called {\it nuclear mapping function} on ${\cal U} \times \Theta \to \RR^q$, 
 $q \leq n$, to quantify how likely a $\*u^*$ is a potential match for $\*u^{rel}$.  For the collection of $\*u^*$'s that are deemed as ``likely" matches, we look for the corresponding $\btheta$ that  produces a matching repro sample $\*y^* = G(\btheta, \*u^*)$ with $\*y^* = \*y_{obs}$.
 We use these $\btheta$ to help recover $\btheta_0$.
We formally formulate the procedure 
and show that a subset of $\Theta$ constructed by this procedure has a desired frequency coverage rate and thus is a confidence set for $\btheta_0$.

In Section 2, we formally describe the repro samples framework. For a given potential value of $\btheta$, we construct a fixed level-$\alpha$ Borel set of $\*u^*$ through a nuclear mapping function, and 
analyze whether any $\*u^*$ in the Borel set produces repro samples $\*y^* = G({\btheta}, {\*u}^*)$ that match $\*y_{obs}$. By controlling 
$\*u^*$ through the fixed Borel set and seeking for the matching, we obtain a theoretically guaranteed level-$\alpha$ confidence set for $\btheta_0$. The use of  nuclear mapping functions is new and general, and it greatly increases the flexibility and utility of the repro samples method. If the distribution of the nuclear mapping has an explicit form, so does the confidence set. When the the confidence set cannot be explicitly expressed, 
we provide a generic algorithm.  Moreover, the repro samples method can also be used to obtain p-values to conduct hypothesis testing 
for a wide range of problems.
In addition, we further extended the framework beyond the parametric model (\ref{eq:1}), where the data generating model $G(\btheta, \*U)$ is not completely known. 
Although the repro samples method shares many insights with some Bayesian and fiducial procedures,
its development is purely  under the frequentist paradigm. It
sidesteps the need to specify a likelihood or a loss function, and also the need to use large sample theories or the Dumpster-Shafer calculus. 
It has a wide range of applications,
 including problems involving any type of parameter spaces, exact and asymptotic inferences, parametric and nonparametric models,  among others.

Section 3 provides 
additional developments we use to facilitate our understanding and broaden the application of  the repro samples method. 
In Section 3.1, we explore  
the relationship of our repro samples method with the classical Neyman-Pearson  hypothesis testing procedure
by considering a special case in which we choose a test statistic as our nuclear mapping function. In general, 
a nuclear mapping function does not need to be a test statistics. Even in special cases in which the nuclear mapping is defined by a typical test statistics, we show that a confidence set obtained by the repro samples method is either the same or smaller than the one obtained by inverting the classical Neyman-Pearson hypothesis testing procedure. 

We also obtain a set of optimality results.
Furthermore, we develop in Section 3.2 a general technique to handle nuisance parameters through profiling.
The method
maintains the desired 
coverage rates for confidence sets constructed for target parameters, even in 
finite sample settings. 
Finally in Section 3.3, we improve the computational efficiency by investigating the use of data-dependent candidate sets to greatly reduce the search region in the parameter spaces. Using a candidate set is particularly effective when the target parameter of the inference is discrete, since we can take advantage of a many-to-one mapping that is uniquely inherent in a repro samples procedure. We provide a Monte-Carlo approach to obtain the candidate set and supporting theories to ensure the coverage rate of the final confidence set.

To demonstrate the utility and performance of the repro samples method, we use a normal mixture model as a case study example in Section 4.
Accounting for uncertainty in an estimation of the unknown number of components in a mixture model, say $\tau_0$, is a 
notoriously difficult and unsolved 
inference problem \citep[e.g.,][]{roeder_graphical_1994,chen_inference_2012}, partly because the parameter space of $\tau_0$ is discrete and the large sample CLT does not apply on any point estimator $\hat \tau$ of $\tau_0$.  
Here, we
construct a finite-sample confidence set for the number of components $\tau_0$ in the normal mixture model.
We demonstrate that the repro samples method can effectively solve this open inference problem while handling a large and varying number of nuisance parameters.
We perform simulation and real data analyses, and provide comparisons to existing frequentist and Bayesian approaches. The analyses is revealing. First, the regular criterion-based point estimators (even consistent estimators such as the BIC estimators) are often biased for smaller $\tau$ and they rarely recover the true $\tau_0$. Second, existing (frequentist) likelihood-based procedures do not provide any upper bounds, and they have low power and work only asymptotically. Third, Bayesian credible sets obtained using the  posteriors of $\tau$ perform poorly in terms of covering $\tau_0$ in repeated runs, even when a uniform flat prior of $\tau$ (over a range including the true value $\tau_0$) is used. In addition, credible sets for $\tau_0$ are very sensitive to the prior choices of both the target parameter $\tau$ and also (surprisingly) the nuisance parameters. By contrast, the proposed repro sample confidence sets always recover $\tau_0$ with the desired coverage rate. They also provide a full picture of the sampling uncertainty pertaining to the estimation of $\tau_0$. 

The repro samples method stems
from 
several existing artificial-sample-based inference procedures across Bayesian, frequentist and fiducial (BFF) paradigms, particularly,  approximate Bayesian computing (ABC), generalized fiducial inference (GFI) and bootstrap methods, in which artificial samples are used to help quantify inference uncertainty. It also extends a key idea of the inferential model (IM) \citep{Martin2015} that the inference uncertainty is quantified through the auxiliary $\*U$. 
We view the repro samples approach as an improvement over the current existing inference methods and provide comparisons throughout the paper when the topic arises,
At the outset, we emphasize that the repro samples method side-steps many 
restrictions encountered in the aforementioned BFF approaches. First, bootstrap,  GFI and ABC methods rely on large sample CLT or some version of Bernstein von Mises-type theorem
(thus CLT) to justify their (frequentist) frequency performance, therefore does not guarantee Small sample frequency performance. Second, both ABC and GFI methods are approximation methods that require a predetermined threshold $\epsilon$ to judge whether an artificial sample is sufficiently close to the observed sample. Their supporting theories need $\epsilon \to 0$ at a fast rate as the sample size $n \to \infty$. However, when $\epsilon \to 0$  at a fast rate, the sample rejection rate
in their algorithms
tends to $1$. Thus, how to choose an appropriate $\epsilon$ to balance the procedures' computational efficiency and inference validity
is an unsolved problem \citep{Li2016}. 
Finally, IM method is designed for epistemic probabilistic inference, and its complexity and reliance on Dempster-Shafer calculus make it hard to implement under complex settings. The IM method is not easily adaptable for asymptotic inference, and the restriction of using an ancillary to quantify uncertainty also limits its use in applications.
The repro samples method, developed fully following the frequentist setting, addresses or sidesteps all these issues.

The remainder of this article is arranged as follows. Section 2 develops the general framework of the repro samples method with supporting theories and associated algorithms. Sections 3 considers several important aspects that further support the general development, including connection to the 
Neyman-Pearson procedure, handling of nuisance parameters, and a unique approach of getting a data-driven candidate set to improve computational efficiency. Section 4 is a case study on a normal mixture model that addresses a long standing open problem in statistical inference. Section 5 contains further remarks and discussions.

\section{A general inference framework by repro samples}
\label{sec:general}

Let $(\Omega, {\cal F}, \P)$ be a probability space and $\*U \in {\cal U}$ is a measurable $r \times1$ random vector on $\Omega$ with $\{ \*U \leq \*u\} = \{\omega \in \Omega: \*U(\omega) \leq \*u\}$. 
Suppose $\*Y$ is the sample data from (\ref{eq:1}) and the
 $m\times 1$ vector $\*Z= {\bf m}(\*Y)$,  $m \leq n$, is a function (or a summary) of the sample data $\*Y$.
A slightly more general model  than (\ref{eq:1}) is to assume that $\*Z$ is directly generated from 
\begin{equation}
\label{eq:Z}
\*Z = G_z({\btheta}_0, \*U), 
\end{equation} 
where $G_z(\cdot, \cdot) = {\bf m}(G(\cdot, \cdot))$ is a given mapping from $\Theta \times {\cal U} \to 
{\cal Z} \subset 
R^m$.
In the special case that ${\bf m}(\cdot)$ is the identity mapping function,
then $G_z = G$, $\*Z= \*Y$ and model (\ref{eq:Z}) reduces to model (\ref{eq:1}); thus (\ref{eq:1}) is a special case of (\ref{eq:Z}). 

Let $\*z_{obs}$ be the observed $\*Z$ and $\*u^{rel}$ be the (unknown) realization of $\*U$ associated with  $\*z_{obs}$. By (\ref{eq:Z}), we have
\begin{equation*}
\*z_{obs} = G_z({\btheta}_0, \*u^{rel}).  
\end{equation*} 
For any~given $\btheta^*$ and a simulated $\*u^* \sim \*U$, we can thereafter construct an artificial (`fake') copy of $\*z_{obs}$: 
\begin{equation*}
\*z^* = G_z({\btheta}^*, \*u^*). 
\end{equation*} 
We call $\*z^*$ a {\it repro sample} of $\*z_{obs}$ and $({\btheta}^*, \*u^*)$ a {\it repro copy} of $({\btheta}_0, \*u^{rel})$. 

The goal of Section 2.1-2.3 is to present how to use repro samples to construct a confidence set, based on the observed $\*z_{obs}$,  for the unknown parameter $\btheta_0$ with a guaranteed coverage rate. In Section 2.1 we consider the case of giving the algorithmic model (\ref{eq:Z}) where $G_z(\cdot, \cdot)$ is completely known. Section 2.2 contains a generic algorithm
when an explicit mathematical expression of the confidence set is not available. 
Section 2.3 extends the developments to a more general case in which the algorithmic model $G_z(\cdot, \cdot)$ is not completely given. 

\subsection{A general formulation of a basic repro samples method}
 
Under the algorithmic model (\ref{eq:Z}) with a fixed $\btheta_0$, the sampling uncertainty of $\*Z$ is solely determined by the uncertainty of $\*U$, whose distribution is free of the unknown model parameters $\btheta_0$. 
Let us first focus on the easier task 
of quantifying the uncertainty in $\*U$. 
Specifically, let $B_{\alpha} \subset {\cal U}$ be a level-$\alpha$ Borel set such that $\P\{\*U \in B_{ \alpha}\}\ge \alpha.$ 
Although $\*u^{rel}$ is unknown, we have more than $\alpha$-level confidence that $\*u^{rel} \in B_\alpha$. If $\*u^*$ is confined within the same set $B_{\alpha}$ (i.e., requiring $\*u^*  \in B_{\alpha}$) and we 
collect any potential $\btheta$ values that can create a repro sample $\*z^* = G_z(\btheta, \*u^*)$ that matches $\*z^* = \*z_{obs}$, then it leads to a subset on $\Theta$: 
\begin{equation}
\label{eq:G0}
\Gamma_{\alpha}(\*z_{obs}) = \big\{\btheta: \exists \,
\*u^*  \in B_{\alpha} \mbox{ s.t. }  \*z_{obs} = 
G_z({\btheta}, \*u^*)  \big\} \subset \bm \Theta,
\end{equation}
where s.t. is short for such that. In another words, for a potential value $\btheta$, if there exists a $\*u^* \in B_\alpha$ such that the repro sample ${\*z}^* = G( {\btheta}, {\*u}^*)$ matches $\*z_{obs}$ with $\*z^* = \*z_{obs}$, then we keep this $\btheta$ in the set. 
Since $\*z_{obs} = G_z({\btheta}_0, \*u^{rel})$, if 
$\*u^{rel}  \in B_{\alpha}$ then $\btheta_0 \in \Gamma_{\alpha}(\*z_{obs})$. 
Similarly,
under the model $\*Z = G_z({\btheta}_0, \*U)$, we have $\left\{\*U \in B_\alpha\right\} \subseteq \left\{\btheta_0 \in \Gamma_{\alpha}(\*Z)\right\}$. Thus, by construction, 
$$\P\left\{\btheta_0 \in \Gamma_\alpha(\*Z) \right\} \ge \P\big\{\*U\in B_\alpha\big\}  \ge \alpha.
$$ 
The set $\Gamma_{\alpha}(\*z_{obs})$ constructed in (\ref{eq:G0}) 
is a $100\alpha\%$ confidence set for $\btheta_0$. Here, $B_\alpha$ is any single fixed set. As long as  $\P(\*U \in B_\alpha) \ge \alpha$, the validity statement holds. 

To expand the scope of this development, 
the requirement $\P(\*U \in B_\alpha) \ge \alpha$ is replaced~by 
\begin{equation}
    \label{eq:B}
    \P \left\{T(\*U, \btheta)  \in B_{ \alpha}(\btheta) \right\}\ge \alpha,
\end{equation}
for any $\btheta$, where $T(\cdot, \cdot)$ is a mapping function from ${\cal U} \times \Theta \to$ ${\cal T} \subseteq \RR^{q}$, for some $q \leq n$. 
Then, the constructed confidence set in (\ref{eq:G0}) becomes
\begin{equation}
\label{eq:G1}
\Gamma_{\alpha}(\*z_{obs}) = \big\{\btheta: \exists \, \*u^* \in {\cal U} \mbox{ s.t. }  \*z_{obs} = 
G_z({\btheta}, \*u^*),
\,  
T(\*u^*, \btheta)  \in B_{\alpha}(\btheta)  \big\} \subset \Theta. 
\end{equation}
The function $T(\cdot, \cdot)$, which we refer to as a {\it nuclear mapping}, adds much flexibility to our repro-sampling method, as we will show throughout the paper. For the moment,~we assume that the nuclear mapping $T$ is given; further discussions are provided in later sections.

From now on the set $\Gamma_{\alpha}(\*z_{obs})$ refers to the general expression of (\ref{eq:G1}) instead of the special case in (\ref{eq:G0}),  unless specified otherwise. 
In the construction of $\Gamma_{\alpha}(\*z_{obs})$ in (\ref{eq:G1}), for any potential value $\btheta$, we impose two constraints: (a) The repro error $\*u^*$ is confined by the restriction $T(\*u^*, \btheta)  \in B_{\alpha}(\btheta)$; (b) The repro sample $\*z^*(\btheta) = G_z({\btheta}, \*u^*)$ generated using $\bm \theta$ matches the observed sample, i.e.,  $\*z_{obs} =  \*z^*(\btheta)$.  The quantification of inference uncertainty is achieved only through constraint (a) requiring $T(\*u^*, \btheta)  \in B_{\alpha}(\btheta)$ for the given $\btheta$,
and it does not involve the unknown true $\btheta_0$ that generates $\*z_{obs}$. 
The input of the observed sample $\*z_{obs}$ is discernible by matching the artificial repro sample $\*z^*(\btheta) = G_z({\btheta}, \*u^*)$ with $\*z_{obs}$ in constraint~(b).

As an illustration of the method, we consider below a simple binomial example: 

\begin{example} [Binomial sample] 
Assume $Y \sim Binomial(r, \theta_0)$, $0 < \theta_0 < 1$, and we observe $y_{obs}$.
We can re-express the model as $Y = \sum_{i =1}^r {\bf 1}{(U_i \leq \theta_0)}$ for $U_1, \ldots, U_r \sim U(0,1)$.  
It follows that
$y_{obs} =  \sum_{i =1}^r {\bf 1}{(u_i^{rel} \leq \theta_0)},$
where $\*u^{rel} = (u_1^{rel} , \ldots, u_r^{rel} )$ is the realization of $\*U = (U_1, \ldots, U_r)$ corresponding to 
$y_{obs}$. 

For a given vector $\*u = (u_1, \ldots, u_r)$,  we consider the nuclear mapping function
$T(\*u,  \theta) =  \sum_{i =1}^r {\bf 1}{(u_i \leq \theta)}$, and we have $T(\*U,  \theta) =   \sum_{i =1}^r {\bf 1}{(U_i \leq \theta)} \sim Binomial(r, \theta)$, for each given $\theta \in \Theta = (0,1)$. Let $B_\alpha(\theta) = \{\*u \big| a_L(\theta) \leq T(\*u,  \theta) \leq a_U(\theta) \}$, where  
\begin{equation}
    \label{eq:B-bounds}
    \left(a_L(\theta), a_U(\theta)\right) = \underset{\left\{(i,j): 
    \sum^j_{k=i} {r \choose k} \theta^k (1 -  \theta)^{(r-k)} \geq \alpha \right\}}{\arg\min} |j-i|,
\end{equation}
is the pair of $(i,j)$ in $\{(i,j): 
    \sum^j_{k=i} {r \choose k} \theta^k (1 - \theta)^{(r-k)} \geq \alpha\}$ that makes the shortest interval $\left[a_L(\theta), a_U(\theta)\right]$. 
By a direct calculation, it follows that $P\left\{T(\*U,  \theta) \in [a_L(\theta), a_U(\theta)]\right\} \ge  \alpha$, for any $\theta \in (0,1)$.
The confidence set in (\ref{eq:G1}) is then 
\begin{eqnarray*}
\Gamma_{\alpha}(y_{obs}) &=& \bigg\{ \theta \big |  \exists \, \*u^* \in {\cal U} \, \hbox{s.t.} \,\, y_{obs} =  \sum_{i =1}^r {\bf 1}{(u_i^* \leq \theta)},   
 \sum_{i =1}^r {\bf 1}{(u_i^* \leq \theta)}
\in [a_L(\theta), a_U(\theta)] \bigg\}  \nonumber \\  
& = & \left\{ \theta \big |y_{obs} = \sum_{i =1}^r {\bf 1}{(u_{i}^* \leq \theta)}, y_{obs}\in [a_L(
\theta), a_U(\theta)],  \exists \, \*u^* \in {\cal U} \right\}
\nonumber \\ 
& = & \left\{ \theta \big |y_{obs} = \sum_{i =1}^r {\bf 1}{(u_{i}^* \leq \theta)},  \exists \, \*u^* \in {\cal U} \right\} \cap \left\{ 
\theta\big | y_{obs} \in [a_L(\theta), a_U(\theta)]\right\}
\nonumber \\ 
& = & \left\{ \theta \big |a_L(\theta) \leq y_{obs} \leq  a_U(\theta) \right\}. \nonumber 
\end{eqnarray*}
The last equation holds, because for any given $\theta \in \Theta = (0,1)$ there always exists at least a $\*u^* \in {\cal U}$ such that $y_{obs} = \sum_{i =1}^r {\bf 1}{(u_{i}^* \leq \theta)}$, i.e., $\{ \theta \big |y_{obs} = \sum_{i =1}^r {\bf 1}{(u_{i}^* \leq \theta)}, \exists \, \*u^* \in {\cal U} \} = \Theta$. 

Table~\ref{tab:simulation_binary_example}  provides a numerical study comparing the empirical performance of $\Gamma_{.95}(y_{obs})$ (Repro) against 95\% confidence intervals obtained using the traditional Wald method and the fiducial approach (GFI) of \cite{Hannig2009}, both of which are asymptotic methods that can ensure coverage only when $n$ is large. The rerpo sampling method is an exact method that improves the performance of existing methods, especially when $n \theta_0 < 5$ (a rule of thumb on asymptotic approximation of binomial data; cf., \citet[][p.106]{tCAS90a}). 
\end{example}

\begin{table}[ht]
   \centering 
\resizebox{\textwidth}{!}{\begin{tabular}{cccccccccc}
\hline\hline
 && \multicolumn{2}{c}{$n=20, \theta_0=0.1$} && \multicolumn{2}{c}{$n=20, \theta_0=0.4$} && \multicolumn{2}{c}{$n=20, \theta_0=0.8$}\\ \cline{3-4} \cline{6-7} \cline{9-10}
  && Coverage           &    Width      & & Coverage &Width   & &  Coverage    & Width         \\ \hline
 Repro & &     0.949(0.007)      &  0.281(0.059)   &      & 0.963(0.006)           &       0.408(0.026)   & &    0.959(0.006)       &    0.342(0.045)  \\ \hline
 Wald & &        0.877(0.010)  &     0.236(0.052)      & &  0.927(0.008)         &    0.418(0.028)       &  &  0.915(0.009)       &0.332(0.071)\\ \hline
 GFI & &   0.988(0.003)       &  0.293(0.065)        & &          0.963(0.006) &      0.438(0.022)     &  &    0.973(0.004)     &      0.365(0.056)     \\ \hline\hline
\end{tabular}}
    \caption{Comparison of $95\%$ confidence intervals by the repro, Wald and GFI methods in Binomial$(n, \theta_0)$ data; repetitions $= 1000$; standard errors are enclosed in brackets.}
    \label{tab:simulation_binary_example}
\end{table}

The next theorem states that $\Gamma_{\alpha}(\*z_{obs})$ in (\ref{eq:G1}) is a level-$\alpha$ confidence set.

\begin{theorem}\label{thm:1} Assume model (\ref{eq:Z}) holds. If inequality (\ref{eq:B}) holds exactly, then the following inequality holds exactly 
\begin{equation}
\label{eq:v1}
\P\left\{ {\btheta}_0 \in \Gamma_{\alpha}(\*Z)\right\} \ge \alpha 
\,\,\, \hbox{for} \,\, 0< \alpha <1.
\end{equation}
Furthermore, if the inequality (\ref{eq:B}) holds approximately 
 with $\P \big\{T(\*U, \btheta)  \in B_{ \alpha}(\btheta) \big\}\ge \alpha\{1 + o(\delta^{'})\}$, then (\ref{eq:v1}) holds approximately 
 with $\P\left\{ {\btheta}_0 \in \Gamma_{\alpha}(\*Z)\right\} \ge \alpha\{1 + o(\delta^{'})\}$, for $0< \alpha <1$, where $\delta^{'} > 0$ is 
a small value 
that may 
depend on sample size $n$. 
\end{theorem}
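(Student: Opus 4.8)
The plan is to prove both parts of the theorem by a single event-containment argument that makes the sketch already given for the special case \eqref{eq:G0} fully rigorous, and then to transfer it to the general construction \eqref{eq:G1}. The key observation is that for the random sample $\*Z = G_z(\btheta_0, \*U)$, the actual (unobserved) realization $\*U$ is itself a legitimate candidate for the witness $\*u^*$ appearing in the definition of $\Gamma_\alpha$. This is what lets us bound the coverage event from below without ever needing to exhibit $\*u^{rel}$ explicitly.

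First I would fix $\btheta = \btheta_0$ in \eqref{eq:G1} and establish the crucial set inclusion. On the event $\{T(\*U, \btheta_0) \in B_\alpha(\btheta_0)\}$, take $\*u^* = \*U$. Then the matching constraint $\*Z = G_z(\btheta_0, \*u^*)$ holds automatically, since $\*Z = G_z(\btheta_0, \*U)$ by model \eqref{eq:Z}, and the nuclear-mapping constraint $T(\*u^*, \btheta_0) \in B_\alpha(\btheta_0)$ holds by hypothesis. Thus $\btheta_0$ meets both defining conditions of \eqref{eq:G1}, so $\btheta_0 \in \Gamma_\alpha(\*Z)$. This yields
\[
\big\{T(\*U, \btheta_0) \in B_\alpha(\btheta_0)\big\} \subseteq \big\{\btheta_0 \in \Gamma_\alpha(\*Z)\big\}.
\]
Taking probabilities and invoking \eqref{eq:B} at $\btheta = \btheta_0$ then finishes both cases with the same line of reasoning. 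In the exact case, $\P\{\btheta_0 \in \Gamma_\alpha(\*Z)\} \ge \P\{T(\*U, \btheta_0) \in B_\alpha(\btheta_0)\} \ge \alpha$, which is \eqref{eq:v1}. For the approximate case the identical inclusion combined with the relaxed bound $\P\{T(\*U, \btheta_0) \in B_\alpha(\btheta_0)\} \ge \alpha\{1 + o(\delta')\}$ delivers $\P\{\btheta_0 \in \Gamma_\alpha(\*Z)\} \ge \alpha\{1 + o(\delta')\}$; the two parts differ only in which input inequality is plugged in.

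The argument is essentially structural, so the one place I expect to need care — and the main obstacle — is measurability: to pass to probabilities I must ensure that the existential projection over $\*u^*$ in \eqref{eq:G1} does not destroy measurability of the target event $\{\btheta_0 \in \Gamma_\alpha(\*Z)\}$. The clean resolution is that the containing event $\{T(\*U, \btheta_0) \in B_\alpha(\btheta_0)\}$ is $\cal{F}$-measurable, because $T(\cdot, \btheta_0)$ is a measurable mapping and $B_\alpha(\btheta_0)$ is a Borel set; since $\{T(\*U, \btheta_0) \in B_\alpha(\btheta_0)\}$ is a measurable subset of the target event, the inner measure of the latter already dominates $\alpha$, so the coverage inequality holds regardless of any measurability defect in the larger set. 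I would also emphasize, as the framework advertises, that $\btheta_0$ enters the proof only through \eqref{eq:B} evaluated at $\btheta = \btheta_0$, and that no likelihood and no large-sample approximation is used anywhere, so the finite-sample validity in \eqref{eq:v1} is genuinely exact whenever \eqref{eq:B} is.
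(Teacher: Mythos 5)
Your proposal is correct and follows essentially the same route as the paper's proof: both hinge on the observation that the true realization $\*U$ itself serves as the witness $\*u^*$, giving $\{T(\*U,\btheta_0)\in B_\alpha(\btheta_0)\}\subseteq\{\btheta_0\in\Gamma_\alpha(\*Z)\}$ and hence the coverage bound directly from \eqref{eq:B}. Your version is if anything slightly more careful, since you state the containment as an inclusion (all that is needed for the lower bound) and address the measurability of the existentially-defined event, points the paper's one-line proof glosses over.
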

We provide a proof of the theorem is in the Appendix. 
The $\delta^{'}$ in Theorem~\ref{thm:1} 
may or may not depend on sample size $n$. In examples involving large sample 
approximations, $\delta^{'}$ is often a function of $n$ with $\delta^{'} \to 0$ as $n \to 0$. However, there are also examples in which $\delta^{'}$ does not involve $n$.
For example, suppose that $Y | \theta = \lambda \sim \text{Poisson}(\lambda)$. Then, $U = (Y -  \lambda)/\sqrt{\lambda} \to N(0,1)$, when $\lambda$ is large. So, 
if we take $T(U, \lambda) = U$, $\P \{T(U, \lambda)  \in B \} =  \int_{{t} \in B}
\phi({t}) d {t} \{1 + o(\lambda^{-1})\}$, for any Borel set $B$, thus $\delta^{'} = o(\lambda^{-1})$. 
Here, 
 $\phi(t)$ is the density function of $N(0,1)$.

Moreover, the repro samples method can be used to do hypothesis testing. From Theorem~1, we have the following corollary. A proof is given in Appendix.

\begin{corollary}\label{col:test}
Suppose we are interested in testing a hypothesis $H_0: \theta \in \Theta_0$ vs $H_1: \theta \not\in \Theta_0$. In that case, we can define a p-value $p(\*z_{obs})$ as
\begin{align*} 
p(\*z_{obs})=   1- \inf_{\theta \in \Theta_0}\left[\inf\left\{\alpha': \theta \in \Gamma_{\alpha'}(\*z_{obs})\right\}\right],
\end{align*}
where $\Gamma_{\alpha'}(\*z_{obs})$ is the level-$\alpha'$ repro samples confidence set defined in \eqref{eq:G1}.
Rejecting $H_0$ when
$p(\*z_{obs}) \leq \gamma$ leads to a size $\gamma$ test, for any $0 < \gamma < 1$. 
\end{corollary}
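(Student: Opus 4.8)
The plan is to exploit the duality between the repro confidence sets $\Gamma_{\alpha'}(\*z_{obs})$ and the proposed test, and then invoke the coverage guarantee of Theorem~\ref{thm:1}. First I would rewrite the p-value in a more transparent form. Setting $\alpha_0(\*z_{obs}) = \inf_{\btheta \in \Theta_0}\inf\{\alpha': \btheta \in \Gamma_{\alpha'}(\*z_{obs})\}$, the inner infimum is the smallest level at which a given $\btheta$ enters the confidence set, and the outer infimum over $\Theta_0$ turns this into the smallest level at which $\Gamma_{\alpha'}(\*z_{obs})$ first meets the null region $\Theta_0$. Thus $p(\*z_{obs}) = 1 - \alpha_0(\*z_{obs})$, and the decision rule rejects $H_0$ exactly when $\alpha_0(\*z_{obs}) \ge 1-\gamma$.

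Next I would fix an arbitrary true parameter $\btheta_0 \in \Theta_0$, so that $H_0$ holds, and establish the key containment of events. Because $\btheta_0$ itself lies in $\Theta_0$, the outer infimum obeys $\alpha_0(\*Z) \le \inf\{\alpha': \btheta_0 \in \Gamma_{\alpha'}(\*Z)\}$. Consequently, for any fixed level $\alpha' < 1-\gamma$, if $\btheta_0 \in \Gamma_{\alpha'}(\*Z)$ then $\alpha_0(\*Z) \le \alpha'$ and hence $p(\*Z) = 1-\alpha_0(\*Z) \ge 1-\alpha' > \gamma$, so $H_0$ is not rejected. Passing to the contrapositive yields the event inclusion
\[
\{p(\*Z) \le \gamma\} \subseteq \{\btheta_0 \notin \Gamma_{\alpha'}(\*Z)\}, \qquad \text{for every } \alpha' < 1-\gamma .
\]

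I would then apply Theorem~\ref{thm:1}, which guarantees $\P\{\btheta_0 \in \Gamma_{\alpha'}(\*Z)\} \ge \alpha'$ at each fixed level $\alpha'$. Combining this with the inclusion above gives $\P\{p(\*Z) \le \gamma\} \le 1 - \P\{\btheta_0 \in \Gamma_{\alpha'}(\*Z)\} \le 1-\alpha'$ for all $\alpha' < 1-\gamma$. Letting $\alpha' \uparrow 1-\gamma$ forces $\P\{p(\*Z)\le \gamma\} \le \gamma$, and since $\btheta_0 \in \Theta_0$ was arbitrary, the type I error is controlled by $\gamma$ uniformly over the null, which is the claimed size-$\gamma$ property.

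The step I expect to require the most care is the boundary handling in the event inclusion: evaluating directly at the level $1-\gamma$ runs into the strict-versus-nonstrict inequality buried in the infimum defining $\alpha_0$, so I deliberately argue at levels $\alpha' < 1-\gamma$ and then take the limit, which sidesteps any measure-zero boundary ambiguity without needing extra continuity or nestedness hypotheses on the family $\{\Gamma_{\alpha'}\}$. Finally, the approximate-coverage branch of Theorem~\ref{thm:1} propagates through the identical chain of inequalities, replacing the bound $\gamma$ by $\gamma\{1+o(\delta')\}$.
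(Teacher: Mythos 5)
Your proof is correct and follows essentially the same route as the paper's: both use the duality between the p-value and the confidence sets, reduce to the event that $\btheta_0 \notin \Gamma_{\alpha'}(\*Z)$ for all $\alpha' < 1-\gamma$, and invoke the coverage guarantee of Theorem~\ref{thm:1}. The only difference is that you make explicit the limiting step $\alpha' \uparrow 1-\gamma$, which the paper's one-line proof leaves implicit in its final inequality.
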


Although the repro samples method is developed following the frequentist principles, thus making it a completely  frequentist approach, it is also closely related to several existing approaches across the Bayesian, fiducial, and frequentist (BFF) paradigms. 
In Remarks 2.1 - 2.3 below, we provide discussions to connect and compare these approaches. See also Appendix II for further and more elaborated discussions. 

{\bf Remark 2.1} (comparison with Neyman-Pearson testing method). Corollary~\ref{col:test} 
suggests that we can conduct hypothesis testing using the repro samples method. 
Section~\ref{sec:T-1} 
further investigates the relationship between the repro samples method and the classical Neyman-Pearson (N-P) hypothesis testing procedure
under a special case where a nuclear mapping $T(\*u, \theta)$ in \eqref{eq:G1} is chosen through a test statistic, i.e., we set $T(\*u, \theta) = \widetilde T(\*z', \theta)$ where $\*z' = G_z(\*u, \btheta)$ is a copy of data 
generated by the given $\btheta$ and $\widetilde T(\*z', \theta)$ is a test statistic of a hypothesis testing problem. 
In general, the repro samples approach is broader than the classical N-P procedure.
In particular, the nuclear mapping $T(\*u, \btheta)$ does not need to be a test statistic since it is not necessarily a function of  data $\*z'.$ 
Allowing the nuclear mapping to be a function of $\*u$ provides more flexibility for developing inference procedures than the conventional testing approach.
We illustrate this point using 
a simple example where a single $\text{Bernoulli}(\theta)$ observation $y_{obs}$ is generated by $Y= I(U<\theta_0)$, $U \sim \text{Uniform}(0,1).$
To get a confidence set for $\theta_0$, we can let 
$T(u, \theta)= u$, 
and define $ B_\alpha(\theta) = 
    \begin{cases}
    [1-\alpha, 1] & \theta  > 0.5 \\
    [\frac{1-\alpha}{2},\frac{1 + \alpha}{2} ] & \theta =0.5 \\
    [0, \alpha] & \theta  < 0.5
    \end{cases}.
$ 
Then, following~\eqref{eq:G1}, we construct the repro samples confidence set  
 $   \Gamma_\alpha(y_{obs}) = 
    \begin{cases}
    [1-\alpha, 1] & y_{obs}= 1\\
    [0, \alpha] & y_{obs} =0
    \end{cases}.
$ 
Apparently, $U$ is not a function of $Y$, since it is impossible to solve for $U$ in the structure equation $Y= I(U<\theta)$ when given $(Y, \theta)$. 
In this case, $T(U, \theta)= U$ is not a test statistic. 
Section~\ref{sec:T-1} provides further discussion and additional examples comparing the repro samples methods and the classical N-P method. We show that the repro samples method can always produce an either equal or smaller confidence set than the traditional N-P method when the nuclear mapping $T(\*u, \theta)$ is chosen through a test statistic. There are also options in which the nuclear mapping function $T(\*u, \theta)$ is not a test statistic.

{\bf Remark 2.2} (comparison with GFI and ABC methods) In modern statistical practice, R.A. Fisher’s fiducial method is understood as an inversion method
that solves the structural equation $\*z_{obs} = G_z({\btheta}, \*u^*)$ for parameter ${\btheta}$ with any $\*u^* \sim \*U$
\citep[cf.,][]{brenner1983, Hannig2016, Thornton2022}.  The matching equation $\*z_{obs} = G_z({\btheta}, \*u^*)$ in (\ref{eq:G1}) plays a key role in the GFI development 
\citep{Hannig2016}. 
\cite{Hannig2016} and  \cite{Thornton2022} have also explored the connection of GFI to the Bayesian ABC method, noting that trying to match $\*z_{obs}$ with $\*z^*(\btheta) = G_z({\btheta}, \*u^*)$ is a  key step in both GFI and ABC procedures. However, as stated in 
\cite{Hannig2016}, an
exact matching of $\*z_{obs}  = \*z^*(\btheta)$ for any given $\*u^*$ is difficult and sometime even impossible. Both GFI and ABC methods adopts a tuning threshold value $\epsilon$ to judge whether an artificial sample $\*z^* = G_z({\btheta}, \*u^*)$
is close to the observed $\*z_{obs}$.  
 Unfortunately, the use of approximating threshold $\epsilon$ has piratical issues. 
 For instance, as shown in
 \cite{Li2016}, 
the validity of the ABC method requires that $\epsilon \to 0$ at a fast rate, but this fast rate in turn leads to inefficient sampling with a high (often 100\%) rejection rate. 
How to choose an appropriate $\epsilon$ to balance  inferential validity and computational efficiency
is still an unaddressed question in both ABC and GFI.
Our repro sample approach avoids this issue of using a threshold $\epsilon$ by directly working with a set of $\*u^* \in \{T(\*u^*, \btheta) \in B_\alpha(\theta)\}$ and multiple copies of $\*z^* = G_z({\btheta}, \*u^*)$ for each given $\btheta$. 
As further explained in Appendix II
the repro samples method in effect compares $\*z_{obs}$ with multiple copies of $\*z^* = G_z({\btheta}, \*u^*)$ at each given $\btheta$ and thus enabling us to use a confidence (significant) level $\alpha$ to replace the arbitrary $\epsilon$. 
In addition, the repro sample method can provide finite sample inference whereas GFI does not. Finally, ABC is a Bayesian procedure that not only requires additional prior assumption, but also requires a sufficient summary statistic in its algorithm.
The repro samples approach compares favorably in that it does not have these constraints.  

{\bf Remark 2.3} (comparison with
IM method).  
Inferential model (IM) \citep{Martin2015} also works 
directly on $\*U$ to quantify inference uncertainty, but contains notable differences from the proposed repro samples method. First and foremost, the IM method exists to provide an epistemic  probabilistic inference for $\btheta_0$, a task not considered under the current frequentest framework; cf.,  \citet{Martin2015}. The repro samples on the other hand is a frquentist approach developed fully within the frequentist framework. Also importantly, the IM method uses 
random sets to quantify the uncertainty of the auxiliary term $\*U$. But the repro samples method quantifies inference uncertainty through a nuclear mapping function $T(\*U, \btheta)$ that involves a potential parameter $\btheta$ and an associated Borel set. Only one fixed level-$\alpha$ Borel set is needed.  Although IM can sometimes potentially provide some finer details in special cases (since Dempster-Shefer calculus automatically includes both upper and lower probabilities; i.e., plausible and belief functions), with additional constraints, the application of IM is more limited than repro samples method, especially under more complicated settings. Moreover, using a nuclear mapping function involving $\btheta$ greatly extends the scope and flexibility of the repro samples method. As a special case, we are able to directly work on auxiliary $\*U$ by setting $T(\*U, \btheta) = \*U$. To the best of our knowledge, the framework of IM method cannot be easily extended to incorporate varying $\btheta$. Furthermore, IM development requires Dempster-Shefer calculus and asymptotic inference is not available under the development. When both apply, the IM and repro samples methods provide either the same or comparable inference conclusions.
Finally, the repro samples development 
affords us the opportunity to explore the unique aspect of matching to obtain a data-driven candidate set and greatly reduce computational burden;
details are provided in Section 3.3. On the other hand, how to incorporate a data-driven candidate set with the IM method to reduce computational cost is not apparent.

\subsection{A general Monte-Carlo algorithm}

In Example 1, the distribution of $T({\*U}, \btheta)$ has an explicit form, enabling us to express the level-$\alpha$ confidence set $\Gamma_{\alpha}(\*y_{obs})$ in (\ref{eq:G1}) explicitly. In cases when the distribution of $T({\*U}, \btheta)$ is not explicit, we can use a Monte-Carlo method to obtain a level-$\alpha$ Borel set $B_\alpha(\btheta)$ in (\ref{eq:B}) and then construct the level-$\alpha$ confidence set $\Gamma_{\alpha}(\*y_{obs})$ in (\ref{eq:G1}) as stated in Algorithm 1 below:

\begin{algorithm} \caption{Monte-Carlo algorithm (general case)}
\label{alg:Ag}
\begin{algorithmic}[1]
\State Simulate many copies of  ${\*u^s} \sim \*U$; the collection of ${\*u^s}$ forms a set ${\cal V}$.
\State For each given (grid) value of $\btheta \in \Theta$, 
\newline 
(a) Compute  $\{T({\*u^s}, \btheta), {\*u^s} \in {\cal V}\}$, from which obtain an empirical distribution of $T(\*U, \btheta)$. Based on the empirical distribution, we get a level-$\alpha$ Borel set $B_{ \alpha}(\btheta)$. 
\newline (b) Check whether there exists a $\*u^* \in {\cal U}$ such that 
${\*z_{obs}} = G_z({\btheta}, \*u^*)$
and $T(\*u^*, \btheta)  \in B_{\alpha}(\btheta)$. If both of the above criteria are satisfied, keep the $\btheta$.
\State Collect all kept $\btheta$ to form the set $\Gamma_{\alpha}(\*y_{obs})$. 
\end{algorithmic}
\end{algorithm}

This general algorithm, 
grid search algorithm, searches the space of $\Theta$ and is
used only when an explicit expression of $T(\*U, \btheta)$'s distribution
is not available. 
Our remarks on the implementation of the algorithm are as follows.

{\bf Remark 2.4} 
To obtain the level-$\alpha$ Borel set $B_{\alpha}({\btheta})$ for a given $\btheta$ in Step 2(a), w often need to compute the (empirical) distribution of $T(\*U, {\btheta})$ based on the set of Monte-Carlo points ${{\cal S}_{\theta}} = \{T(\*u^s, {\btheta})$, $\*u^s \in {\cal V}\}$. Then according to the empirical distribution, we can always numerically calculate $B_{\alpha}({\btheta})$.  If $T(\*u, {\btheta})$ is a scalar, we just directly compute its empirical distribution $F_{\cal V}(t) = $ $\sum_{\*u^s \in {\cal V}} I\big(T(\*u^s, {\btheta}) \leq t \big)\big/|{\cal V}|$ and use the (upper/lower) quantiles of $F_{\cal V}(t)$ to obtain a level-$\alpha$ interval as the Borel set $B_{\alpha}({\btheta})$. When $T(\*u, {\btheta})$ is a vector, 
say in ${\cal T} \subseteq \RR^q$,
we can use data depth
and follow \citet[][\S 3.2.2]{Liu2021}
to construct 
the Borel set $B_{\alpha}({\btheta})$. Specifically, we let $D_{{\cal S}_{\theta}}(\*t)$ be the empirical depth function for any $\*t \in \cal T$ that is computed based on the Monte-Carlo points in ${\cal S}_{\theta} = \{T(\*u, {\btheta})$, $\*u \in {\cal V}\}$  and $F_{{\cal V}|D} (t) =$ $\sum_{\*u^s \in {\cal V}} I\big(D_{{\cal S}_{\theta}}(T(\*u^s, {\btheta})) \leq t \big)\big/|{\cal V}|$ be the empirical CDF of $D_{{\cal S}_{\theta}}(T(\*u^s, {\btheta})).$  
Then, following Lemma 1 (a) of \cite{Liu2021}, the level-$\alpha$ (empirical) central region 
$B_\alpha({\btheta}) = \{\*t: F_{{\cal V}|D}(D_{{\cal S}_{\theta}}(\*t)) \geq 1 - \alpha\}$ is a level-$\alpha$ Borel set on ${\cal T}$. See also Example 2 (B) of Section 2.3 for a numerical example, where the empirical depth function $D_{{\cal S}_{\theta}}(\cdot)$ is computed by R package {\it ddalpha} \citep{pokotylo_depth_2019}.

{\bf Remark 2.5} The algorithm needs to calculate the empirical distribution of $T(\*U, \btheta)$ for each $\btheta$.  
To improve the computing efficiency (at the cost of potentially losing the flexibility of the repro samples method), we can sometimes  choose a nuclear mapping function $T(\*U, \btheta)$ whose distribution of is free (or approximately free) of $\btheta$. 
In this case,  $B_\alpha(\btheta )$ in (\ref{eq:B}) is free (or approximately free) of $\btheta$ with $B_\alpha(\btheta ) \equiv B_\alpha$. 
A benefit of using this type of nuclear mapping function 
is that we can modify Algorithm~\ref{alg:Ag} to save computing time: 
in step 2(a), we only need to obtain $B_{\alpha}(\btheta)$ for a single value of $\btheta$ instead of for each grid value of $\btheta \in 
\Theta.$

{\bf Remark 2.6} 
Later in Section 3.3, we will discuss the technique of using a data-driven candidate set to significantly reduce the search space. From the outset, we would like to comment that 
the search algorithm works in most cases that one commonly encounter.
In many statistical analysis, for instance the commonly used likelihood-based inference, the regular conditions on the parameter space include that $\Theta$ is a continuous space and the test statistic is also continuous (and often differentiable) in its inner space  $\Theta^o$. Similarly, 
if the nuclear mapping function $T(y_{obs}, \btheta)$ is continuous in $\btheta$ and $\Theta$ is a continuous space, one can in principle use a grid search method to carry out an analysis in this situation. What is new in the repro samples development is that $\Theta$ can either be a discrete space or a set of non-numerical subjects. In this case, even if $\Theta$ is (countable) infinite, we provide in Section~3.3 a way to take advantage of a many-to-one mapping structure inherited in the repro samples procedure to create a finite candidate set and carry out an efficient grid search method. See Section~3.3 for further details.

\subsection{A further extension to more general cases}

In this section, we further extend the framework developed above to settings where the matching equation \eqref{eq:Z} of the algorithmic model is not available. This includes some nonparametric inference problems that are not covered by the existing ABC or GFI methods. 
Consider the example of making  inference for a population quantile of a completely unknown distribution $F$, say, $\theta_0 = F^{-1}(\zeta)$, the $\zeta$-th quantile parameter of $F$. 
Since $I(Y < \theta_0) \sim \text{Bernoulli}(\zeta)$ for $Y \sim F$, we have equation
$I(Y < \theta_0) = U$, where $U \sim \text{Bernoulli}(\zeta)$. Now suppose we observe data $\*y_{obs} = (y_1 \ldots, y_n)^\top$, for a fixed $n$, then the equation becomes 
\begin{equation} \label{eq:quantile}
    \sum_{i=1}^{n} I(Y_i - \theta_0 <0) = \sum_{i=1}^n U_i 
\end{equation}
where $Y_i \overset{iid}{\sim} F$ and $U_i \overset{iid}{\sim} \text{Bernoulli}(\zeta)$. 
The corresponding realized version is
$\sum_{i=1}^{n} I(y_i - \theta_0 <0) = \sum_{i=1}^n u_i^{rel}$. Equation (\ref{eq:quantile}) can not be expressed in the form of either (\ref{eq:1}) or (\ref{eq:Z}). In fact, one cannot generate a repro sample $\*y^*$ based on (\ref{eq:quantile}) that is directly comparable to the observed sample $\*y_{obs}$. However, we can still use a repo sampling approach to provide an exact inference for such a nonparametrc inference problem.

In particular, we
consider the following generalized data generating equation: 
\begin{equation}
    \label{eq:AA}
    g(\*Z, \btheta_0, \*U) = \*0,
\end{equation}
where $g$ is a given mapping function from ${\cal Z} \times \Theta \times {\cal U}  \to \R^s$ and $\*Z = {\bf m}(\*Y)$ is a $m\times 1$ vector as defined at the start of Section 2.
The corresponding realization version is
  $g(\*z_{obs}, \btheta_0, \*u^{rel}) = \*0$.
Equation (\ref{eq:quantile}) 
is a special case of (\ref{eq:AA}) with $g(\*Y, \btheta_0, \*U) = \sum_{i=1}^{n} I(Y_i - \theta_0 <0) - \sum_{i=1}^n U_i$.

The level-$\alpha$ confidence set in (\ref{eq:G1}) is  modified to be
\begin{eqnarray}
\label{eq:G2}
\Gamma_{\alpha}(\*z_{obs}) = \big\{\btheta:  \exists 
\,\*u^* \in {\cal U} \mbox{ s.t. } 
 g(\*z_{obs}, \btheta, \*u^*) = 0, T(\*u^*, \btheta)  \in B_{\alpha}(\btheta)
\big\} \subset \Theta.
\end{eqnarray}
We have the following theorem for the set $\Gamma_{\alpha}(\*z_{obs})$ defined in (\ref{eq:G2}). The proof is similar to that of Theorem~\ref{thm:1} and is briefly described in Appendix I. 
\begin{theorem}\label{thm:2} Assume the generalized model equation (\ref{eq:AA}) holds. If the inequality (\ref{eq:B}) holds exactly, then for $\Gamma_{\alpha}(\*z_{obs})$ defined in (\ref{eq:G2}) the following inequality holds exactly, 
\begin{equation}
\label{eq:V2}
\P\left\{ {\btheta}_0 \in \Gamma_{\alpha}(\*Z)\right\} \ge \alpha 
\quad
 \hbox{for $0< \alpha <1$.}
\end{equation}
 Furthermore, if (\ref{eq:B}) holds approximately 
 with $\P \left\{T(\*U, \btheta)  \in B_{ \alpha}(\btheta) \right\}\ge \alpha\{1 + o(\delta^{'})\}$, then (\ref{eq:V2}) holds approximately 
 with $\P\left\{ {\btheta}_0 \in \Gamma_{\alpha}(\*Z)\right\} \ge \alpha\{1 + o(\delta^{'})\}$, for $0< \alpha <1$, where $\delta^{'} > 0$ is 
a small value 
that may 
depend on sample~size~$n$. 
\end{theorem}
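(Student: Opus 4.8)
The plan is to reproduce the witness argument behind Theorem~\ref{thm:1} almost verbatim, the only change being that the exact matching equation $\*z_{obs} = G_z(\btheta, \*u^*)$ is replaced by the generalized constraint $g(\*z_{obs}, \btheta, \*u^*) = \*0$. The central observation is that the true (unobserved) realization $\*u^{rel}$ of $\*U$ always serves as an admissible witness for $\btheta_0$ in the definition (\ref{eq:G2}), provided only that it happens to land in the Borel set prescribed by the nuclear mapping at $\btheta_0$.

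First I would record that, by the realized version of the generalized model equation (\ref{eq:AA}), the identity $g(\*z_{obs}, \btheta_0, \*u^{rel}) = \*0$ holds deterministically for the observed data. In the random formulation, model (\ref{eq:AA}) asserts that $g(\*Z, \btheta_0, \*U) = \*0$ holds with probability one, so the candidate choice $\*u^* = \*U$ satisfies the first constraint in (\ref{eq:G2}) at $\btheta = \btheta_0$ almost surely. Consequently, on the event $A = \{T(\*U, \btheta_0) \in B_\alpha(\btheta_0)\}$ the vector $\*U$ simultaneously satisfies both defining constraints of $\Gamma_\alpha(\*Z)$ at $\btheta_0$ — namely $g(\*Z, \btheta_0, \*U) = \*0$ and the Borel-set requirement $T(\*U, \btheta_0) \in B_\alpha(\btheta_0)$ — so $\*U$ certifies membership $\btheta_0 \in \Gamma_\alpha(\*Z)$. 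This yields the key set inclusion $A \subseteq \{\btheta_0 \in \Gamma_\alpha(\*Z)\}$.

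Taking probabilities and invoking assumption (\ref{eq:B}) at the point $\btheta = \btheta_0$ then gives
\[
\P\{\btheta_0 \in \Gamma_\alpha(\*Z)\} \ge \P(A) = \P\{T(\*U, \btheta_0) \in B_\alpha(\btheta_0)\} \ge \alpha,
\]
which is (\ref{eq:V2}). For the approximate statement I would run the identical inclusion and merely replace the last inequality by the assumed bound $\P\{T(\*U, \btheta_0) \in B_\alpha(\btheta_0)\} \ge \alpha\{1 + o(\delta')\}$; monotonicity of probability under $A \subseteq \{\btheta_0 \in \Gamma_\alpha(\*Z)\}$ propagates the approximate lower bound verbatim.

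I do not anticipate a genuine obstacle, since the generalization from $G_z$ to the implicit map $g$ does not alter the witness argument: the only property used is that $\*u^{rel}$ solves the constraint at the true parameter, which holds by the very definition of model (\ref{eq:AA}) whether the constraint is an explicit inversion or an implicit zero-set. The one point requiring mild care is measurability — ensuring that $A$ and $\{\btheta_0 \in \Gamma_\alpha(\*Z)\}$ are events so that the probability inequality is meaningful — but this is guaranteed by the measurability of $\*U$ together with the Borel-set assumption placed on $B_\alpha(\btheta_0)$ through the nuclear mapping $T(\cdot, \btheta_0)$.
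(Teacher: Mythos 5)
Your proposal is correct and follows essentially the same witness argument as the paper's proof: since $g(\*Z, \btheta_0, \*U) = \*0$ holds by the model, the event $\{T(\*U, \btheta_0) \in B_\alpha(\btheta_0)\}$ is contained in $\{\btheta_0 \in \Gamma_\alpha(\*Z)\}$, and (\ref{eq:B}) then gives the bound. Your explicit use of the set inclusion (rather than the paper's stated equality of events) is if anything the slightly cleaner formulation of the same one-line argument.
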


As discussed in Sections 2.2,  
when the distribution of $T(\*u^*, \btheta)$ cannot be explicitly expressed, we may use a Monte-Carlo algorithm to carry out the construction of $\Gamma_{\alpha}(\*y_{obs})$. Specifically, we may still use Algorithm~\ref{alg:Ag} to get $\Gamma_{\alpha}(\*y_{obs})$ but replace the matching equation ${\*z_{obs}} = G_z({\btheta}, \*u^*)$ in the algorithm with $g(\*z_{obs}, \btheta, \*u^*) = 0$.

The example below provides an illustration of this extended framework. Part (A) is on nonparametric quantile inference and part (B) is on 
censored quantile regression where making a finite-sample (semi-parametric) inference for model parameters is new and previously unavailable in the literature. Both parts demonstrate that 
the repro samples method performs well and is compared favorably to the corresponding large sample bootstrap methods.

\begin{example}[Nonparametric and semi-parametric inference of quantiles] \label{ex:crq} (A) 
Assume $\*y_{obs} = (y_1, \ldots, y_n)^\top$ are from an unknown distribution $F$. We are interested in making inference about the population quantile $\theta_0 = F^{-1}(\zeta)$, for a given $0<\zeta <1$. 
Write
$g(\*Y, \theta, \*U) = \sum_{i = 1}^n I(Y_i \leq \theta) - \sum_{i =1}^n U_i$, where $Y_i \sim F$ and $U_i \sim \text{Bernoulli}(\zeta)$. 
We define $T(\*U, \theta) =  T(\*U) = \sum_{i =1}^n U_i$. It follows that $T(\*U) \sim \text{Binomial}(n, \zeta)$ and $B_\alpha = [a_L(\zeta), a_U(\zeta)]$, where $a_L(\cdot)$ and $a_U(\cdot)$ are defined in (\ref{eq:B-bounds}). By (\ref{eq:G2}), a finite sample level-$\alpha$ confidence set is
$\Gamma_{\alpha}(\*y_{obs}) = \big\{\theta: \sum_{i = 1}^n I(y_i \leq \theta) - \sum_{i =1}^n u_i = 0, 
\sum_{i =1}^n u_i \in [a_L(\zeta), a_U(\zeta)]
\big\} = \big\{\theta: a_L(\zeta) \leq \sum_{i = 1}^n I(y_i \leq \theta) \leq a_U(\zeta))
\big\} = \big[y_{(a_L(\zeta))}, y_{(a_U(\zeta)+1)}\big),
$
where 
$y_{(k)}$ is the $k$th order statistic
of the sample $\{y_1, \ldots, y_n\}$,  
$y_{(0)}$ and $y_{(n+1)}$ are the infimum and supremum of support of $F.$ 

Figure~\ref{fig:quantile} provides a numerical study to compare the empirical performance of $\Gamma_{.95}(y_{obs})$ 
against 95\% intervals obtained by the conventional bootstrap method under two types of $F$.
The repro samples method works~well even for $\zeta$'s close to $0$ or $1$, whereas the bootstrap method has coverage issues under these settings.

(B) Consider a censored regression model 
\citep{powell_censored_1986}, $Y_i = \max\{0,$ ${\*x}^\top_i \btheta_0 + \epsilon_i\}$, $i = 1, \ldots, n,$ where $\epsilon_i$ are independent error terms from an unknown distribution $F$ with 
medium $0$. Following \cite{bilias_simple_2000}, the estimating equation for performing a quantile regression is $\sum_{i=1}^n \*x_i \big\{I(Y_i - {\*x}_i^T \btheta  \leq 0) - \zeta\big\}  I(\*x^T_i\btheta > 0 ) = 0$,  
where $\zeta$ is the quantile level and 
the parameter of interest is $\btheta_0(\zeta) = (F^{-1}(\zeta), 0, \ldots,0)^\top + \btheta_0$. 
Since $I(Y_i -  {\*x}_i^T \btheta_0(\zeta)  \leq 0) \sim \text{Bernoulli}(\zeta)$, equation \eqref{eq:AA} can be expressed as
$
g(\*Y, \btheta, \*U) = \sum_{i=1}^{n} {\*x}_i I{({\*x}_i^T \btheta > 0)} \big\{I(Y_i -  {\*x}_i^T \btheta \leq 0)  -  I(U_i \leq \zeta)\big\},
$
where $U_i \sim U(0,1).$ We consider a nuclear mapping function $T(\*U, \btheta) = \big\{\sum_{i=1}^n \*x_i \*x_i^T I({\*x}_i^T \btheta > 0)\big\}^{-1}\sum_{i=1}^n {\*x}_i I{({\*x}_i^T \btheta > 0) I(U_i \leq \zeta)}$. 
Since the distribution of $T(\*U, \btheta)$ does not have an explicit expression, we use Algorithm 1 to obtain a level-$\alpha$ confidence set for $\btheta_0(\zeta)$. 
Here, $T(\*U, \btheta)$ is a vector (of the same length of $\btheta$) and 
$B_\alpha(\btheta)$ is obtained using the data-depth approach described in Remark 2.4. Table 2 provides a numerical study to compare $\Gamma_\alpha(\*y_{obs})$ in (\ref{eq:G2}) with the confidence sets obtained using the enhanced bootstrap method of \cite{bilias_simple_2000}. Again, the repro samples method provides level-$95\%$ confidence sets with finite sample coverage guarantee and it works well for this difficult inference problem even at a small $n = 75$, while the asymptotic bootstrap method 
has coverage issues under the same settings for $\zeta=0.5$. For $\zeta=0.9,$ both methods provide good coverage rates, but the bootstrap confidence sets are much larger.

\begin{table}[]
\centering
\resizebox{\textwidth}{!}{\begin{tabular}{llccc}
\hline
                 Quantile & Error Distribution & Repro Samples  &  Bootstrap & Relative Volume \\ \hline
\multirow{3}{*}{$\zeta = 0.50$} & (a) Normal  & 0.945(0.007)  & 0.885(0.010)         & 0.779(0.028)      \\
                  & (b) Mixture & 0.927(0.008)         & 0.891(0.010)          & 0.784(0.040)         \\
                  & (c) Heteroscedastic & 0.943(0.007)          & 0.893(0.010)          & 0.762(0.008)      \\
                  \hline
\multirow{3}{*}{$\zeta = 0.90$} & (a) Normal  & 0.945(0.007)          & 0.970(0.005)          & 17.518(2.723)         \\
                  & (b) Mixture &   0.949(0.007)          & 0.977(0.005)       & 214.962(9.595)         \\ 
                  & (c) Heteroscedastic &0.951(0.007)          & 0.980(0.004)         & 346.202(11.403)  \\ \hline
\end{tabular}}
\caption{Coverage rates and relative volumes of repro samples method versus the bootstrap method by \cite{bilias_simple_2000}. 
The study is based on model
   $y_i = \max\{0.5 + x_{i1} + x_{i2} + \epsilon_i, 0\},$
where $x_{i1} \sim \text{Bernoulli}(0.5)$ and $x_{i2} \sim N(0,1),$ and $n=75.$ The $\epsilon_i$ are generated from: (a) normal distribution $\epsilon_i \sim N(0,1),$ (b) mixture normal $\epsilon_i \sim w_iN(0,1) + (1-w_i) N(0,2), w_i \sim \text{Bernoulli}(0.75)$ and (c) heterocedastic distribution $\epsilon_i \sim (1+0.15x_{i2})N(0,1).$ 
The relative volume of the two three dimensional confidence sets $C_{repro}$ and $C_{boot}$ is defined as $\frac{\text{the proportion of points in $C_{repro}$ that are also in $C_{boot}$}}{\text{the proportion of points in $C_{boot}$ that are also in $C_{repro}$}}$. 
Each setting is repeated 1000 times.
}
\label{tab:crq_joint}
\end{table}

\end{example}

\begin{figure}[ht]
    \centering
    \begin{subfigure}[b]{\textwidth}
    \centering
      \includegraphics[width=\textwidth, height= 0.26\textheight]{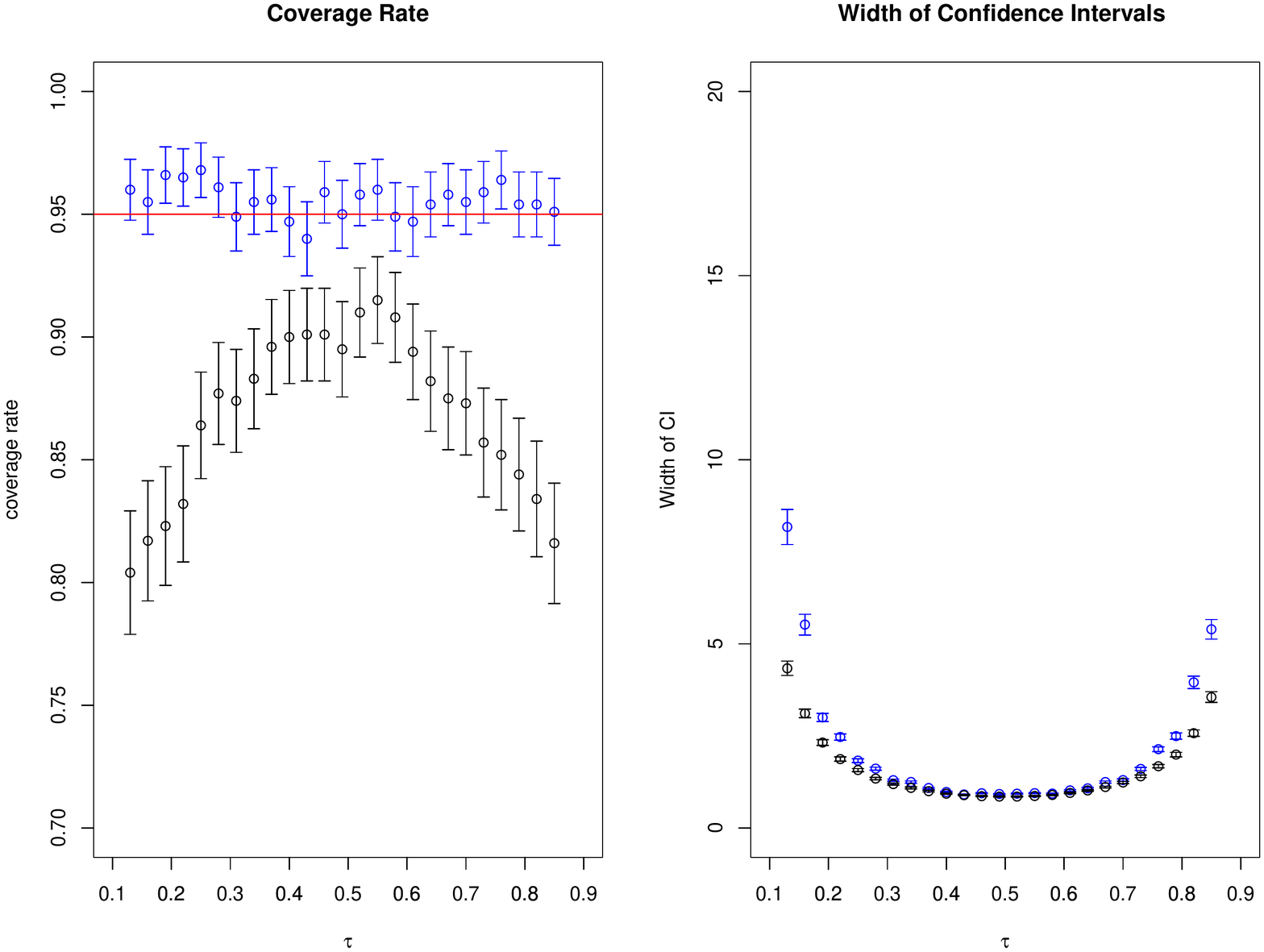}
      \caption{Cauchy Distribution}
    \end{subfigure}
   \begin{subfigure}[b]{\textwidth}
   \centering
    \includegraphics[width=\textwidth,  height= 0.26\textheight]{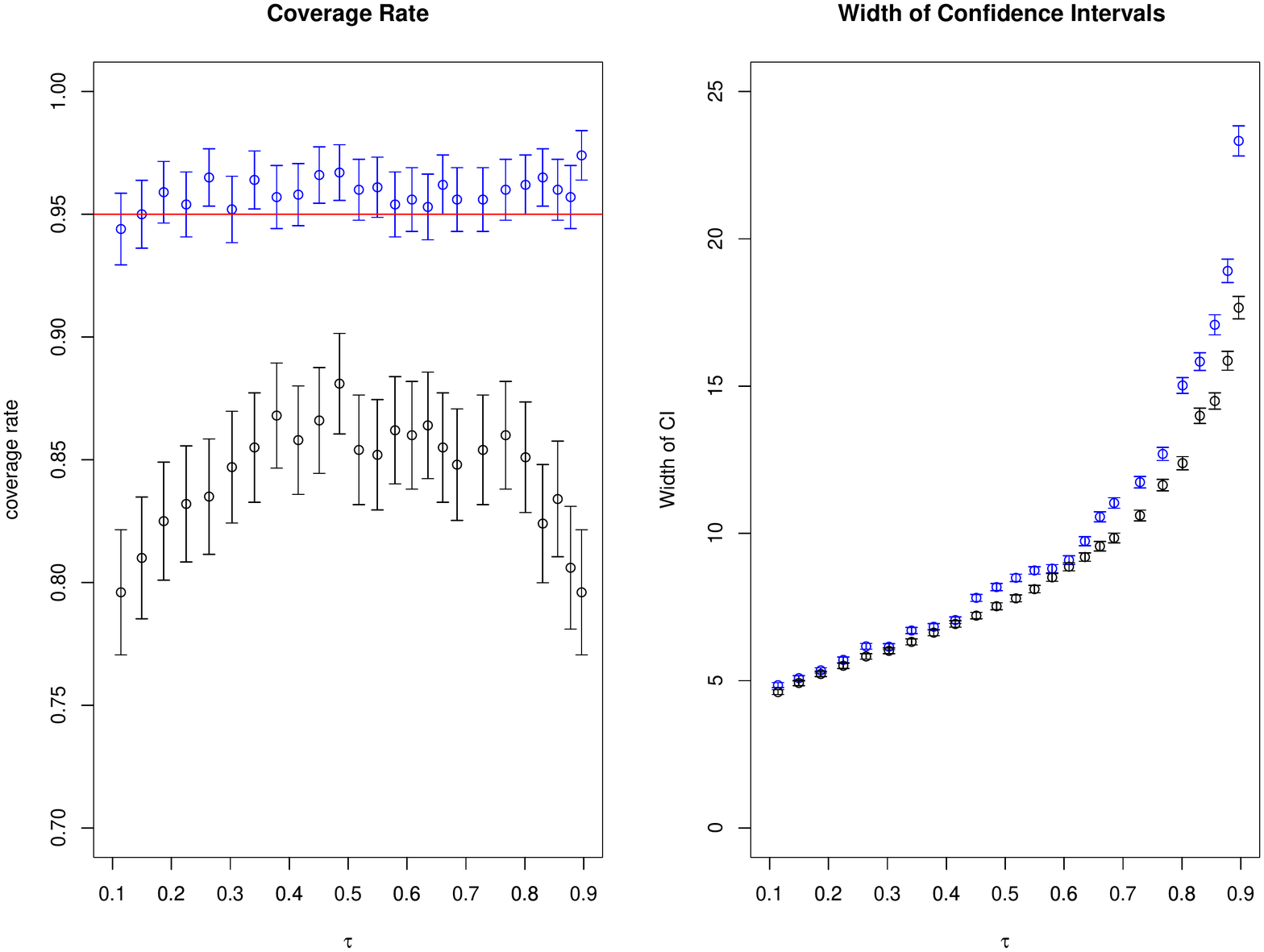}
    \caption{Negative Binomial Distribution}
   \end{subfigure}
    \caption{Coverage rates and widths of 95\% confidence intervals for $\theta_0 = F^{-1}(\zeta)$ of an (unknown) distribution $F$ for a range of percentile $\zeta$ values. Bars and dots colored in blue are produced by the repro samples method and those in black are 
    by the conventional bootstrap method. In (a) $F$ is $\text{Cauchy}(0,1)$ and in (b) $F$ is $\text{NB}(2, 0.1)$. Both cases have sample size $n=60.$ 
    The simulations are replicated for $1000$ times. The error bars display the mean~$\pm 2$~standard~errors.}
    \label{fig:quantile}
\end{figure}

\section{Test statistic, nuisance parameters and candidate set}\label{sec:T}

For any nuclear mapping function $T$, as long as we have a set $B_\alpha(\btheta)$ such that (\ref{eq:B}) holds, 
the set $\Gamma_{\alpha}(\*z_{obs})$ constructed accordingly is a level-$\alpha$ confidence set for $\btheta$. In other words, it is possible that one might have multiple choices of $T$ that all lead to valid (and perhaps different) confidence sets.
Therefore, the role of $T$ under the repro samples framework is similar to that of a test statistic under the classical (Neyman-Pearson) hypothesis testing framework. 
Accordingly, a good choice for $T$ is problem specific and we do not have a unique solution to all situations.
In this section, we investigate the choice of the nuclear mapping function $T$ in two specific yet common settings.
Specifically, in Section 3.1, we consider the special case of using a test statistic, including 
an (approximate) pivot quantity such as a likelihood ratio test statistic, as a nuclear mapping function. We provide comparisons to Neyman-Pearson testing methods.
In Section 3.2, we consider
the situation of existing nuisance parameters and provide a method to define a nuclear mapping function while minimizing the impact of the nuisance parameters. 

To improve the efficiency of 
computing Algorithms~\ref{alg:Ag},
we also discuss the use of a pre-screening candidate set in Section 3.3. 
We pay special attention  to 
discrete target parameters, a unique and also important development under the repro samples framework.

\subsection{Define a nuclear mapping through a test statistic}\label{sec:T-1}

In a classical testing problem $H_0: \btheta_0 = \btheta$ versus $H_1: \btheta_0 \not = \btheta$, one often constructs a test statistic, say $\widetilde T(\*z_{obs}, \btheta)$, and assume that we know the distribution of $\widetilde T(\*Z', \btheta)$ for $\*Z' = G_z(\btheta,\*U)$ generated under $H_0: \btheta_0 = \btheta$. 
Under the Neyman-Pearson framework, we can~derive  
a level-$\alpha$ acceptance region $A_\alpha(\btheta) = \big\{\*z_{obs} \big| \widetilde T(\*z_{obs}, \btheta) \in  B_\alpha(\btheta)\big\}$, where the set $B_\alpha(\btheta)$ satisfies $\P\big\{\widetilde T(\*Z', \btheta) \in  B_\alpha(\btheta)\big\} \ge \alpha$. By the property of test duality, a level-$\alpha$ confidence set is
\begin{equation*}
\widetilde \Gamma_\alpha(\*z_{obs}) =  \left\{\btheta:  \*z_{obs} \in  A_{\alpha}(\btheta) \right\}
= \left\{\btheta:  \widetilde T(\*z_{obs}, \btheta)  \in B_{\alpha}(\btheta) \right\}. 
\end{equation*}

Suppose the nuclear mapping function $T(\*u, \btheta)$ is defined through this test statistic:
\begin{equation} \label{eq:tT1}
T(\*U, \btheta) = \widetilde T(\*Z', \btheta) = \widetilde T(G_z({\btheta}, \*U), \btheta), 
\end{equation} 
where $\*Z' = G_z(\btheta,\*U)$. Since $\P\left\{T(\*U, \btheta) \in  B_\alpha(\btheta)\right\} = \P\big\{\widetilde T(\*Z', \btheta) \in  B_\alpha(\btheta)\big\} \ge \alpha$ for the same $B_\alpha(\cdot)$, a level-$\alpha$ confidence set of (\ref{eq:G1}) by the repro samples method is 
\begin{eqnarray*}
\Gamma_{\alpha}(\*z_{obs}) 
 =  \left\{\btheta: \exists \*u^* \in {\cal U} \mbox{ s.t. }  \*z_{obs} = G_z({\btheta}, \*u^*), \widetilde T(G_z({\btheta}, \*u^*), \btheta)  \in B_{\alpha}(\btheta)  \right\}.
\end{eqnarray*}
 We have the following lemma suggesting $\Gamma_{\alpha}(\*z_{obs}) \subseteq  \widetilde \Gamma_{\alpha}(\*z_{obs})$. See Appendix I for a proof. 
 \begin{lemma}\label{Lemma:1}
 If the nuclear mapping function is defined through a test statistic as in (\ref{eq:tT1}), then we have $\Gamma_{\alpha}(\*z_{obs}) \subseteq  \widetilde \Gamma_{\alpha}(\*z_{obs})$. The two sets are equal  
$\Gamma_{\alpha}(\*z_{obs}) = \widetilde \Gamma_{\alpha}(\*z_{obs})$, when $\widetilde \Gamma_{\alpha}(\*z_{obs}) \subseteq \big\{\btheta:  \*z_{obs} = G_z({\btheta}, \*u^*), 
\exists \, \*u^* \in {\cal U} \big\}$.
 \end{lemma}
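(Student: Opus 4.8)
The plan is to prove both set relations by direct element-chasing, exploiting the fact that the only difference between $\Gamma_{\alpha}(\*z_{obs})$ and $\widetilde\Gamma_{\alpha}(\*z_{obs})$ is the extra matching constraint $\*z_{obs} = G_z(\btheta, \*u^*)$ that appears in the repro-samples set. The central observation is that, precisely when this matching holds, the nuclear-mapping membership condition collapses onto the Neyman--Pearson condition: substituting $G_z(\btheta, \*u^*) = \*z_{obs}$ into the definition (\ref{eq:tT1}) gives $\widetilde T(G_z(\btheta, \*u^*), \btheta) = \widetilde T(\*z_{obs}, \btheta)$, so requiring $\widetilde T(G_z(\btheta, \*u^*), \btheta) \in B_{\alpha}(\btheta)$ becomes identical to requiring $\widetilde T(\*z_{obs}, \btheta) \in B_{\alpha}(\btheta)$. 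This substitution identity is the engine of the whole argument.

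For the inclusion $\Gamma_{\alpha}(\*z_{obs}) \subseteq \widetilde\Gamma_{\alpha}(\*z_{obs})$, I would take any $\btheta \in \Gamma_{\alpha}(\*z_{obs})$ and pick a witness $\*u^* \in {\cal U}$ satisfying both defining constraints. The matching constraint lets me rewrite the second constraint as $\widetilde T(\*z_{obs}, \btheta) \in B_{\alpha}(\btheta)$, which is exactly the membership condition for $\widetilde\Gamma_{\alpha}(\*z_{obs})$. This direction holds unconditionally, reflecting the general principle that imposing the additional matching requirement can only shrink the set.

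For the reverse inclusion under the stated hypothesis, I would take any $\btheta \in \widetilde\Gamma_{\alpha}(\*z_{obs})$, so that $\widetilde T(\*z_{obs}, \btheta) \in B_{\alpha}(\btheta)$. The hypothesis $\widetilde\Gamma_{\alpha}(\*z_{obs}) \subseteq \big\{\btheta: \exists \,\*u^* \in {\cal U} \mbox{ s.t. } \*z_{obs} = G_z(\btheta, \*u^*)\big\}$ then supplies a witness $\*u^*$ with $\*z_{obs} = G_z(\btheta, \*u^*)$; for that same $\*u^*$, the substitution identity turns $\widetilde T(\*z_{obs}, \btheta) \in B_{\alpha}(\btheta)$ into $\widetilde T(G_z(\btheta, \*u^*), \btheta) \in B_{\alpha}(\btheta)$, so both repro-samples constraints are met and $\btheta \in \Gamma_{\alpha}(\*z_{obs})$. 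Combining the two inclusions yields the claimed equality.

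I do not expect a genuine obstacle here; the argument is essentially a bookkeeping exercise in existential quantifiers, with the substitution identity doing all the work in both directions. The only point requiring a little care is keeping the roles of the witnesses straight --- in the forward direction the witness $\*u^*$ is furnished by membership in $\Gamma_{\alpha}(\*z_{obs})$, whereas in the reverse direction it is furnished by the hypothesis --- but a single witness suffices to verify both constraints in each case. Conceptually, the hypothesis in the second statement is exactly the condition that renders the matching constraint non-binding on $\widetilde\Gamma_{\alpha}(\*z_{obs})$, which is why the repro-samples and Neyman--Pearson confidence sets then coincide.
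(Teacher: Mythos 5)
Your proposal is correct and uses exactly the same key observation as the paper's proof: under the matching constraint $\*z_{obs}=G_z(\btheta,\*u^*)$, the condition $\widetilde T(G_z(\btheta,\*u^*),\btheta)\in B_\alpha(\btheta)$ collapses to $\widetilde T(\*z_{obs},\btheta)\in B_\alpha(\btheta)$, so that $\Gamma_\alpha(\*z_{obs})$ is the intersection of $\widetilde\Gamma_\alpha(\*z_{obs})$ with the matching set. The paper presents this as a chain of set equalities while you argue by element-chasing, but the content is identical.
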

 
It is possible that $\Gamma_{\alpha}(\*z_{obs})$ is strictly smaller than 
$\widetilde \Gamma_{\alpha}(\*z_{obs})$ with $\Gamma_{\alpha}(\*z_{obs}) \subsetneq \widetilde \Gamma_{\alpha}(\*z_{obs})$. The following is an example in which $\Gamma_{\alpha}(\*z_{obs})$ is strictly smaller than 
$\widetilde \Gamma_{\alpha}(\*z_{obs})$. 

\begin{example}\label{ex:3}
  Suppose $\*y_{obs} = (y_1, \ldots, y_n)^\top$ is a set of sample from the model $Y_i = \theta_0 + U_i$, $U_i \sim U(-1,1)$, $i = 1, \ldots, n$. Here, $\Theta = (-\infty, +\infty)$, ${\cal Y} = (-\infty, \infty)$. 
 The point estimator $\bar Y = \frac 1n \sum_{i=1}^n Y_i$ is unbiased (and also $\sqrt{n}$-consistent). Let's
 use the test statistic $\widetilde T(\*y_{obs}, \theta) = \bar y  - \theta$. Since $n \{(\bar Y  - \theta) + 1\}/2 = \sum_{i=1}^n \frac{U_i + 1}2$ follows a Irwin-Hall distribution when $\theta $ is the true value $\theta_0$,  
 by the classical testing method a level-$95\%$ confidence interval is $\widetilde \Gamma_{.95}(\*y_{obs}) = (\bar y - \frac2n q_{.975} + 1, \bar y + \frac2n q_{.975} - 1)$. Here, $q_{.975}$ is the $97.5\%$ quantile of the Irwin-Hall distribution. 
 If we use the repro samples method, our confidence set is then
 $\Gamma_{.95}(\*y_{obs}) = \big\{\theta:   y_{i} = \theta +  u_i^*,
 i = 1, \ldots, n; \, \bar y \in (\theta - \frac2n q_{.975} + 1, \theta + \frac2n q_{.975} - 1);
\exists  \*u^* \in (-1, 1)^n \big\} = \left\{\theta: \theta \in \cap_{i=1}^n (y_i -1, y_i+1) \right\} \cap \widetilde \Gamma_{.95}(\*y_{obs})$. Since
the constraint $\{\theta \in \cap_{i=1}^n (y_i -1, y_i+1)\}$ is non-trivial,
the strictly smaller statement $\Gamma(\*y_{obs}) \subsetneq \widetilde \Gamma(\*y_{obs})$ holds for  non-trivially many~realizations of $\*y_{obs}$.
 
As an illustration, consider the case $\theta = 0$ and $n = 3$, in which $q_{.975} = 2.4687$. Suppose we have a sample realization $\*y_{obs} = (-0.430, 0.049, 0.371)^\top$. By a direct calculation,  
$\widetilde \Gamma_{\alpha}(\*z_{obs}) =  (-.6458, .6458)$ and  $\Gamma_{\alpha}(\*y_{obs}) = \big\{\btheta:   y_{obs, i} = \theta +  u_i^*, i = 1, 2, 3; \,  
\exists  (u_1^*, u_2^*, u_3^*) \in (-1, 1) \times(-1, 1) \times (-1, 1) \big\} \cap \widetilde \Gamma_{\alpha}(\*y_{obs}) = \big\{(-1.430, .570) \cap (-.951, 1.049) \cap (-.629, 1.371)\big\} \cap   (-.6458, .6458)= (-.629, .570)$. Clearly,  $\Gamma_{\alpha}(\*z_{obs}) \subsetneq \widetilde \Gamma_{\alpha}(\*z_{obs})$. 
We repeated the experiment $1,000$ times. The coverage rates of $\widetilde \Gamma_{\alpha}(\*y_{obs})$ and $\Gamma_{\alpha}(\*y_{obs})$ are both $95.1\%$. Among the $1,000$ repetitions, $219$ times 
$\Gamma_{\alpha}(\*y_{obs}) = \widetilde \Gamma_{\alpha}(\*y_{obs})$ and 
$781$ times $\Gamma_{\alpha}(\*y_{obs}) \subsetneq \widetilde \Gamma_{\alpha}(\*y_{obs})$. The average lengths of $\Gamma_{\alpha}(\*y_{obs})$ and $\widetilde \Gamma_{\alpha}(\*y_{obs})$ are $0.915$ and $1.292$, respectively. 
\end{example}

As mentioned in Remark 2.2, a nuclear mapping function does not need to be a test statistic.
However, if it is defined through a test statistics by (\ref{eq:tT1}), there are some immediate implications from Lemma 1. 
First, the confidence set by the repro samples method is either smaller or the same as the one obtained using the test statistic by the classical testing approach. So the set $\Gamma(\*z_{obs})$ by the repro samples method is more desirable. Second, if the test statistic is optimal (in the sense that it leads to a powerful test), then the confidence set constructed by the corresponding nuclear mapping function is also optimal. In particular, we have the following corollary.
Here, a level-$\alpha$ {\it uniformly most accurate} (UMA) confidence set (also called {\it Neyman shortest}) refers to a level-$\alpha$ confidence set that minimizes the probability of false coverage (i.e., probability of covering a wrong parameter value) over a class of level-$\alpha$ confidence sets; cf., \citet[][\S 9.3.2]{tCAS90a}.  See Appendix I for a proof of the corollary.

\begin{corollary}\label{cor:UMA}
(a) If the test statistic $\widetilde T(\*z_{obs}, \btheta)$  corresponds to the uniformly most powerful test and  $\widetilde \Gamma(\*z_{obs})$ is a level-$\alpha$ UMA confidence set, then the set $\Gamma(\*z_{obs})$ by the corresponding repro sample method is also a level-$\alpha$ UMA confidence set. \\
(b) If the test statistic $\widetilde T(\*z_{obs}, \btheta)$ corresponds to the uniformly most powerful unbiased test and  $\widetilde \Gamma(\*z_{obs})$ is a level-$\alpha$ UMA unbiased confidence set, then the confidence set $\Gamma(\*z_{obs})$ by the corresponding repro sample method is also a level-$\alpha$ UMA unbiased confidence set.
\end{corollary}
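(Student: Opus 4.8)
The plan is to leverage the pointwise set containment from Lemma~\ref{Lemma:1} together with the validity guarantee of Theorem~\ref{thm:1}, and then to run a short ``sandwich'' argument against the optimality of $\widetilde\Gamma_\alpha$. Recall that a level-$\alpha$ UMA confidence set is one that minimizes the false-coverage probability $\P_{\btheta_0}\{\btheta' \in C(\*Z)\}$, over the appropriate class of level-$\alpha$ confidence sets, simultaneously for every wrong value $\btheta' \neq \btheta_0$; the ``uniformly'' in UMA is exactly this requirement of optimality for all $\btheta'$ at once, which is why a UMP (resp.\ UMP unbiased) test statistic is needed. The test--confidence duality cited in \citet[][\S 9.3.2]{tCAS90a} supplies, from such a $\widetilde T$, the level-$\alpha$ UMA (resp.\ UMA unbiased) set $\widetilde\Gamma_\alpha$, which I take as granted by the hypothesis of the corollary.

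For part (a), I would first note that $\Gamma_\alpha(\*z_{obs})$ is itself a level-$\alpha$ confidence set by Theorem~\ref{thm:1}, so it belongs to the class over which UMA-ness is defined. Next, Lemma~\ref{Lemma:1} gives the pointwise inclusion $\Gamma_\alpha(\*z) \subseteq \widetilde\Gamma_\alpha(\*z)$ for every realization $\*z$; hence, for any fixed $\btheta' \neq \btheta_0$, the event $\{\btheta' \in \Gamma_\alpha(\*Z)\}$ is contained in $\{\btheta' \in \widetilde\Gamma_\alpha(\*Z)\}$, and taking $\P_{\btheta_0}$ yields
\[
\P_{\btheta_0}\{\btheta' \in \Gamma_\alpha(\*Z)\} \;\le\; \P_{\btheta_0}\{\btheta' \in \widetilde\Gamma_\alpha(\*Z)\}.
\]
Because $\widetilde\Gamma_\alpha$ is UMA, its right-hand side equals the infimum of false-coverage probability over all level-$\alpha$ sets; but $\Gamma_\alpha$ lies in that same class, so its false coverage is \emph{at least} that infimum. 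The two bounds squeeze $\Gamma_\alpha$ to attain the infimum for every $\btheta'$, proving it is UMA.

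For part (b), the only extra ingredient is to verify that $\Gamma_\alpha$ is \emph{unbiased}, so that it qualifies for the narrower class over which UMA-unbiasedness is measured. This follows again from the inclusion: unbiasedness of $\widetilde\Gamma_\alpha$ bounds its false coverage by $\alpha$ at each $\btheta' \neq \btheta_0$, and the pointwise containment transfers this bound to $\Gamma_\alpha$, so $\Gamma_\alpha$ is unbiased as well. Once membership in the unbiased class is established, the identical sandwich argument---now with the infimum taken over unbiased level-$\alpha$ sets---shows $\Gamma_\alpha$ attains the minimum false coverage and is therefore UMA unbiased.

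I expect the main (and essentially only) obstacle to be bookkeeping rather than depth: one must be careful that the optimality comparison in each part is taken over exactly the class to which $\Gamma_\alpha$ has been shown to belong (all level-$\alpha$ sets in (a); unbiased level-$\alpha$ sets in (b)), and that the direction of the false-coverage inequality induced by $\Gamma_\alpha \subseteq \widetilde\Gamma_\alpha$ is compatible with the \emph{minimizing} definition of UMA. The containment deliberately shrinks the set, which can only decrease false coverage, so it works in our favor; the sandwich then closes precisely because $\widetilde\Gamma_\alpha$ already sits at the optimal boundary and $\Gamma_\alpha$ cannot drop below it without leaving the class.
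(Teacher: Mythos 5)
Your proposal is correct and follows essentially the same route as the paper's proof: both rest on the pointwise containment $\Gamma_\alpha(\*z) \subseteq \widetilde\Gamma_\alpha(\*z)$ from Lemma~\ref{Lemma:1}, which transfers the false-coverage bound (and, in part (b), unbiasedness) from $\widetilde\Gamma_\alpha$ to $\Gamma_\alpha$, combined with the minimality definition of UMA. Your ``sandwich'' phrasing and the explicit check that $\Gamma_\alpha$ is itself a level-$\alpha$ set via Theorem~\ref{thm:1} are just slightly more careful bookkeeping of the same chain of inequalities the paper writes directly.
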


Finally, we can define the nuclear mapping through a pivot or an approximately pivot quantity, which is often used as a test statistic. The probability distribution of a pivot quantity does not depend on the unknown parameters. Hence it is convenient to define the nuclear mapping function through a pivot or an approximately pivot quantity, i.e., set $T(\*u, \btheta) = \widetilde T(\*z', \btheta)$,
where $\widetilde T(\*z', \btheta)$ is 
a {\it pivot quantity} whose distribution satisfies $\P \{\widetilde T(\*Z', \btheta)  \in B \} =  \int_{{\bf t} \in B}
f({\bf t}) d {\bf t}$
or an {\it approximately pivot quantity} whose distribution satisfies
$\P \{\widetilde T(\*Z', \btheta)  \in B \} =  \int_{{\bf t} \in B}
f({\bf t}) d {\bf t} \,  \{1 + o(\delta^{'})\}.$ Here, $f({\bf t})$ is a density function that is free of the unknown parameter $\btheta$ and $B \subset \mathbb{R}^{d}$ is any Borel set, and $\delta^{'} > 0$ is 
a small number as defined in Theorems \ref{thm:1} and \ref{thm:2}. 
In this case, 
we have simplified in (\ref{eq:B}) that the set $B_\alpha(\btheta ) \equiv B_\alpha$ is free or approximately free of $\btheta$.

Let's consider the example of a likelihood ratio test (LRT) statistic, which is often regarded as the most commonly used asymptotic pivot quantity; 
cf., \cite{Reid2015}.  
Example~\ref{ex:4} below
compares the repro sample method works with the LRT test.

We both consider an example where the LRT is asymptotically chi-square distributed, and revisit Example~\ref{ex:3} where the regular asymptotic result does not hold. It demonstrates the flexibility of  the repro samples method and its ability to achieve improved performance over the commonly used likelihood inference especially in finite sample~settings.

\begin{example}
[Likelihood inference] \label{ex:4}
(a) Assume $L(\btheta| \*Z)$ is the likelihood function $L(\btheta| \*Z)$ corresponding to model (\ref{eq:Z}).
The LRT statistic for test
$H_0: \btheta_0 = \btheta$ vs $H_1: \btheta_0 \not = \btheta$
is $\lambda_{\theta}(\*z_{obs}) =  
\frac{L(\theta|\*z_{obs})}{\sup_{\theta'} L(\theta'|\*z_{obs})}$.
Suppose the nuclear mapping function is defined as
$T(\*u, \btheta) = -2 \log\{\lambda_{\theta}(\*z)\} = -2 \log\{\lambda_{\theta}(G_z(\*u, \theta))\}$.
A level-$\alpha$ confidence set by the repro sample method is then
$\Gamma_{\alpha}(\*z_{obs}) 
 =  \big\{\btheta:  \*z_{obs} = G(\btheta, \*u^*), -2 \log\{\lambda_{\theta}(\*z_{obs})\}  
 \in B_\alpha(\btheta), \exists \,\*u^* \in {\cal U}\big\},$  
where $B_\alpha(\btheta)$ is a set that satisfies (\ref{eq:B}). This $\Gamma_{\alpha}(\*z_{obs})$ is an exact level-$\alpha$ confidence set, if $B_\alpha(\btheta)$ can be obtained for each given $\btheta$
or directly using Algorithm~\ref{alg:Ag}. 
Frequently, as $n \to \infty$ and under some regularity conditions, $-2 \log\{\lambda_{\theta}(\*Z)\}$ is asymptotically $\chi^2$-distributed with ${p}$ degrees of freedom \cite[][p381]{tCAS90a}, thus asymptotically free of $\btheta$.
It follows that an asymptotic level-$\alpha$ confidence set is
$\Gamma_{\alpha}(\*z_{obs}) =\big\{\btheta:  \*z_{obs} = G(\btheta, \*u^*),  -2 \log\{\lambda_{\theta}(\*z_{obs})\} \leq q_{\chi^2_{p}}(\alpha), \exists \, \*u^* \in {\cal U}\big\}  \subseteq \widetilde \Gamma_{\alpha}(\*z_{obs}) \equiv 
\big\{\btheta:  -2 \log\{\lambda_{\theta}(\*z_{obs})\} \leq q_{\chi^2_{p}}(\alpha) \big\},
$ 
where $q_{\chi^2_{p}}(\alpha)$ is the $\alpha$-th quantile of the $\chi^2_p$ distribution 
and 
$\widetilde \Gamma_{\alpha}(\*z_{obs})$ is the confidence set obtained by the regular LRT method. 
By Corollary~\ref{cor:UMA}, $\Gamma_{\alpha}(\*z_{obs})$ preserves the optimality of $\widetilde \Gamma_{\alpha}(\*z_{obs})$ when $\widetilde \Gamma_{\alpha}(\*z_{obs})$ is optimal.

 (b) 
 We now revisit Example~3, where the likelihood function 
$L(\theta|\*y^{obs}) = \prod\limits_{i =1}^n
    \big[ I\{-1 < y_i - \theta \leq 1\}/2\big] 
    \propto I\big\{ \max\limits_{1\leq i \leq n}|y_i - \theta| < 1 \big\}.
$
The LRT statistic is $\lambda_{\theta}(\*z_{obs}) 
= I\big\{ \max\limits_{1 \leq i \leq n}|y_i - \theta| < 1 \big\}$ and $-2 \log\{\lambda_{\theta}(\*Y)\}$ 
does not converge to a $\chi^2$ distribution. 
However, $\lambda_{\theta}(\*y_{obs})$ is 
decreasing in $\widetilde T(\*y_{obs}, \theta) = \max\limits_{1 \leq i \leq n}|y_i - \theta|$.
So, the LRT rejects $H_0$ with a large $\widetilde T(\*y_{obs}, \theta)$, which leads to 
a level-$\alpha$ confidence set $\widetilde \Gamma_\alpha(\*y_{obs}) = \{\theta: \max\limits_{1 \leq i \leq n} |y_i - \theta| < \alpha^{1/n}\}$. Although we could directly define our nuclear mapping function through $\widetilde T(\*y_{obs}, \theta) =  \max\limits_{1 \leq i \leq n} |y_i - \theta|$ to obtain a repro sample confidence interval, we consider here an alternative choice 
of 
a vector nuclear mapping function $T(\*u, \theta) = (u_{(1)}, u_{(n)})$,
where $u_{(k)}$ is the $k$th order statistic of a sample from $U(-1,1)$. Let $c_\alpha \in (0,1)$ be a solution of
$({c_\alpha+1})^n - 2^{n-1}c_\alpha^n = 2^{n-1}{(1 - \alpha)}$. We can show $\P\{(U_{(1)}, U_{(n)}) \in B_\alpha\} = \alpha$ for $B_\alpha = (-1, -c_\alpha) \times (c_\alpha, 1)$. Thus, a level-$\alpha$ confidence set is $\Gamma_\alpha(\*y_{obs}) = \{\theta: y_{(n)} - 1 < \theta < y_{(1)} + 1;$ 
$ y_{(1)} - \theta < -c_\alpha, y_{(n)} - \theta > c_\alpha\} =
\big(\max\big\{y_{(1)} + c_\alpha, y_{(n)} - 1\big\}, \min\big\{y_{(n)} - c_\alpha, y_{(1)} + 1\big\}\big)$. We have conducted a numerical study with true $\theta_0 = 0$ and $n = 5, 20, 200$, respectively. In all cases, the coverage rates of both confidence intervals are right on target around $95\%$ in 3000 repetitions.  
The intervals by the repro sample method are consistently shorter than those obtained using the LRT method across all sample sizes, as evidenced in Figure~\ref{fig:lrt}. Although a LRT is uniformly most powerful for a simple-versus-simple test by the Neyman-Pearson Lemma, it is not the case for the two-sided test in this example. %In contrast, 
we are able to explore and use the repro samples method to obtain a better confidence interval.    
\end{example}

\begin{figure}
    \centering
      \includegraphics[width=\textwidth, height= 0.25\textheight]{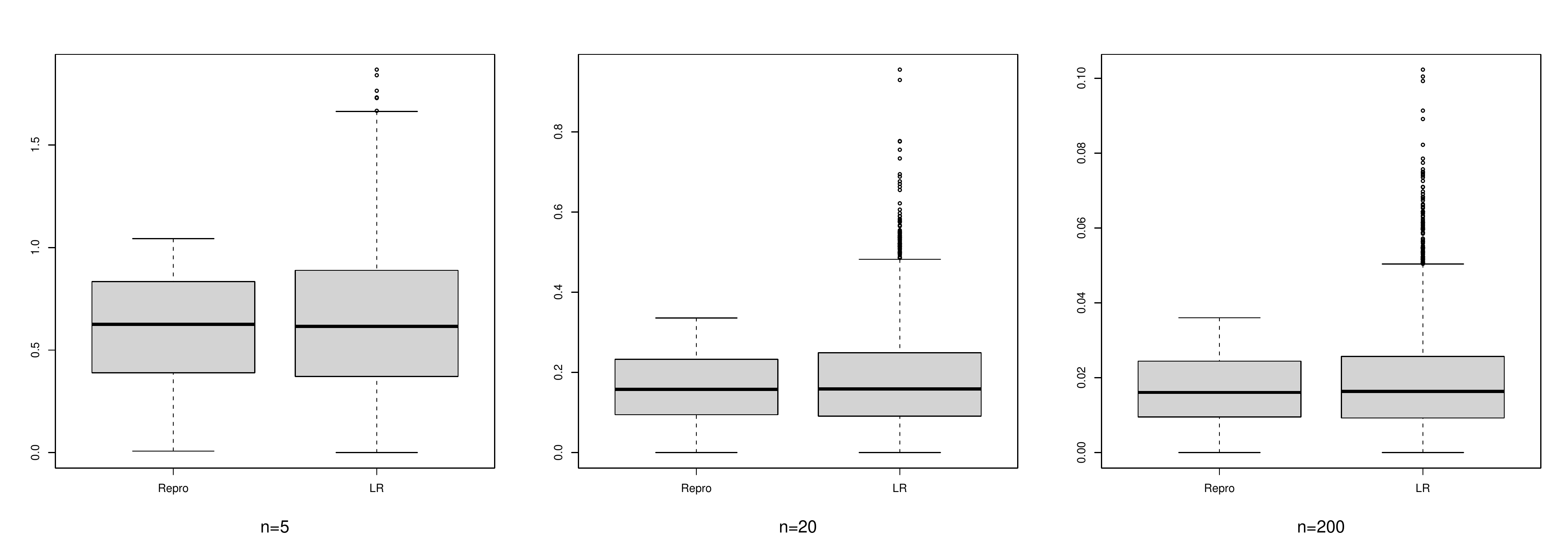}
    \caption{The side-by-side box plots comparing the widths of the two 95\% confidence intervals discussed in Example 4, for $n = 5, 20$ and $200$, respectively; 
    Number of repetitions $= 3000$.}
    \label{fig:lrt}
\end{figure}

\subsection{
Define a nuclear mapping and control nuisance parameters} 

Frequently researchers are 
interested in making 
inference only for a certain parameter, say $\eta$, and the remaining parameters $\bxi$ in $\btheta = (\eta, \bxi^\top)^\top $ are considered nuisance parameters.  
There is a large number of literature on this topic where
profiling and conditional inference remain to be two major techniques that handle nuisance parameters \citep[][and references  therein]{CoxReid1987, Davison2003}. 
These techniques typically 
lead to test statistics, say $\widetilde T(\*z, \eta)$, that do not involve the nuisance parameter $\bxi$ in expressions. 
Often times however the distributions of these test statistics $\widetilde T(\*z, \eta) = \widetilde T(G_z((\eta, \bxi^\top)^\top, \*u), \eta)$ still depend on the unknown nuisance parameter $\bxi$ and thus
not readily available. To obtain the distributions of $\widetilde T(\*z, \eta)$, researchers often invoke large samples 
and assume that there exists a consistent plug-in estimator of the nuisance parameter; e.g.,  \cite{Reid2015,Chuang2000}.
In this subsection, however, we use a similar profiling idea to provide an alternative approach. The advantage of our approach is that it can give us a confidence set with a guaranteed finite-sample coverage rate. 

Suppose $T_a(\*u, \btheta)$ is a mapping function from ${\cal U} \times \Theta \to \R^d$. 
For instance, we can often make 
$T_a(\*u, \btheta) = \widetilde T(\*z, \eta)$, a test statistic mentioned above using either a profile or a conditional approach. Or, more generally $T_a(\*u, \btheta)$ can be any nuclear mapping function $T(\*u, \btheta)$ used to make a joint inference about $\btheta = (\eta, \bxi^\top)^\top$. Suppose we have an artificial data point $\*z' = G_z(\*u, \btheta)$ and let $\Gamma_{\alpha'}(\*z')$ be a level-$\alpha'$ confidence set for $\btheta$,  where $\Gamma_{\alpha'}(\*z)$ 
has the expression of (\ref{eq:G1}) or (\ref{eq:G2}) but with 
$\*z_{obs}$ replaced by the artificial data $\*z' = G_z(\*u, \btheta)$ and the nuclear mapping function replaced by $T_a(\*u, \btheta)$. Now, beside $\btheta = (\eta, \bxi^\top)^\top$, let us consider another copy of parameter vector $\widetilde \btheta = (\eta, \widetilde \bxi^\top)^\top \in \Theta$ that has the same $\eta$ but perhaps a different $\widetilde \bxi \not= \bxi$. To explore the impact of a potentially wrong nuisance parameter $\widetilde \bxi \not= \bxi$, we define
\begin{equation}
    \label{eq:nu}
\nu(\*u; \eta, \bxi, \widetilde \bxi) = \inf_{\alpha'} \{\alpha': \widetilde\btheta \in \Gamma_{\alpha'}(\*z)\} = \inf_{\alpha'}  \{\alpha': \widetilde\btheta \in \Gamma_{\alpha'}(G_z(\*u, \btheta))\}. 
\end{equation}
From 
Corollary~\ref{col:test}, we can interpret $1-\nu(\*u; \eta, \bxi, \widetilde \bxi)$ as a `$p$-value' when we test the null hypothesis that the data $\*z$ is from $\widetilde \btheta = (\eta, \widetilde \bxi^\top)^\top$ but in fact $\*z = G_z(\*u, \btheta)$ with $\btheta = (\eta, \bxi^\top)^\top$. 
Our proposed {\it profile nuclear mapping function} 
for the target parameter $\eta$ is defined as
\begin{align}
\label{eq:T_p}
T^P(\*u, \btheta) = \min_{\widetilde\xi} \,\, \nu(\*u; \eta, \bxi, \widetilde \bxi). 
\end{align}
Although $T^P(\*u, \btheta)$ still depends on $\btheta = (\eta, \bxi^\top)^\top$, including the unknown nuisance parameter~$\bxi$, it
is dominated by a $U(0,1)$ random variable as stated in the lemma below. See Appendix I for a proof of the lemma.
\begin{lemma}
\label{lemma:nuisance}
Suppose $T^P(\*u, \btheta)$ is a profile nuclear mapping function from ${\cal U} \times \Theta \to [0,1]$ defined above. Then, we have $\P\left\{T^P(\*U, \btheta) \leq \alpha \right\} \ge \alpha.$ 
\end{lemma}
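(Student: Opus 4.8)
The plan is to show that $T^P(\*U, \btheta)$, evaluated at the true parameter $\btheta = (\eta, \bxi^\top)^\top$, is stochastically dominated by a uniform random variable. The key observation is that $T^P$ is a minimum over $\widetilde\bxi$ of the quantities $\nu(\*u; \eta, \bxi, \widetilde\bxi)$, and that among all candidate $\widetilde\bxi$ the particular choice $\widetilde\bxi = \bxi$ (i.e.\ $\widetilde\btheta = \btheta$, the true value) gives a quantity with a known, controllable distribution. Taking the minimum only makes $T^P$ smaller, which is exactly the direction we need for a stochastic-dominance (upper-tail) bound.

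\medskip

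\noindent\textbf{First}, I would fix the true $\btheta = (\eta, \bxi^\top)^\top$ and specialize the definition \eqref{eq:nu} to $\widetilde\bxi = \bxi$, so that $\widetilde\btheta = \btheta$. In this case $\nu(\*u; \eta, \bxi, \bxi) = \inf_{\alpha'}\{\alpha': \btheta \in \Gamma_{\alpha'}(G_z(\*u, \btheta))\}$ is the smallest confidence level at which the \emph{true} parameter is captured by its own confidence set built from the artificial data $\*z' = G_z(\*u, \btheta)$. Since taking the minimum over all $\widetilde\bxi$ can only decrease the value, I immediately get the pointwise bound
\begin{equation*}
T^P(\*u, \btheta) = \min_{\widetilde\bxi} \nu(\*u; \eta, \bxi, \widetilde\bxi) \le \nu(\*u; \eta, \bxi, \bxi).
\end{equation*}
Consequently $\{T^P(\*U, \btheta) \le \alpha\} \supseteq \{\nu(\*U; \eta, \bxi, \bxi) \le \alpha\}$, so it suffices to lower-bound the probability of the latter event; but I must be careful about the direction, since I actually want an \emph{upper} bound on $\P\{T^P \le \alpha\}$ — see the next paragraph.

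\medskip

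\noindent\textbf{Second}, I would relate the event $\{\nu(\*U; \eta, \bxi, \bxi) > \alpha\}$ to coverage of the repro samples confidence set. By the definition of $\nu$ as an infimum of levels and the nestedness of the sets $\Gamma_{\alpha'}$ in $\alpha'$ (larger level $\Rightarrow$ larger set), the statement $\nu(\*u; \eta, \bxi, \bxi) \le \alpha$ is equivalent to $\btheta \in \Gamma_{\alpha}(G_z(\*u, \btheta))$ — that is, the true $\btheta$ lies in its own level-$\alpha$ confidence set constructed from $\*z' = G_z(\*u, \btheta)$. Now invoke Theorem~\ref{thm:1} (or Theorem~\ref{thm:2}), applied with $\*Z' = G_z(\*U, \btheta)$ playing the role of the data generated under the true parameter $\btheta$: the coverage guarantee gives $\P\{\btheta \in \Gamma_{\alpha}(G_z(\*U, \btheta))\} \ge \alpha$. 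This is a statement about $\nu$, and I then need to transfer it to $T^P$ in the correct direction. The cleanest route is to work with $1 - \nu$ and $1 - T^P$ and use the p-value interpretation already established via Corollary~\ref{col:test}: each $1 - \nu(\*U; \eta, \bxi, \widetilde\bxi)$ behaves like a (super-uniform) p-value for testing $\widetilde\btheta$ against the true data-generating $\btheta$, and $1 - T^P = \max_{\widetilde\bxi}(1 - \nu)$ is a maximum of such p-values; taking the true $\widetilde\bxi = \bxi$ term alone controls the maximum from below, yielding super-uniformity of $1 - T^P$, equivalently $\P\{T^P(\*U,\btheta) \le \alpha\} \ge \alpha$.

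\medskip

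\noindent\textbf{The main obstacle} I anticipate is getting the inequality direction exactly right when passing from $\nu(\*U;\eta,\bxi,\bxi)$ through the minimum/maximum to $T^P$. Naively, $T^P \le \nu(\cdot;\eta,\bxi,\bxi)$ pushes $\P\{T^P \le \alpha\}$ \emph{up}, which is the desired direction for $\ge \alpha$, but one must confirm that the minimum does not accidentally break the coverage-based lower bound — and this is precisely why the result is phrased as $\P\{T^P \le \alpha\} \ge \alpha$ rather than $\le \alpha$. The subtle point is that the bound we \emph{want} (super-uniformity, $\P\{T^P\le\alpha\}\ge\alpha$) is the one that follows from $T^P \le \nu(\cdot;\eta,\bxi,\bxi)$ combined with the coverage property $\P\{\nu(\cdot;\eta,\bxi,\bxi)\le\alpha\}\ge\alpha$; I would make sure to verify that the reduction event $\{\btheta \in \Gamma_\alpha(G_z(\*U,\btheta))\}$ genuinely matches $\{\nu \le \alpha\}$ under the nestedness/monotonicity of $\{\Gamma_{\alpha'}\}$, since any off-by-strictness in the infimum definition (open versus closed sublevel sets) could shift the inequality at the boundary. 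Once that equivalence is pinned down, the remainder is a direct application of Theorem~\ref{thm:1}.
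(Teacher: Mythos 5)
Your proposal is correct and is essentially the paper's own (one-line) argument: bound $T^P(\*u,\btheta)\le\nu(\*u;\eta,\bxi,\bxi)$ by taking $\widetilde\bxi=\bxi$ in the minimum, note that $\{\btheta\in\Gamma_\alpha(G_z(\*U,\btheta))\}\subseteq\{T^P(\*U,\btheta)\le\alpha\}$, and apply the coverage guarantee of Theorem~\ref{thm:1}. The only blemish is the momentary claim in your first paragraph that you want an \emph{upper} bound on $\P\{T^P\le\alpha\}$ --- you correctly resolve this later, and the final inequality direction matches the paper's.
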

Based on the above lemma and following the construction outlined in Section 1, we define 
\begin{align}
\label{eq:CIeta}
& \Xi_\alpha(\*z_{obs}) = \bigg\{\eta: \exists \, \*u^* \in {\cal U} \, \text{and} \, \bxi, \nonumber \\
& \qquad
\text{ s.t. }  \begin{pmatrix}\eta  \\ \bxi
\end{pmatrix}\in \Theta,  \*z_{obs} = 
G_z\left(\begin{pmatrix}\eta \\ \bxi
\end{pmatrix}, \*u^*\right),
\,
T^P\left(\*u^*, \begin{pmatrix}\eta \\ \bxi
\end{pmatrix}\right) \leq \alpha   \bigg\}.     
\end{align}

\noindent
The next theorem states that $\Xi_\alpha(\*z_{obs})$
is a level-$\alpha$ confidence set of the target parameter~$\eta$. A proof is provided in Appendix. 
\begin{theorem}\label{the:nuisance}
Suppose that $\btheta_0 = (\eta_0, \bxi_0^\top)^\top$ is the true parameter with $\*Z = G_z(\*U, \btheta_0)$ and $\Xi_\alpha(\cdot)$ is defined in \eqref{eq:CIeta},  we have $\P\left\{\eta_0 \in \Xi_\alpha(\*Z)\right\} \ge \alpha$. 
\end{theorem}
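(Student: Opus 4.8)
The plan is to mimic the proof of Theorem~\ref{thm:1}: I will show that the probabilistic event controlled by Lemma~\ref{lemma:nuisance}, instantiated at the true parameter, is contained in the coverage event $\{\eta_0 \in \Xi_\alpha(\*Z)\}$, and then read off the bound directly. Fix the true parameter $\btheta_0 = (\eta_0, \bxi_0^\top)^\top$ and the unobserved realization $\*U$ for which $\*Z = G_z(\*U, \btheta_0)$. The central observation is that the ``honest'' choices $\*u^* = \*U$ and $\bxi = \bxi_0$ automatically satisfy every defining constraint of $\Xi_\alpha$ in \eqref{eq:CIeta} whenever $T^P(\*U, \btheta_0) \leq \alpha$. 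Indeed, $(\eta_0, \bxi_0^\top)^\top = \btheta_0 \in \Theta$ by assumption; the matching equation $\*Z = G_z((\eta_0, \bxi_0^\top)^\top, \*U)$ holds by the very definition of $\*Z$; and the remaining inequality $T^P(\*U, (\eta_0, \bxi_0^\top)^\top) \leq \alpha$ is exactly the event being conditioned on. Hence, taking $\eta = \eta_0$, the witnessing pair $(\*u^*, \bxi) = (\*U, \bxi_0)$ exists, so $\eta_0 \in \Xi_\alpha(\*Z)$. This establishes the set inclusion
\[
\{T^P(\*U, \btheta_0) \leq \alpha\} \subseteq \{\eta_0 \in \Xi_\alpha(\*Z)\}.
\]

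Next I would take probabilities on both sides and invoke Lemma~\ref{lemma:nuisance} with the particular value $\btheta = \btheta_0$, which gives $\P\{T^P(\*U, \btheta_0) \leq \alpha\} \geq \alpha$. Combined with the inclusion above, this yields $\P\{\eta_0 \in \Xi_\alpha(\*Z)\} \geq \P\{T^P(\*U, \btheta_0) \leq \alpha\} \geq \alpha$, which is the claim. Note that I would use the profile mapping $T^P$ only as a black box, through the conclusion of Lemma~\ref{lemma:nuisance}; the inner minimization over $\widetilde\bxi$ in \eqref{eq:T_p} never needs to be unpacked, since all the work of taming the nuisance parameter has already been absorbed into that lemma's stochastic-domination bound.

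The step I expect to require the most care is the verification of the matching constraint for the honest choice, namely that using the \emph{actual} realization $\*U$ (rather than some other $\*u^*$) is legitimate inside the existential quantifier of \eqref{eq:CIeta}. This is the same subtlety as in Theorem~\ref{thm:1}: the construction only asks for the existence of \emph{some} $\*u^* \in {\cal U}$, and the true $\*U$ is one admissible witness, so the inclusion is immediate once one is careful to set $\bxi = \bxi_0$ so that $(\eta_0, \bxi_0^\top)^\top$ is genuinely the data-generating parameter. There is no real analytic obstacle here; the entire difficulty of handling the nuisance parameter in finite samples is delegated to Lemma~\ref{lemma:nuisance}, and the present argument is only the routine translation of that bound into a coverage statement for the profile set $\Xi_\alpha$.
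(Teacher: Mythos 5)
Your proposal is correct and is essentially the paper's own argument: the paper's proof is the one-line chain $\P\{\eta_0 \in \Xi_\alpha(\*Z)\} \geq \P\{T^P(\*U,\btheta_0) \leq \alpha\} \geq \alpha$, where the first inequality is exactly your set inclusion via the honest witness $(\*u^*,\bxi)=(\*U,\bxi_0)$ and the second is Lemma~\ref{lemma:nuisance} applied at $\btheta_0$. You have merely made explicit the verification of the existential clause in \eqref{eq:CIeta} that the paper leaves implicit.
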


In cases when the distribution of $T_a(\*U, \btheta)$ is known, the function $\nu(\cdot; \eta, \bxi, \widetilde \bxi)$ in (\ref{eq:nu}) and the nuclear mapping function $T^P(\*u, \btheta)$ have explicit formulas.
When the distribution of $T_a(\*U, \btheta)$ is not available, we can use a Monte-Carlo method to get $\nu(\cdot; \eta, \bxi, \widetilde \bxi)$ for each given $(\eta, \bxi, \widetilde \bxi)$ as described in the lemma below. See Appendix I for a proof.

\begin{lemma}
\label{lem:nuisance_depth}
Let $D_{{\cal S}_{\theta}}(\*t)$ be the empirical depth function that is computed based on the Monte-Carlo points in ${\cal S}_{\theta}$, where 
${{\cal S}_\theta} = \{T_a(\*u^s, \btheta), \*u^s \in {\cal V}\}$ is the Monte-Carlo set defined in Algorithm 1 for a given $\btheta = (\eta, \bxi^\top)^\top$. Let $ F_{\mathcal V|D}(s) = \frac{1}{|{\cal V}|} \sum_{\*u^s \in {\cal V}}I\big\{D_{{\cal S}_{\theta}}\big(T_a(\*u^s, \btheta) \big) \leq s \big\}$ be the empirical CDF of $D_{{\cal S}_{\theta}}\big(T_a(\*u^s, \btheta)\big),$ then
$$\nu(\*u; \eta, \bxi, \widetilde \bxi) = \inf_{\{\*u^*: G_z(\*u^*, \tilde\theta)= G_z(\*u, \tilde\theta)\}} \left\{1- F_{\mathcal V|D} \left(D_{{\cal S}_{\tilde\theta}}\big(T_a(\*u^*, \tilde\btheta) \big)\right)\right\},$$
where $\tilde\btheta = (\eta, \widetilde \bxi^\top)^\top. $
\end{lemma}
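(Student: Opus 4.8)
The plan is to unravel the definition of $\nu$ in \eqref{eq:nu} one layer at a time, substitute the explicit depth-based Borel set of Remark 2.4, and then read off the smallest admissible level $\alpha'$ by interchanging the infimum over $\alpha'$ with the existential quantifier over the repro error $\*u^*$. First I would write $\tilde\btheta=(\eta,\widetilde\bxi^\top)^\top$ and $\*z=G_z(\*u,\btheta)$, and recall from the definition \eqref{eq:G1} (equivalently \eqref{eq:G2}) that
\[
\tilde\btheta\in\Gamma_{\alpha'}(\*z)\iff \exists\,\*u^*\in{\cal U}\ \text{s.t.}\ G_z(\*u^*,\tilde\btheta)=\*z\ \text{and}\ T_a(\*u^*,\tilde\btheta)\in B_{\alpha'}(\tilde\btheta).
\]
Thus $\nu(\*u;\eta,\bxi,\widetilde\bxi)=\inf\{\alpha':\tilde\btheta\in\Gamma_{\alpha'}(\*z)\}$ is the smallest level at which some repro error reproducing $\*z$ under $\tilde\btheta$ also lands in the $\tilde\btheta$-Borel set.

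I would then insert the Monte-Carlo depth construction of Remark 2.4 at the parameter value $\tilde\btheta$, namely $B_{\alpha'}(\tilde\btheta)=\{\*t:F_{\mathcal V|D}(D_{{\cal S}_{\tilde\theta}}(\*t))\ge 1-\alpha'\}$ with ${\cal S}_{\tilde\theta}=\{T_a(\*u^s,\tilde\btheta):\*u^s\in{\cal V}\}$. For a fixed $\*u^*$ satisfying the matching equation $G_z(\*u^*,\tilde\btheta)=\*z$, this turns the membership $T_a(\*u^*,\tilde\btheta)\in B_{\alpha'}(\tilde\btheta)$ into the scalar inequality $\alpha'\ge 1-F_{\mathcal V|D}\bigl(D_{{\cal S}_{\tilde\theta}}(T_a(\*u^*,\tilde\btheta))\bigr)=:c(\*u^*)$, so $c(\*u^*)$ is exactly the level at which this particular $\*u^*$ first enters the central region. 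Because membership holds for \emph{some} matching $\*u^*$, the event $\tilde\btheta\in\Gamma_{\alpha'}(\*z)$ is equivalent to $\alpha'\ge\inf_{\*u^*}c(\*u^*)$, the infimum running over the repro errors that reproduce $\*z$ under $\tilde\btheta$; minimizing over $\alpha'$ then gives $\nu=\inf_{\*u^*}c(\*u^*)$, which is the claimed identity.

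The step I expect to require the most care is the interchange of $\inf_{\alpha'}$ with the existential quantifier over $\*u^*$. This is legitimate because the empirical central regions $B_{\alpha'}(\tilde\btheta)$ are nested and increasing in $\alpha'$ (the threshold $1-\alpha'$ decreases), so for each fixed matching $\*u^*$ the set of admissible levels is a half-line $[c(\*u^*),1]$, closed on the left by the ``$\ge$'' in the definition of $B_{\alpha'}$; hence $\{\alpha':\tilde\btheta\in\Gamma_{\alpha'}(\*z)\}=\bigcup_{\*u^*}[c(\*u^*),1]$, whose infimum is $\inf_{\*u^*}c(\*u^*)$ whether or not that infimum is attained. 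The discreteness of the empirical $F_{\mathcal V|D}$ keeps every quantity well defined, and restricting the outer infimum to matching $\*u^*$ is precisely what couples the data into the construction as in \eqref{eq:G1}.
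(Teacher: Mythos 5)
Your proposal is correct and follows essentially the same route as the paper's proof: both substitute the depth-based central region $B_{\alpha'}(\tilde\btheta)=\{\*t:F_{\mathcal V|D}(D_{{\cal S}_{\tilde\theta}}(\*t))\ge 1-\alpha'\}$ into the definition of $\Gamma_{\alpha'}$, reduce membership for a fixed matching $\*u^*$ to the scalar inequality $\alpha'\ge 1-F_{\mathcal V|D}(D_{{\cal S}_{\tilde\theta}}(T_a(\*u^*,\tilde\btheta)))$, and then swap the infimum over $\alpha'$ with the existential quantifier over matching $\*u^*$. Your explicit justification of that interchange via nestedness of the central regions is a detail the paper leaves implicit, but the argument is the same.
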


Using Lemma~\ref{lem:nuisance_depth}, we can evaluate $T^P(\*u, \btheta)$ via \eqref{eq:T_p}, and then obtain the confidence set in \eqref{eq:CIeta}. Note that, for any given $(\*u, \btheta)$, we can express $\nu(\*u; \eta, \bxi, \widetilde \bxi)$ as a function of $\widetilde \bxi$ only, so we can use a build-in optimization function in R (or another computing package) to obtain $T^P(\*u, \btheta)$ in \eqref{eq:T_p}.  
The following continuation of Example 2 illustrates constructing confidence interval \eqref{eq:CIeta} for a parameter of interests in presence of nuisance parameters. 
The case study example in Section 4 provides another illustration of a more difficult inference problem in a normal mixture model where the target parameter is the number of unknown components. 

\begin{example2b*}
\label{ex:crq_b}
We continue Example 2(b) to  
obtain a confidence interval for a single regression coefficient $\theta_k(\zeta)$ in the censored quantile regression model (with the remaining coefficients as nuisance parameters), $k = 0, \ldots, p$. We consider $T_a(\*u, \btheta)$ be the $(k+1)$-th element of the vector  $T(\*u, \btheta) = \big\{\sum\limits_{i=1}^n \*x_i \*x_i^T I({\*x}_i^T \btheta > 0)\big\}^{-1}\sum\limits_{i=1}^n {\*x}_i I{({\*x}_i^T \btheta > 0) I(u_i \leq \tau)}$ that is previously used in Example~\ref{ex:crq}(b) to make joint inference on the entire $\btheta$. 
Following Theorem 3 and Lemma 3, a 
confidence set 
for the single~parameter~$\theta_k(\zeta)$~is
\begin{align*}
    \Xi_{\alpha}(\*y_{obs}) = \left\{\theta_k:  \btheta = (\theta_k, \btheta_{(-k)}), g(\*y_{obs}, \btheta, \*u^*) = 0, 
      F_{\mathcal V|D}\big(T_a(\*u, \btheta)\big) \geq 1-\alpha,
 \exists 
\*u^* \in {\cal U} \right\}, 
\end{align*}
where $F_{\mathcal V|D}(s)$ is defined in Lemma 3. 
Table 3 reports the performance of the above confidence interval for $\theta_2(\zeta)$ from simulation studies on the same setting as in Example~\ref{ex:crq}(B). We observe that for $\zeta = 0.5$, repro samples achieves the desirable coverage while the bootstrap method by \cite{bilias_simple_2000} consistently under covers. The widths of the confidence intervals from the two approaches are similar. For $\zeta$ =0.9, both approaches well cover the true parameter, but the bootstrap confidence intervals appear to be much wider.

\begin{table}[]
\centering
\resizebox{\textwidth}{!}{\begin{tabular}{lc|cc|cc}
\hline
                 \multirow{2}{*}{Quantile} & \multirow{2}{*}{Error Distribution} & \multicolumn{2}{c|}{repro samples}  & \multicolumn{2}{c}{Bootstrap}  \\ 
                 & & Coverage & Width & Coverage & Width \\ \hline
                 \multirow{3}{*}{$\zeta = 0.50 $} & Normal  & 0.944(0.007)   & 0.747(0.006)       & 0.920(0.009)           & 0.727(0.007)    \\
                  & Mixture &  0.950(0.007)           & 0.832(0.007)        & 0.935(0.008)          & 0.839(0.008)            \\
                  & Heteroscedastic & 0.940(0.008)        & 0.826(0.007)        & 0.921(0.009)          & 0.844(0.009)   \\ \hline
\multirow{3}{*}{$\zeta = 0.90$} & Normal &0.979(0.005)           & 0.813(0.007)        & 0.983(0.004)          & 4.190 (0.424)               \\
                  & Mixture & 0.973(0.005)           & 1.407(0.013)        & 0.973(0.005)          & 3.260(0.312)     \\
                  & Heteroscedastic & 0.945(0.007)       & 1.149(0.009)       & 0.967(0.006)          & 4.124(0.439)      \\ 
                  \hline
               \end{tabular}}
               \caption{ 
The model and error distributions are identical to those in Table~\ref{tab:crq_joint}. Here we compare the coverage rates and widths of confidence intervals for the regression coefficient of $x_{i2}$ by the repro samples approaches and the bootstrap approach \citep{bilias_simple_2000}. The coverage rates and widths are computed using 1000 repetitions.}
\end{table}

\end{example2b*}

\subsection{Finding candidate set for target parameter}
\label{sec:candidate_general}

Sometimes when an explicit expression of the confidence set $\Gamma_\alpha(\*z_{obs})$ is not available,  
one can use Algorithm 1 to search through the parameter space $\Theta$ and construct $\Gamma_\alpha(\*z_{obs})$. Finding a suitable candidate set to limit the search can greatly improve the computing efficiency of the algorithm. Here, a candidate set, say $\widehat \Theta = \widehat \Theta(\*z_{obs})$ refers to a subset of the parameter space $\Theta$
that is pre-screened using the observed data $\*z_{obs}$. 
In this section, we first provide a general result of the impact of using a data-dependent candidate set on the coverage. We then discuss the special case of a discrete target parameter in which we can take advantage of a many-to-one mapping that is unique in a repro samples setup
to create an effective pre-screen candidate set to significantly reduce our computing cost. 

The candidate set $\widehat \Theta = \widehat \Theta(\*z_{obs})$ may be obtained using different
pre-screening methods.
The key is that we need to overcome the issue of a ``double usage" of the data but still ensure an at least approximate coverage of the final output confidence set 
\begin{align}
\label{eq:G1_candidate}
& \Gamma_{\alpha}'(\*z_{obs})   = \widehat \Theta(\*z_{obs}) \cap \Gamma_{\alpha}(\*z_{obs}) \nonumber \\
& \qquad = \big\{\btheta:  \exists \, \*u^* \in {\cal U}, \btheta \in \widehat\Theta,  \, \hbox{s.t.} \, \*z_{obs} = 
G_z({\btheta}, \*u^*),
\, 
T(\*u^*, \btheta)  \in B_{\alpha}(\btheta)  \big\},
\end{align}
we impose a condition that 
\begin{equation} \label{eq:cand-cond}
    \P\big\{ \btheta_0 \in \widehat \Theta(\*Z)\big\} = 1 - o(\delta'),
\end{equation}
where $\delta' > 0$ is a small number as described in Theorem~\ref{thm:1}. Here, the ``double usage" refers to the fact that we use the observed data in constructing both the candidate set $\widehat \Theta(\*z_{obs})$ and confidence set $\Gamma_{\alpha}(\*z_{obs})$. 
The corollary below contains a result  showing that the output set $\Gamma_{\alpha}'(\*z_{obs})$
is still an (approximate) level-$\alpha$ confidence set. 
See Appendix I for a proof of the corollary.

\begin{corollary}
\label{cor:cand_cs_coverage}
Let $\*Z= G_z(\*U, \btheta_0)$ and $\widehat\Theta = \widehat\Theta(\*Z)$ is a candidate set that satisfies (\ref{eq:cand-cond}).  
Then, we have 
$\P\big\{\btheta_0 \in \Gamma_{\alpha}'(\*Z)\big\} \geq \alpha - o(\delta').$
\end{corollary}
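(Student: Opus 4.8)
The plan is to bound the coverage of the pruned set $\Gamma_\alpha'(\*Z) = \widehat\Theta(\*Z) \cap \Gamma_\alpha(\*Z)$ using only the two facts already in hand: the unconditional validity of $\Gamma_\alpha$ supplied by Theorem~\ref{thm:1}, and the near-certain containment of $\btheta_0$ in the candidate set guaranteed by condition (\ref{eq:cand-cond}). The key observation is that, although $\widehat\Theta$ is data-dependent and therefore raises the ``double usage'' concern, the coverage event for $\Gamma_\alpha'$ factors as an intersection of two events whose probabilities can each be controlled separately, so a Bonferroni-type splitting suffices.

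First I would rewrite the target event using the definition (\ref{eq:G1_candidate}) of $\Gamma_\alpha'$ as the intersection $\widehat\Theta \cap \Gamma_\alpha$, namely
\begin{equation*}
\{\btheta_0 \in \Gamma_\alpha'(\*Z)\} = \{\btheta_0 \in \widehat\Theta(\*Z)\} \cap \{\btheta_0 \in \Gamma_\alpha(\*Z)\}.
\end{equation*}
Writing $A = \{\btheta_0 \in \widehat\Theta(\*Z)\}$ and $B = \{\btheta_0 \in \Gamma_\alpha(\*Z)\}$, I would then invoke the elementary inequality $\P(A \cap B) \ge \P(B) - \P(A^c)$, which follows from $\P(A \cap B) = \P(B) - \P(B \cap A^c)$ together with $\P(B \cap A^c) \le \P(A^c)$.

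The two remaining pieces are supplied directly by the hypotheses. Theorem~\ref{thm:1} gives $\P(B) = \P\{\btheta_0 \in \Gamma_\alpha(\*Z)\} \ge \alpha$, while condition (\ref{eq:cand-cond}) gives $\P(A^c) = 1 - \P\{\btheta_0 \in \widehat\Theta(\*Z)\} = o(\delta')$. Combining these yields
\begin{equation*}
\P\{\btheta_0 \in \Gamma_\alpha'(\*Z)\} \ge \P(B) - \P(A^c) \ge \alpha - o(\delta'),
\end{equation*}
which is exactly the claimed bound.

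Since every step above is elementary, I do not expect a genuine technical obstacle. The only conceptual point worth flagging is that the argument deliberately sidesteps the ``double usage'' of the data: one might fear that employing $\*z_{obs}$ both to form $\widehat\Theta$ and to build $\Gamma_\alpha$ invalidates the coverage guarantee, but the decomposition shows that the entire cost of this reuse is absorbed into $\P(A^c)$, the small probability that the data-driven candidate set misses $\btheta_0$. Crucially, no independence between $\widehat\Theta$ and $\Gamma_\alpha$ is required, which is why the Bonferroni bound is the appropriate tool here rather than any conditioning or nesting argument.
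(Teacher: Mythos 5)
Your proof is correct and follows essentially the same route as the paper: the paper likewise splits the coverage event according to whether $\btheta_0 \in \widehat\Theta(\*Z)$ and absorbs the ``double usage'' cost into the probability $\P\{\btheta_0 \notin \widehat\Theta(\*Z)\} = o(\delta')$, which is exactly your Bonferroni bound $\P(A\cap B)\ge \P(B)-\P(A^c)$ written out via the event $\{T(\*U,\btheta_0)\in B_\alpha(\btheta_0)\}$. No substantive difference.
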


When the parameter space $\Theta$ is continuous and the mapping function $T(\*u, \btheta)$ is continuous in $\btheta$, we may use a grid search method to cover the entire $\Theta$, as discussed in Remark~2.6. A data-dependent candidate set $\widehat\Theta$ 
can help further narrow down to the grid search;  cf.,  Example~\ref{ex:crq_b} and also  \cite{michael_exact_2019-1}.

An intriguing yet unique situation in the repro samples development is when the parameter space $\Theta$ is discrete. 
We can take advantage of a ``many-to-one" mapping that often exists in the setup to get a data-dependent candidate set $\widehat \Theta$ that is much smaller than $\Theta$. 
We describe the above key idea and a 
generic approach below. Section~4 includes a case study with with greater details to provide a specific illustration on how to obtain this kind of  candidate set $\widehat \Theta.$

For convenience of discussion, we rewrite (\ref{eq:Z}) as $\*Z= G_z(\kappa_0, \bxi_0, \*U)$, where $\kappa_0$ is the true value of a discrete parameter with a parameter space ${\it \Upsilon}$ and, if applicable, $\bxi_0$ are the values of other (nuisance) parameters. Write $\btheta = (\kappa, \bxi^\top)^\top$. 
Our goal is to find a candidate set $\widehat {\it \Upsilon} = \widehat {\it \Upsilon}(\*z_{obs})$ such that $\widehat {\it \Upsilon}(\*z_{obs})$ is significantly smaller than ${\it \Upsilon}$ and $\widehat {\it \Upsilon}$ satisfies (\ref{eq:cand-cond}).

Let us first examine a ``many-to-one'' mapping often inherited in a repro sample setup involving a discrete parameter $\kappa$. In particular, based on the equation $\*z_{obs} = G_z(\kappa_0, \bxi_0,  \*u^{rel}),$ we can rewrite $\kappa_0$ as a solution of an optimization problem
$\kappa_0 = \arg \min\limits_{\kappa} \min\limits_{\xi}\|\*z_{obs} - G_z(\kappa, \bxi, \*u^{rel})\|$ or, more generally, 
\begin{equation}
    \kappa_0 = \arg \min\limits_{\kappa} \min\limits_{\xi}L\big(\*z_{obs}, G_z(\kappa, \bxi, \*u^{rel})\big),
    \label{eq:tau0}
\end{equation}
where $L\big(\*z, \*z'\big)$ is a continuous loss function that measures the difference between two copies of data $\*z$ and $\*z'$. However, we do not know (observe)  $\*u^{rel}$. By replacing $\*u^{rel}$ with a repro copy $\*u^*$ in (\ref{eq:tau0}), we get
\begin{equation}
   \kappa^* = \arg \min\limits_{\kappa} \min\limits_{\xi}L\big(\*z_{obs}, G_z(\kappa, \bxi, \*u^*)\big),
    \label{eq:tau-star} 
\end{equation}
which maps a value $\*u^* \in {\cal U}$ to a $\kappa^* \in {\it \Upsilon}$. Typically, ${\cal U}$ is uncountable and ${\it \Upsilon}$ is countable, so the mapping in (\ref{eq:tau-star}) is ``many-to-one''; That is, many $\*u^* \in {\cal U}$ correspond to one $\kappa^* \in {\it \Upsilon}$. We are particularly interested in the 
subset $S =\big\{\*u^*: \kappa_0 = \arg \min\limits_{\kappa} \min\limits_{\xi}L\big(\*z_{obs}, G_z(\kappa, \bxi, \*u^*)\big)\big\}$  $\subset {\cal U}$, where for any $\*u^* \in S$ the mapping in (\ref{eq:tau-star}) produces $\kappa^* = \kappa_0$. Since 
$\*u^{rel} \in S$, $S \not = \emptyset$. We assume $S \subset {\cal U}$ is a nontrivial set containing many elements that can not be ignored.  

Now suppose $\P\big\{\*U \in S \big | \*z_{obs}\big\} > 0$ has a nontrivial probability. 
Then, we can use a Monte-Carlo method to obtain a candidate set $\widehat {\Upsilon} = \widehat {\Upsilon}(\*z_{obs})$
such that $\P\big\{ \kappa_0 \not \in \widehat {\it \Upsilon}(\*Z)\big\} = o(\delta')$. 
The idea is to simulate a sequence of $\*u^c \sim \*U$, say, 
${\cal V}_c = \big\{\*u_1^c, \ldots, \*u_N^c \big\}$. When $N = |{\cal V}_c|$ is large enough, ${\cal V}_c\cap S \not = \emptyset$ with a high (Monte-Carlo) probability. 
That is $\kappa_0 \in \widehat{\it \Upsilon}_{{\cal V}_c}(\*z_{obs})$ with a high (Monte-Carlo) probability, where $\widehat{\it \Upsilon}_{{\cal V}_c}(\*z_{obs})$ is our candidate set  
\begin{equation}
\label{eq:cand-hat}
    \widehat{\it \Upsilon}_{{\cal V}_c}(\*z_{obs}) = \left\{\kappa^* = \arg \min\limits_{\kappa} \min\limits_{\xi}L\big(\*z_{obs}, G_z(\kappa, \bxi, \*u^c)\big) \,\, \big| \,\, \*u^c \in {\cal V}_c \right\}.
\end{equation}
We show next that 
$\P\big\{ \kappa_0  \in \widehat{\it \Upsilon}_{{\cal V}_c}(\*Z)\big\}) = 1 - o(\delta')$, where $\delta' = \delta'({\footnotesize|{\cal V}_c|}) \to 0$ as $|{\cal V}_c| \to \infty$. 

Let 
$\*U^*$ be an independent copy of $\*U$.  Formally, we assume 
the following condition: 
\begin{itemize}
    \item[(N1)] 
For 
any $\*u^{rel} \in \mathcal U$, there exists a 
neighborhood of $\*u^{rel}$, say $S_{\cal N}(\*u^{rel})$, 
such that
\begin{equation}
    \label{eq:N1}
    \resizebox{.84\hsize}{!}{$S_{\cal N}(\*u^{rel})\subseteq S =\big\{\*u^*: \kappa_0 = \arg \min\limits_{\kappa} \min\limits_{\theta}L\big(G_z(\btheta_0, \kappa_0, \*u^{rel}), G_z(\btheta, \kappa, \*u^*)\big)\big\}$}
\end{equation}
and    $
   \P_{(\*U^*,\*U)}\left\{\*U^* \in S_{\cal N}(\*U) \right\}
   \geq P_{\cal N} > 0$,
   for a positive number $P_{\cal N}$.
\end{itemize}

Although the neighborhood $S_{\cal N}$ and the number $P_{\cal N}$ are problem specific,  Condition (N1) is typically satisfied (or nearly satisfied)
when $L(\cdot, \cdot)$ is a continuous loss function and $|{\cal U}|$ is uncountable. 
The following lemma suggests that, 
under Condition (N1) and as $|{\cal V}_c| \to 0$, 
$\widehat{\it \Upsilon}_{{\cal V}_c}(\*z_{obs})$ 
in (\ref{eq:cand-hat}) satisfy condition (\ref{eq:cand-cond}). Here, the probability $\P_{\*U, {\cal V}_c}(\cdot)$ refers to the joint distribution of $\*U$ and $|{\cal V}_c|$ Monte-Carlo $\*U$'s in ${\cal V}_c$. See Appendix I for a proof of the lemma.

\begin{lemma}
\label{lemma:p_bound_C_D_general} 
Suppose Condition (N1) holds and $\widehat{\it \Upsilon}(\*z_{obs})$ is defined in \eqref{eq:cand-hat}. 
Then, 
\begin{align*}
    \P_{\*U, {\cal V}_c}\left\{\kappa_0 \not\in \widehat{\it \Upsilon}_{{\cal V}_c}(\*Z)\right\} \leq (1-P_{\cal N})^{|{\cal V}_c|}.
\end{align*}
\end{lemma}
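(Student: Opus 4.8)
The plan is to reduce the event $\{\kappa_0 \not\in \widehat{\it \Upsilon}_{{\cal V}_c}(\*Z)\}$ to a statement about how many of the Monte-Carlo draws in ${\cal V}_c$ land in the ``success'' set $S$ appearing in Condition (N1), and then to exploit the independence of those draws together with the uniform lower bound $P_{\cal N}$. Throughout I write $N = |{\cal V}_c|$ and $\*Z = G_z(\btheta_0, \kappa_0, \*U)$, so that the set $S = S(\*U)$ and its neighborhood $S_{\cal N}(\*U)$ are random only through $\*U$, while the candidate draws $\*u_1^c, \ldots, \*u_N^c$ are independent copies of $\*U$ that are also independent of $\*U$ itself.

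First I would establish the combinatorial characterization of failure. By the definition of $\widehat{\it \Upsilon}_{{\cal V}_c}(\*z_{obs})$ in \eqref{eq:cand-hat}, one has $\kappa_0 \in \widehat{\it \Upsilon}_{{\cal V}_c}(\*z_{obs})$ if and only if at least one draw $\*u^c \in {\cal V}_c$ satisfies $\kappa_0 = \arg\min_\kappa \min_\xi L(\*z_{obs}, G_z(\kappa, \bxi, \*u^c))$, i.e. if and only if $\*u^c \in S$ for some $\*u^c \in {\cal V}_c$. Hence the failure event is exactly $\{\*u_j^c \not\in S(\*U) \text{ for all } j = 1, \ldots, N\}$. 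Since Condition (N1) supplies a neighborhood with $S_{\cal N}(\*U) \subseteq S(\*U)$, the containment $\{\*u_j^c \not\in S(\*U)\} \subseteq \{\*u_j^c \not\in S_{\cal N}(\*U)\}$ holds for every $j$, and therefore
\[
\P_{\*U,{\cal V}_c}\left\{\kappa_0 \not\in \widehat{\it \Upsilon}_{{\cal V}_c}(\*Z)\right\} \le \P_{\*U,{\cal V}_c}\left\{\*U_j^c \not\in S_{\cal N}(\*U), \ j = 1, \ldots, N\right\}.
\]

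Next I would condition on $\*U = \*u^{rel}$ and use independence. Given $\*U = \*u^{rel}$, the draws $\*U_1^c, \ldots, \*U_N^c$ are independent copies of $\*U$ and independent of $\*U$, so the $N$ events factor, yielding the conditional probability $\bigl(1 - \P_{\*U^*}\{\*U^* \in S_{\cal N}(\*u^{rel})\}\bigr)^N$. Reading Condition (N1) as the uniform (conditional) bound $\P_{\*U^*}\{\*U^* \in S_{\cal N}(\*u^{rel})\} \ge P_{\cal N}$ for every $\*u^{rel}$, I obtain $\bigl(1 - \P_{\*U^*}\{\*U^* \in S_{\cal N}(\*u^{rel})\}\bigr)^N \le (1 - P_{\cal N})^N$ pointwise in $\*u^{rel}$. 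Taking the expectation over $\*U$ preserves the bound and removes the dependence on $\*u^{rel}$, producing $(1 - P_{\cal N})^N = (1 - P_{\cal N})^{|{\cal V}_c|}$, which is the claim.

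The step that deserves the most care is the last one: the argument goes through only if $P_{\cal N}$ is a lower bound holding conditionally on each realization $\*U = \*u^{rel}$ (equivalently, uniformly in $\*u^{rel}$), not merely on average. If one reads (N1) as only the marginal inequality $\P_{(\*U^*,\*U)}\{\*U^* \in S_{\cal N}(\*U)\} \ge P_{\cal N}$, then convexity of $x \mapsto (1-x)^N$ makes Jensen's inequality point the wrong way and the target bound $(1-P_{\cal N})^N$ can genuinely fail (a two-point law for the single-draw success probability already breaks it). I would therefore make explicit that a single draw succeeds with probability at least $P_{\cal N}$ for \emph{every} fixed $\*U$ --- the natural meaning of the neighborhood condition --- which is precisely what lets the per-realization factorization collapse to one geometric-type bound.
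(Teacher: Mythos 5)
Your proposal follows the same overall decomposition as the paper's proof: both reduce the failure event to $\bigcap_{j}\{\*u^c_j \notin S_{\cal N}(\*U)\}$ using $S_{\cal N}(\*U)\subseteq S$, condition on $\*U$, and factorize over the independent Monte-Carlo draws to get the conditional bound $\{1-\P_{\*U^*|\*U}(\*U^*\in S_{\cal N}(\*U)\mid \*U)\}^{|{\cal V}_c|}$. Where you diverge is the final passage from this conditional quantity to $(1-P_{\cal N})^{|{\cal V}_c|}$, and your version is the sound one. The paper takes the expectation over $\*U$ and invokes Jensen's inequality to claim $E_{\*U}[\{1-p(\*U)\}^{N}] \leq \{1-E_{\*U}[p(\*U)]\}^{N}$ with $p(\*U)=\P_{\*U^*|\*U}(\*U^*\in S_{\cal N}(\*U)\mid\*U)$; but $x\mapsto(1-x)^{N}$ is convex, so Jensen gives the reverse inequality, and the marginal reading of (N1) as stated ($\P_{(\*U^*,\*U)}\{\*U^*\in S_{\cal N}(\*U)\}\geq P_{\cal N}$) does not by itself deliver the claimed bound --- your two-point counterexample is exactly the right objection. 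Your repair, reading (N1) as a bound $\P_{\*U^*}\{\*U^*\in S_{\cal N}(\*u^{rel})\}\geq P_{\cal N}$ holding for every fixed $\*u^{rel}$ and applying it pointwise before integrating over $\*U$, makes the argument go through cleanly and is almost certainly the intended meaning of the neighborhood condition. In short: same skeleton, but your handling of the last inequality corrects a genuine direction error in the paper's use of Jensen, at the price of requiring the uniform (conditional) form of (N1) rather than the marginal form.
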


Based on the lemma, we propose the following theorem. Here, the probability  $\P_{\*U|{\cal V}_c}(\cdot)$ refers to the condition distribution of $\*U$ given the set ${\cal V}_c$. See Appendix I for a proof. 

\begin{theorem}
\label{thm:p_bound_C_D_general}
 Suppose Condition (N1) holds and $\widehat{\it \Upsilon}(\*z_{obs})$ is defined in \eqref{eq:cand-hat}. Then, there exists a positive number $c$ such that we have, in probability, 
 	(a) $\P_{\*U|{\cal V}_c}\left\{\kappa_0 \in \widehat{\it \Upsilon}_{{\cal V}_c}(\*Z)\right\} = 1 - o_p\left(e^{-c|{\cal V}_c|}\right)
 	$
 	and (b) $\P_{\*U|{\cal V}_c}\big\{\btheta_0 \in \Gamma_{\alpha}'(\*Z)\big\} \geq \alpha -o_p\left(e^{-c|{\cal V}_c|}\right) 
 	$.
\end{theorem}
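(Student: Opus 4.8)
The plan is to leverage the joint bound already established in Lemma~\ref{lemma:p_bound_C_D_general} and upgrade it to a statement about the \emph{conditional} coverage probability $\P_{\*U|{\cal V}_c}$, which is itself a random quantity depending on the Monte-Carlo set ${\cal V}_c$. The bridge between the two is the tower property combined with Markov's inequality. Throughout, I would use that the observed data $\*Z = G_z(\*U, \btheta_0)$ and the Monte-Carlo draws in ${\cal V}_c$ are generated independently, so that conditioning on ${\cal V}_c$ leaves the law of $\*U$ (hence of $\*Z$) unchanged.

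For part (a), I would set $N = |{\cal V}_c|$ and define the random variable $g_N := \P_{\*U|{\cal V}_c}\{\kappa_0 \notin \widehat{\it \Upsilon}_{{\cal V}_c}(\*Z)\}$. By the tower property, $\Expect_{{\cal V}_c}[g_N] = \P_{\*U,{\cal V}_c}\{\kappa_0 \notin \widehat{\it \Upsilon}_{{\cal V}_c}(\*Z)\}$, and Lemma~\ref{lemma:p_bound_C_D_general} bounds the right-hand side by $(1-P_{\cal N})^{N} = e^{-c_0 N}$, where $c_0 := -\log(1-P_{\cal N}) > 0$ since $0 < P_{\cal N} \le 1$. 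Fixing any $c \in (0, c_0)$, Markov's inequality gives, for each $\epsilon > 0$, $\P_{{\cal V}_c}\{g_N e^{cN} > \epsilon\} \le \epsilon^{-1} e^{cN}\Expect[g_N] \le \epsilon^{-1} e^{-(c_0 - c)N} \to 0$ as $N \to \infty$. Hence $g_N = o_p(e^{-cN})$, which is precisely $\P_{\*U|{\cal V}_c}\{\kappa_0 \in \widehat{\it \Upsilon}_{{\cal V}_c}(\*Z)\} = 1 - o_p(e^{-c|{\cal V}_c|})$, establishing (a).

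For part (b), I would write the screened set as $\Gamma_\alpha'(\*Z) = \widehat\Theta(\*Z) \cap \Gamma_\alpha(\*Z)$, where $\widehat\Theta$ restricts the discrete coordinate to $\widehat{\it \Upsilon}_{{\cal V}_c}$ and lets the nuisance coordinate range freely, so that $\btheta_0 = (\kappa_0, \bxi_0^\top)^\top \in \widehat\Theta$ iff $\kappa_0 \in \widehat{\it \Upsilon}_{{\cal V}_c}$. Conditioning on ${\cal V}_c$ and applying the union bound,
\[
\P_{\*U|{\cal V}_c}\{\btheta_0 \in \Gamma_\alpha'(\*Z)\} \ge \P_{\*U|{\cal V}_c}\{\btheta_0 \in \Gamma_\alpha(\*Z)\} - \P_{\*U|{\cal V}_c}\{\kappa_0 \notin \widehat{\it \Upsilon}_{{\cal V}_c}(\*Z)\}.
\]
Since $\Gamma_\alpha$ does not involve ${\cal V}_c$ and $\*U$ is independent of ${\cal V}_c$, the first term on the right equals its unconditional value, which is $\ge \alpha$ by Theorem~\ref{thm:1} (or Theorem~\ref{thm:2}); the second term is $o_p(e^{-c|{\cal V}_c|})$ by part (a). Combining yields $\P_{\*U|{\cal V}_c}\{\btheta_0 \in \Gamma_\alpha'(\*Z)\} \ge \alpha - o_p(e^{-c|{\cal V}_c|})$, which is the conclusion of Corollary~\ref{cor:cand_cs_coverage} stated conditionally on ${\cal V}_c$.

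The main obstacle is the passage from the joint bound to the conditional one. Lemma~\ref{lemma:p_bound_C_D_general} controls only the \emph{average} (over ${\cal V}_c$) conditional miss-probability, so one cannot expect $g_N$ to decay at the full rate $e^{-c_0 N}$ pointwise in ${\cal V}_c$; the Markov step necessarily forfeits part of the exponent, which is exactly why the theorem asserts $o_p(e^{-c|{\cal V}_c|})$ for \emph{some} $c > 0$ rather than with the explicit constant $c_0$. The secondary point requiring care is justifying that conditioning on ${\cal V}_c$ does not erode the baseline coverage $\alpha$, which rests squarely on the independence of the Monte-Carlo stream from the observed sample.
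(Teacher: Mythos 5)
Your proposal is correct and follows essentially the same route as the paper: bound the expected conditional miss-probability by the joint bound of Lemma~\ref{lemma:p_bound_C_D_general}, apply Markov's inequality to upgrade to an $o_p(e^{-c|{\cal V}_c|})$ statement, and then combine with the argument of Corollary~\ref{cor:cand_cs_coverage} for part (b). Your Markov step (with arbitrary $\epsilon$) even recovers a slightly wider range of admissible $c$ than the paper's fixed half-exponent split, and your explicit conditional union-bound treatment of part (b) — including the observation that independence of $\*U$ and ${\cal V}_c$ preserves the baseline coverage $\alpha$ — fills in a step the paper leaves implicit.
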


In the theorem, the small $o_p\left(e^{-c|{\cal V}_c|}\right)$ term tends to $0$ as the size of the Monte-Carlo set $|{\cal V}_c| \to \infty$, and the result holds for any fixed (data) sample size $n < \infty$. So the theorem is a small sample result. Also, the ``in probability" statement regards the Monte-Carlo randomness of ${\cal V}_c$. Based on Theorem 4, we have a clean 
interpretation for the frequentist
coverage statement: 
if we use the proposed Monte-Carlo method to obtain the candidate set $\widehat{\it \Upsilon}_{{\cal V}_c}(\*Z)$, we have a near $100\%$ chance (i.e., in probability statement in ${\cal V}_c$) that the coverage provability of the  obtained set $\Gamma_{\alpha}'(\*Z)$ is
no less than $\alpha$ (i.e., frequency statement in $\*U$). 

We will further discuss and provide an extension of Condition (N1) and  Theorem~\ref{thm:p_bound_C_D_general} in Section~4.3.

\section{Case study example: normal mixture with unknown number of components} 
\label{sec:mixture}

Mixture distributions are a commonly used model in data science.
However, a challenging task 
remains to be 
quantifying the uncertainty and making inference for the unknown number of components, say $\tau_0$, in a finite mixture distribution. 
Although  several existing procedures  provide point estimates for $\tau_0$, the question of how to assess the uncertainty of estimation of  $\tau_0$ is still open.
This difficulty is due in large part to the fact that the unknown $\tau_0$ is a discrete natural number 
and  typical tools such as the large sample CLT are no long applicable to the point estimators of $\tau_0$.
In this section, we demonstrate how to use the repro samples method to solve this open question in a normal mixture model by constructing a finite sample level-$\alpha$ confidence set for the unknown $\tau_0$. 
This case study example also 
sheds light on how to address inference problems of discrete parameters and handle nuisance parameters in general cases. 

\subsection{Notations: normal mixture model}

Suppose $Y_i\sim N\big(\mu^{(0)}_k, (\sigma^{(0)}_k)^2\big)$, one of the $\tau_0$ models in ${\cal M} = \{N\big(\mu^{(0)}_k, (\sigma^{(0)}_k)^2\big),$  $k \in \{1, \ldots, \tau_0\}\}$, for $i = 1, \ldots, n.$ Write ${\*Y} = (Y_1, \ldots, Y_n)^{\top}$. Here, the model parameters are ${\btheta} = (\tau,\*M_\tau,  {\*\mu}_{\tau}, \*\sigma_\tau^2)^{\top}$, where ${\*M}_{\tau} = ({\*m}_1, \ldots, {\*m}_n)^{\top}$ is the unknown (unobserved) $n \times \tau$ membership assignment matrix,  where ${\*m}_i = 
(m_{i1}, \ldots, m_{i\tau})^{\top}$ and  $m_{ik}$ is the indicator of whether $Y_i$ is from the $k$th distribution $N(\mu_k, \sigma^2)$, $1 \leq i \leq n$, $1 \leq k \leq \tau$. We denote by the corresponding true values $\tau_0$, $\*M_0  = ({\*m}^{(0)}_1, \ldots, {\*m}^{(0)}_n)^{\top}$ $=$ $\big((m^{(0}_{11}, \ldots, m^{(0)}_{1\tau})^{\top}, \dots, (m^{(0)}_{n1},$ $\ldots,$ $m^{(0)}_{n\tau})^{\top}\big)^\top$,
 ${\*\mu}_0 = (\mu_1^{(0)}, \ldots, \mu_{_{\tau_0}}^{(0)})^{\top}$, $\*\sigma_0 = (\sigma^{(0)}_1, \dots, \sigma^{(0)}_{\tau_0})^{\top}$  and ${\*\theta}_0$ $= $ $(\tau_0, \*M_0, {\*\mu}_0,$ $\*\sigma_0)^{\top}$, respectively.  Because we are interested in inference of the unknown number of components $\tau<n,$
${\*\mu}_\tau = (\mu_1, \ldots, \mu_\tau)^{\top}$ and $\*\sigma_\tau = (\sigma_1, \dots, \sigma_\tau)^{\top}$ are therefore unknown nuisance parameters of the $\tau$ components' means and variances.

We can re-express $Y_i = \sum_{k =1}^\tau m^{(0)}_{ik} (\mu_k^{(0)} + \sigma^{(0)}_{k}U_{i})$, where $U_i \sim N(0,1)$, in a vector form, 
\begin{align}
\label{eq:mixture_matrix}
     {\*Y} = {\*M}_{0} {\*\mu}_{0} + \diag(\*U)\*M_{0}  {\*\sigma_0} = {\*M}_{0} {\*\mu}_{0} + \diag(\*M_{0}  {\*\sigma_0}) \*U, 
\end{align}
where $\*U = (U_1, \ldots, U_n)^{\top}$. 
The realized version of (\ref{eq:mixture_matrix}) is 
${\bm y_{obs}} = {\bm M}_{0} {\bm \mu}_{0} +  \diag({\bm  u}^{rel}){\bm M}_{0} \*\sigma_0, 
$ 
where $\*M_0$ is the realized membership assignment matrix that generates $\*y_{obs}$. 
A typical 
setup $Y_i \sim \sum_{k =1}^\tau p_{k} N(\mu_k, \sigma_k^2)$ further assumes that ${\*m}_i= (m_{i1}, \ldots, m_{i\tau})^{\top}$ is a random draw from multinomial distribution $\text{Multinomial}(\tau; p_1, \ldots, p_{\tau})$ with unknown proportion parameters $(p_1, \ldots, p_{\tau})$, $\sum_{k =1}^\tau p_k = 1$.  

The literature confirms the identifiability issue in a mixture model that multiple sets of parameters could possibly generate the same data \citep{LiuShao2003}. To account for this identifiability issue, we re-define the true parameters as those that correspond to the ones with the smallest $\tau$, i.e., 
\begin{align}
\label{eq:C1}
(\tau_0, \*M_0, {\*\mu}_0, \*\sigma_0^2) = \underset{\{(\tau, \*M_{\tau}, {\*\mu}_{\tau}, \*\sigma^2)|  {\*M}_{\tau} {\*\mu}_{\tau} = {\*M}_{0} {\*\mu}_{0}, \,   {\*M}_{\tau} {\*\sigma}_{\tau} = {\*M}_{0} {\*\sigma}_{0}\} } {\arg \min} \tau,   
\end{align}
when the identifiability issue arises. 
Here, $\*M_0$ and ${\bm \mu}_0$ on the right hand side are the true parameter values that generates $\*y^{obs}.$ With a slight abuse of notation, we still use the same notations on the left hand side to denote the ones 
with the smallest $\tau$. 
We further assume that $(\tau_0, \*M_0, {\*\mu}_0, \*\sigma_0^2)$ defined in (\ref{eq:C1})
is unique.  We also assume that 
 $0 < \sigma_{\min} \leq \sigma_{\max} < \infty,$
    where $\sigma_{\min} = \|\*\sigma_0\|_0$ and $\sigma_{\max} = \|\*\sigma_0\|_\infty.$
That is, among the $\tau_0$ components, none of the variance $\sigma_k^2$ can completely dominate another $\sigma_{k'}^2$, for any $k \not= k'$. The inference we develop is on the set of parameters defined in (\ref{eq:C1}) with the smallest $\tau_0$.

The inference problem for the target parameter $\tau_0$ is complicated. Here, we have both types of nuisance parameters of varying sizes: discrete nuisance parameters $\*M_\tau$ that are potentially high-dimensional and continuous nuisance parameters $(\*\mu_\tau, \*\sigma_{\tau})$ whose lengths change with $\tau$. 
To have an effective repro samples approach that is computationally efficient, 
we handle the two types of nuisance parameters in two different ways. For the discrete nuisance parameter $\*M_\tau$ we take advantage of
the many-to-one mapping described in Section 3.3 to obtain a manageable candidate set. For the continuous nuisance parameters $(\*\mu_\tau, \*\sigma_{\tau})$, we utilize a specific feature of the normal mixture distribution, namely that, given $\tau$, we can find sufficient statistics for $(\*\mu_\tau, \*\sigma_{\tau})$
and avoid searching through all possible values of $(\*\mu_\tau, \*\sigma_{\tau})$. In Section 4.2, we define a nuclear mapping function $T^P(\*u, \btheta)$ that minimizes the impact of the nuisance parameters.
In Section 4.3, we construct an effective candidate set of $(\tau, \*M_\tau)$ over the corresponding discrete parameter space to greatly improve our computing efficiency.

\subsection{Nuclear mapping function
for 
given $(\tau, \*M_\tau)$} 
In most derivations of this subsection, $\tau$ is fixed. For notation simplicity, we suppress the subscripts in  $(\*\mu_\tau, \*\sigma_\tau)$ and simply write $(\*\mu, \*\sigma)$. Note that the vectors $\*\mu$ and $\*\sigma$ have the length of $\tau$ that changes when $\tau$ varies. Also, let $\*H_\tau = {\*M}_{\tau} ({\*M}_{\tau}^\top {\*M}_{\tau})^{-1} {\*M}_{\tau}^\top$, $\*H_\tau^{(k)} = \*H_\tau[\mathcal I_k, \mathcal I_k]$ be the submatrix of $\*H_\tau$ corresponding to the $k$th component with $\mathcal I_k=\{i: m_{ik}=1\}$ being the index set corresponding component $k$,  and $\*Y^{(k)} = \*Y[\mathcal I_k]$ be the subvector of $\*Y$ that is corresponding to the $k$th component.

Corresponding to
$\*Y = \*M_{0} \*\mu_0 + \diag(\*M_0  {\*\sigma}_0) \*U$ in \eqref{eq:mixture_matrix}, we can write $\*Y'$ as
$$
\*Y' =\*M_{\tau} \*\mu + \diag(\*M_\tau  {\*\sigma}) \*U =  \*M_{\tau} \A_\tau(\*Y') + \diag\left\{\*M_\tau\B_\tau(\*Y')\right\}\C_\tau(\*Y'),
$$
where $\A_\tau(\*Y') = 
(\*M_{\tau}^\top \*M_{\tau})^{-1}\*M_{\tau}^\top\*Y'$,
$$
\resizebox{.9\hsize}{!}{$\B_\tau(\*Y') = \begin{pmatrix} \|(I - \*H_{\tau}^{(1)})\*Y^{'(1)}\| \\ \vdots \\ \|(I - \*H_{\tau}^{(\tau)})\*Y^{'(\tau)}\|
\end{pmatrix} \,\, \hbox{and}\,\,
\C_\tau(\*Y') = \begin{pmatrix}\frac{(I - \*H_{\tau}^{(1)})\*Y^{'(1)}}{\|(I - \*H_{\tau}^{(1)})\*Y^{'(1)}\|} \\ \vdots \\ \frac{(I - \*H_{\tau}^{(\tau)})\*Y^{'(\tau)}}{\|(I - \*H_{\tau}^{(\tau)})\*Y^{'(\tau)}\|}
\end{pmatrix} = \begin{pmatrix}\frac{(I - \*H_{\tau}^{(1)})\*U^{(1)}}{\|(I - \*H_{\tau}^{(1)})\*U^{(1)}\|} \\ \vdots \\ \frac{(I - \*H_{\tau}^{(\tau)})\*U^{(\tau)}}{\|(I - \*H_{\tau}^{(\tau)})\*U^{(\tau)}\|}
\end{pmatrix}.$}
$$
For the given $(\tau, \*M_\tau)$,  $\{\A_\tau(\*Y'),  \B_\tau(\*Y')\}$ is the sufficient statistics for $(\*\mu, \*\sigma)$ and $\C_\tau(\*Y')$ is the remaining normalized ``noises" 
whose distribution is free of nuisance parameters $(\*\mu, \*\sigma)$.  
By its expression, $\C(\*Y')$  depends only on $\*U$, so we also write $\C(\*Y') = \C(\*U)$.
To develop a nuclear mapping for $\tau$ while removing the impact of the nuisance parameters $(\*\mu, \*\sigma)$, we consider below a conditional probability given $\{\A_\tau(\*Y'), \B_\tau(\*Y')\}$, the sufficient statistics~for~$(\*\mu, \*\sigma)$. 

Let $\hat\tau(\*Y') $ be any reasonable estimates of $\tau$ based on $\*Y'.$
For instance, we can use a BIC-based estimator
\begin{align*}
    \hat \tau(\*Y') = \arg\min_{\tau \in {\it \Upsilon}}\left[\left\{-2 \max_{(\*M_\tau, {\bm \mu}, \*\sigma) } \log \ell(\*M_\tau, {\*\mu}, \*\sigma|\*Y')\right\} + 2\tau \log(n)\right],
\end{align*}
where $\ell(\*M_\tau, {\bm \mu}, \*\tau|\*Y')$ is the log-likelihood function of $(\*M_\tau, {\bm \mu}, \*\tau)$ given $\*Y'.$ 
We also denote the conditional probability $ \mathcal P(k|\*a,\*b) = \P\big\{\hat \tau(\*Y') = k \big| \A_\tau(\*Y') = \*a, \B_\tau(\*Y')=\*b\big\}$ and let 
\begin{align*}
    \mathcal F_{\theta}(w|\*a,\*b) = \sum_{\{k: \mathcal P(k|\*a,\*b)  > \mathcal P(w|\*a,\*b) \}} \mathcal P(k|\*a,\*b).
\end{align*}
Conditional on $\big\{A_\tau(\*Y') = \*a, B_\tau(\*Y') = \*b \big\}$, $\mathcal F_{\theta}(w|\*a,\*b)$ does not involve often unknown $(\*\mu, \*\sigma)$, as stated by the lemma below.  
See Appendix I for a proof of the lemma.

\begin{lemma}
\label{lem:F_tau_M}
The function defined above $\mathcal F_{\theta}(w|\*a,\*b)$  is free of $(\*\mu, \*\sigma)$ and so we can express $\mathcal F_{\theta}(w|\*a,\*b) = \mathcal F_{(\tau, M_\tau)}(w|\*a,\*b).$  
\end{lemma}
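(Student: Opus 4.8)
The plan is to exploit the fact that $\mathcal F_{\theta}(w|\*a,\*b)$ is assembled entirely out of the conditional probabilities $\mathcal P(k|\*a,\*b)=\P\{\hat\tau(\*Y')=k\mid \A_\tau(\*Y')=\*a,\B_\tau(\*Y')=\*b\}$, so it suffices to show that each $\mathcal P(k|\*a,\*b)$ is free of $(\*\mu,\*\sigma)$. This is precisely what sufficiency of $\{\A_\tau,\B_\tau\}$ delivers: once the conditional law of $\*Y'$ given $\{\A_\tau=\*a,\B_\tau=\*b\}$ is known to be free of $(\*\mu,\*\sigma)$, the same holds for the conditional law of any measurable functional of $\*Y'$, in particular for the statistic $\hat\tau(\*Y')$.

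First I would (re)establish the sufficiency of $\{\A_\tau(\*Y'),\B_\tau(\*Y')\}$ for $(\*\mu,\*\sigma)$ at the given $(\tau,\*M_\tau)$, which the text already asserts. Partition $\{1,\dots,n\}$ into the components $\mathcal I_k=\{i:m_{ik}=1\}$ with sizes $n_k=|\mathcal I_k|$; for $i\in\mathcal I_k$ the entries $Y_i'$ are independent $N(\mu_k,\sigma_k^2)$. Writing $A_k$ and $B_k$ for the $k$-th entries of $\A_\tau(\*Y')$ and $\B_\tau(\*Y')$, so that $A_k$ is the within-component mean and $B_k^2=\|(I-\*H_\tau^{(k)})\*Y^{'(k)}\|^2=\sum_{i\in\mathcal I_k}(Y_i'-A_k)^2$, the Pythagorean identity $\sum_{i\in\mathcal I_k}(Y_i'-\mu_k)^2=B_k^2+n_k(A_k-\mu_k)^2$ shows that the joint density factorizes as $f(\*Y';\*\mu,\*\sigma)=\prod_{k=1}^{\tau}(2\pi\sigma_k^2)^{-n_k/2}\exp\{-[B_k^2+n_k(A_k-\mu_k)^2]/(2\sigma_k^2)\}$; that is, it depends on $\*Y'$ only through $(\A_\tau,\B_\tau)$. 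By the Fisher--Neyman factorization theorem $\{\A_\tau,\B_\tau\}$ is therefore sufficient for $(\*\mu,\*\sigma)$.

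Next I would invoke the defining property of sufficiency: the regular conditional distribution of $\*Y'$ given $\{\A_\tau(\*Y')=\*a,\B_\tau(\*Y')=\*b\}$ does not depend on $(\*\mu,\*\sigma)$. (Equivalently, one may argue from the decomposition $\*Y'=\*M_\tau\*a+\diag\{\*M_\tau\*b\}\C_\tau(\*U)$ valid on this conditioning event, together with the classical fact that within each Gaussian component the normalized residual direction $\C_\tau(\*U)$ is independent of $(\A_\tau,\B_\tau)$ and has a distribution — uniform on the residual sphere — depending only on the component sizes $n_k$, hence only on $(\tau,\*M_\tau)$.) Since $\hat\tau(\*Y')$ is a measurable function of $\*Y'$, its conditional law given the same event is the pushforward of this parameter-free conditional law, so $\mathcal P(k|\*a,\*b)$ is free of $(\*\mu,\*\sigma)$ for every $k$. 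Finally, because $\mathcal F_{\theta}(w|\*a,\*b)=\sum_{\{k:\mathcal P(k|\*a,\*b)>\mathcal P(w|\*a,\*b)\}}\mathcal P(k|\*a,\*b)$ is a fixed function of the tuple $\{\mathcal P(k|\*a,\*b)\}_k$ alone — the selection rule $\mathcal P(k|\*a,\*b)>\mathcal P(w|\*a,\*b)$ introduces no new dependence — it too is free of $(\*\mu,\*\sigma)$ and depends on $\theta$ only through $(\tau,\*M_\tau)$, justifying the notation $\mathcal F_{(\tau,\*M_\tau)}(w|\*a,\*b)$.

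I do not expect heavy computation anywhere; the only point demanding care, and hence the main obstacle, is the measure-theoretic one. The conditioning event $\{\A_\tau=\*a,\B_\tau=\*b\}$ has probability zero, so all the statements above must be read in terms of regular conditional distributions, and the assertion that a function of $\*Y'$ inherits a parameter-free conditional law must be justified as the pushforward of the parameter-free regular conditional distribution guaranteed by sufficiency, rather than by naive conditioning on a null set. This step is entirely standard once the factorization above is in place.
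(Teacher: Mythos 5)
Your proof is correct and rests on the same central idea as the paper's: sufficiency of $\{\A_\tau(\*Y'),\B_\tau(\*Y')\}$ for $(\*\mu,\*\sigma)$ forces the conditional law of $\*Y'$, and hence of the measurable functional $\hat\tau(\*Y')$, given $\{\A_\tau=\*a,\B_\tau=\*b\}$ to be parameter-free, after which $\mathcal F_\theta(w|\*a,\*b)$ inherits this as a fixed functional of the $\mathcal P(k|\*a,\*b)$'s. The one substantive difference is the mechanism: you invoke the defining property of sufficiency directly (backed by an explicit Fisher--Neyman factorization, which the paper merely asserts), whereas the paper argues via Basu's theorem --- minimal sufficiency of $(\A_\tau,\B_\tau)$ plus ancillarity of $\C_\tau(\*Y')=\C_\tau(\*U)$ gives independence, which combined with the identity $\*Y'=\*M_\tau\*a+\diag\{\*M_\tau\*b\}\C_\tau(\*U)$ yields the explicit parameter-free conditional representation. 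Your route is more elementary (no Basu) and more careful about the null-set/regular-conditional-distribution issue; the paper's route buys the explicit representation of the conditional distribution, which is not strictly needed for this lemma but is reused verbatim in the proof of Lemma~\ref{lemma:Y_s_mixture}. You gesture at exactly this alternative in your parenthetical, so nothing is missing; if you adopted your version you would simply need to restate the spherical-decomposition fact when proving the later Monte-Carlo sampling lemma.
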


Next we follow Section 3.2 to remove the impact of the remaining latent discrete nuisance parameter $\*M_\tau$.  
Specifically, let $T_a(\*u, \btheta)
= (\A_\tau(\*y'), \B_{\tau}(\*y'),$ $\C_{\tau}(\*y')),$ where $\*y' =\*M_{\tau} \*\mu + \diag(\*M_\tau  {\*\sigma}) \*u$,  and  $T_a(\*u, \btheta)$ it is a mapping from ${\cal U} \times \Theta \to {\cal T}$. We define a set in $\cal T$
\begin{align*}
     B_\theta(\alpha) & = \left\{(\*a,\*b,\*c): 
    \mathcal F_{\theta}(\hat \tau(\*y^*) |\*a,\*b) \leq \alpha \,\, \text{for} \,\, \*y^* = \*M_\tau \*a+\diag\{ \*M_\tau \*b\} \*c \right\} \nonumber\\
     & = \left\{(\*a,\*b,\*c):  \mathcal F_{(\tau, \*M_\tau) }(\hat \tau(\*y^*) |\*a,\*b) \leq \alpha \,\, \text{for} \,\, \*y^* = \*M_\tau \*a+\diag\{ \*M_\tau \*b\} \*c \right\} \\ & \overset{def}{=} B_{(\tau. \*M_\tau)}(\alpha).
\end{align*}
This set only depends on the given $(\tau, \*M_\tau)$ and is free of nuisance parameters $(\*\mu, \*\sigma)$. Theorem~\ref{the:T_a_mixture} below states that $T_a(\*U, \btheta)$ is inside $B_{(\tau. \*M_\tau)}(\alpha)$ with a probability greater than $\alpha$, both conditionally and unconditionally. 
See Appendix I for a proof. 

\begin{theorem}
\label{the:T_a_mixture}
For any $(\*a,\*b)$, the conditional probability 
\begin{align}
\label{eq:conditional_borel_prob}
    \P\{T_a(\*U, \btheta)  
\in B_{(\tau. \*M_\tau)}(\alpha) | \A_\tau(\*Y') = \*a, \B_\tau(\*Y') = \*b\} \geq \alpha.
\end{align}
Therefore, 
$
\P\{
T_a(\*U, \btheta)
\in B_{(\tau. \*M_\tau)}(\alpha)\} \geq \alpha.     
$
\end{theorem}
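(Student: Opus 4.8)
The plan is to establish the conditional inequality \eqref{eq:conditional_borel_prob} first and then pass to the unconditional bound by the law of total probability. First I would unpack the event $\{T_a(\*U, \btheta) \in B_{(\tau, \*M_\tau)}(\alpha)\}$ on the conditioning event $\{\A_\tau(\*Y') = \*a, \B_\tau(\*Y') = \*b\}$. Since $T_a(\*u, \btheta) = (\A_\tau(\*y'), \B_\tau(\*y'), \C_\tau(\*y'))$ with $\*y' = \*M_\tau \*\mu + \diag(\*M_\tau \*\sigma)\*u$, on this event the first two coordinates are frozen at $(\*a, \*b)$ and only $\C_\tau(\*Y')$ remains random. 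The decisive observation is that substituting $\*c = \C_\tau(\*Y')$ into the reconstruction $\*y^* = \*M_\tau \*a + \diag\{\*M_\tau \*b\}\*c$ recovers exactly $\*Y'$, by the decomposition identity $\*Y' = \*M_\tau \A_\tau(\*Y') + \diag\{\*M_\tau \B_\tau(\*Y')\}\C_\tau(\*Y')$ established at the start of this subsection. Hence, conditionally on $(\*a, \*b)$, membership in $B_{(\tau, \*M_\tau)}(\alpha)$ is equivalent to the scalar event $\{\mathcal F_{(\tau, \*M_\tau)}(\hat\tau(\*Y')|\*a,\*b) \leq \alpha\}$, where by Lemma~\ref{lem:F_tau_M} the function $\mathcal F$ is free of the nuisance parameters $(\*\mu,\*\sigma)$.

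The crux is then a purely distributional fact about the discrete random variable $W = \hat\tau(\*Y')$, whose conditional law given $(\*a, \*b)$ is $\mathcal P(\cdot|\*a, \*b)$. Writing $G(w) = \mathcal F_{(\tau, \*M_\tau)}(w|\*a,\*b) = \sum_{\{k: \mathcal P(k|\*a,\*b) > \mathcal P(w|\*a,\*b)\}}\mathcal P(k|\*a,\*b)$, I would show $\P\{G(W) \leq \alpha \mid \*a, \*b\} \geq \alpha$. To do this I would group the support of $W$ into level sets of equal conditional probability, order the distinct probability values $p_1 > p_2 > \cdots$, and let $q_\ell$ be the total mass carried by level $\ell$. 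Then $G$ is constant on each level with value $\sum_{m<\ell}q_m$, so $\{G(W) \leq \alpha\}$ is the union of all levels $\ell$ with $\sum_{m<\ell}q_m \leq \alpha$. Taking $\ell^*$ to be the largest such level, the conditional probability equals $\sum_{\ell \leq \ell^*} q_\ell$; by maximality of $\ell^*$ this partial sum either exhausts all the mass (equal to $1$) or strictly exceeds $\alpha$ since $\sum_{m \leq \ell^*}q_m > \alpha$. In either case it is at least $\alpha$, which yields \eqref{eq:conditional_borel_prob}.

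Finally I would obtain the unconditional bound by integrating the conditional inequality over the joint distribution of $(\A_\tau(\*Y'), \B_\tau(\*Y'))$: because $\P\{T_a(\*U, \btheta) \in B_{(\tau, \*M_\tau)}(\alpha) \mid \A_\tau(\*Y') = \*a, \B_\tau(\*Y') = \*b\} \geq \alpha$ holds uniformly in $(\*a, \*b)$, the tower property gives $\P\{T_a(\*U, \btheta) \in B_{(\tau, \*M_\tau)}(\alpha)\} = \Expect\big[\P\{T_a(\*U,\btheta) \in B_{(\tau,\*M_\tau)}(\alpha) \mid \A_\tau(\*Y'), \B_\tau(\*Y')\}\big] \geq \alpha$.

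The main obstacle is the discrete lemma in the second paragraph, specifically the bookkeeping with ties: because $\mathcal F$ is defined through the strict inequality $\mathcal P(k) > \mathcal P(w)$, values sharing a common probability must be treated as a single level, and it is exactly this grouping that makes the threshold argument deliver $\geq \alpha$ rather than a possibly smaller quantity. By contrast, the reduction in the first paragraph is conceptually central but is essentially bookkeeping once the reconstruction identity $\*y^* = \*Y'$ is recognized.
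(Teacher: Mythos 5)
Your proof is correct and follows essentially the same route as the paper's: reduce the event $\{T_a(\*U,\btheta)\in B_{(\tau,\*M_\tau)}(\alpha)\}$ given $(\*a,\*b)$ to $\{\mathcal F_{(\tau,\*M_\tau)}(\hat\tau(\*Y')|\*a,\*b)\le\alpha\}$, bound its conditional probability by $\alpha$ from the definition of $\mathcal F$, and integrate via the tower property. The only difference is that you spell out the tie-grouping/level-set argument for the discrete step, which the paper simply asserts "by the definition of $\mathcal F$"; your added detail is accurate.
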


For $\*y' = G(\*u, \tau, \*M_\tau, \*\mu, \*\sigma)$ and any $(\tau, \widetilde{\*M_\tau})$, following \eqref{eq:nu}, we define 
\begin{align*}
%\label{eq:nu_mixture}
     \nu(\*u, \tau, \*M_{\tau}, \*\mu, \*\sigma, \widetilde{\*M_\tau})= \tilde \nu(\*y', \tau, \*M_{\tau}, \widetilde{\*M_\tau})  =  \inf_{\alpha'}\{\alpha': (\tau, \widetilde{\*M}_\tau) \in \Gamma_{\alpha'}(\*y')\}, 
\end{align*}
where $\Gamma_\alpha(\*y') = \{(\tau, \*M_\tau): T_a(\*u, \*\theta)  \in B_{(\tau. \*M_\tau)}(\alpha)\} = \{(\tau, \*M_\tau): (\A_\tau(\*y'),$ $\B_{\tau}(\*y'), \C_{\tau}(\*y'))  \in B_{(\tau. \*M_\tau)}(\alpha)\}$. 
Then the (profile) nuclear mapping function for $\tau$ is 
\begin{align*}
    T^P(\*u, \tau, \*M_\tau, \*\mu, \*\sigma) = \widetilde T^P(\*y', \tau) = \min_{\widetilde{\*M}_\tau} \tilde \nu(\*y', \tau, \*M_{\tau}, \widetilde{\*M_\tau}).
\end{align*}

According to Lemma~\ref{lemma:nuisance} and  \eqref{eq:CIeta},
we define
\begin{align}
    \Xi_\alpha(\*y_{obs})  & =  \big\{\tau: \exists \*u^* \in {\cal U}\mbox{ and $(\*M_\tau, \*\mu, \*\sigma)$ s.t. }  \*y_{obs} =  \*\mu + \diag\{\*M_\tau \*\sigma\}\*u^*, \nonumber \\ & \qquad\qquad T^P(\*u^*, \tau, \*M_\tau, \*\mu, \*\sigma) \leq \alpha\big\}
   \label{eq:mix4}
\end{align}
From Theorem~\ref{the:nuisance} and a direct calculation,  the corollary below states that $\Xi_\alpha(\*y_{obs})$ can be further simplified and  is a level $\alpha$ confidence set for $\tau_0$. 
See Appendix I for a proof. 
\begin{corollary}
\label{cor:mix4}
 The set defined in (\ref{eq:mix4}) can be further expressed as
\begin{align}
\Xi_\alpha(\*y_{obs})  = 
       \{\tau: \exists \widetilde{\*M}_\tau, \mathcal F_{\tau, \widetilde{\*M}_\tau}(\hat\tau(\*y^{obs})|\widetilde\A_{\tau}(\*y_{obs}), \widetilde\B_{\tilde\tau}(\*y_{obs})) \leq \alpha \},
      \label{eq:mix5}
       \end{align}
where $\widetilde\A_{\tau}(\*y_{obs})$ and $\widetilde\B_{\tau}(\*y_{obs})$ are defined by replacing $\*M_\tau$ with  $\widetilde{\*M}_\tau$ in the definition of $\A_\tau(\*y_{obs})$ and $\B_\tau(\*y_{obs}).$ Also, 
$\P\big\{\tau_0 \in \Xi_\alpha(\*Y) 
\big\} \ge \alpha$, so $\Xi_\alpha(\*y_{obs})$ is a level $\alpha$ confidence set for $\tau_0$.
\end{corollary}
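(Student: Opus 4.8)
The plan is to prove the two assertions in turn: first the set simplification from \eqref{eq:mix4} to \eqref{eq:mix5}, and then the coverage bound, which will drop out almost immediately from Theorem~\ref{the:T_a_mixture} once the simplification is in place. The crux of the whole argument is an algebraic \emph{reconstruction identity}. For any admissible membership matrix $\widetilde{\*M}_\tau$ and any data vector $\*y'$, I would first verify that $\widetilde{\*M}_\tau \widetilde\A_\tau(\*y') + \diag\{\widetilde{\*M}_\tau \widetilde\B_\tau(\*y')\}\widetilde\C_\tau(\*y') = \*y'$, where the tildes denote the statistics $\A_\tau,\B_\tau,\C_\tau$ computed with $\widetilde{\*M}_\tau$ in place of $\*M_\tau$. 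This is just the within-component orthogonal decomposition $\*y'^{(k)} = \*H_\tau^{(k)}\*y'^{(k)} + (I-\*H_\tau^{(k)})\*y'^{(k)}$: the first summand is reproduced by $\widetilde{\*M}_\tau\widetilde\A_\tau(\*y')$, and the second, after renormalization, by $\diag\{\widetilde{\*M}_\tau\widetilde\B_\tau(\*y')\}\widetilde\C_\tau(\*y')$. Consequently, inside $B_{(\tau,\widetilde{\*M}_\tau)}(\alpha')$ the reconstructed point $\*y^*$ coincides with $\*y'$ itself, so $\hat\tau(\*y^*)=\hat\tau(\*y')$ and the condition $(\tau,\widetilde{\*M}_\tau)\in\Gamma_{\alpha'}(\*y')$ collapses to $\mathcal F_{(\tau,\widetilde{\*M}_\tau)}(\hat\tau(\*y')\mid\widetilde\A_\tau(\*y'),\widetilde\B_\tau(\*y'))\le\alpha'$. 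Since this Borel set is increasing in $\alpha'$, taking the infimum over $\alpha'$ yields $\tilde\nu(\*y',\tau,\*M_\tau,\widetilde{\*M}_\tau)=\mathcal F_{(\tau,\widetilde{\*M}_\tau)}(\hat\tau(\*y')\mid\widetilde\A_\tau(\*y'),\widetilde\B_\tau(\*y'))$, an expression that, by Lemma~\ref{lem:F_tau_M}, depends only on $\*y'$ and $\widetilde{\*M}_\tau$ and not on the unknown $(\*\mu,\*\sigma)$.

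Next I would feed this into \eqref{eq:mix4}. On the matching set one has $\*y'=\*y_{obs}$, so $T^P(\*u^*,\tau,\*M_\tau,\*\mu,\*\sigma)=\widetilde T^P(\*y_{obs},\tau)=\min_{\widetilde{\*M}_\tau}\mathcal F_{(\tau,\widetilde{\*M}_\tau)}(\hat\tau(\*y_{obs})\mid\widetilde\A_\tau(\*y_{obs}),\widetilde\B_\tau(\*y_{obs}))$. The remaining existential constraint $\*y_{obs}=\*M_\tau\*\mu+\diag\{\*M_\tau\*\sigma\}\*u^*$ is vacuous: for any $\tau\ge 1$ one may pick any admissible $\*M_\tau$, any $\*\mu$, and any $\*\sigma$ with strictly positive entries, and then solve uniquely for $\*u^*=\diag\{\*M_\tau\*\sigma\}^{-1}(\*y_{obs}-\*M_\tau\*\mu)$. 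Hence the only binding requirement is $\widetilde T^P(\*y_{obs},\tau)\le\alpha$, i.e.\ that \emph{some} $\widetilde{\*M}_\tau$ achieves $\mathcal F_{(\tau,\widetilde{\*M}_\tau)}(\hat\tau(\*y_{obs})\mid\widetilde\A_\tau(\*y_{obs}),\widetilde\B_\tau(\*y_{obs}))\le\alpha$, which is exactly the expression \eqref{eq:mix5}.

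For the coverage bound I would work directly from \eqref{eq:mix5} rather than routing through Theorem~\ref{the:nuisance}. Restricting the existential quantifier over $\widetilde{\*M}_{\tau_0}$ to the single true value $\*M_0$ gives the event inclusion $\big\{\mathcal F_{(\tau_0,\*M_0)}(\hat\tau(\*Y)\mid\A_{\tau_0}(\*Y),\B_{\tau_0}(\*Y))\le\alpha\big\}\subseteq\{\tau_0\in\Xi_\alpha(\*Y)\}$, the probability being taken over $\*U$ with the realized $\*M_0$ held fixed. By the reconstruction identity the left-hand event is precisely $\{T_a(\*U,\btheta_0)\in B_{(\tau_0,\*M_0)}(\alpha)\}$ evaluated at the true parameters (where $\*Y'=\*Y$), whose probability is at least $\alpha$ by the unconditional conclusion of Theorem~\ref{the:T_a_mixture}. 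Therefore $\P\{\tau_0\in\Xi_\alpha(\*Y)\}\ge\alpha$, establishing that $\Xi_\alpha(\*y_{obs})$ is a level-$\alpha$ confidence set for $\tau_0$.

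I expect the reconstruction identity of the first paragraph to be the one genuinely load-bearing step: it is what forces the reconstructed $\*y^*$ inside $B_{(\tau,\widetilde{\*M}_\tau)}$ to equal the original $\*y'$, thereby collapsing the two-layer nuisance handling — the continuous $(\*\mu,\*\sigma)$ removed by conditioning on the sufficient statistics $\{\A_\tau,\B_\tau\}$ and the discrete $\*M_\tau$ removed by profiling — into the single clean conditional quantity $\mathcal F$, and it is also what lets the coverage be read off from Theorem~\ref{the:T_a_mixture} with no further probabilistic work. Everything else is bookkeeping: tracking that all tilde-statistics are computed with $\widetilde{\*M}_\tau$, and confirming via Lemma~\ref{lem:F_tau_M} that no quantity secretly depends on the unknown $(\*\mu_0,\*\sigma_0)$.
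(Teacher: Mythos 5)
Your proof is correct and, for the set identity, follows essentially the same chain of equalities as the paper: the paper likewise observes that the matching constraint is vacuous (every $\tau$ admits some $(\*M_\tau,\*\mu,\*\sigma,\*u^*)$ reproducing $\*y_{obs}$), so that $\Xi_\alpha(\*y_{obs})=\{\tau:\widetilde T^P(\*y_{obs},\tau)\le\alpha\}$, and then unwinds the definitions of $\widetilde T^P$, $\Gamma_{\alpha'}(\*y_{obs})$ and $B_{(\tau,\widetilde{\*M}_\tau)}(\alpha)$ to arrive at \eqref{eq:mix5}. Your reconstruction identity $\widetilde{\*M}_\tau\widetilde\A_\tau(\*y')+\diag\{\widetilde{\*M}_\tau\widetilde\B_\tau(\*y')\}\widetilde\C_\tau(\*y')=\*y'$ is precisely the fact the paper uses implicitly in the step from $(\tau,\widetilde{\*M}_\tau)\in\Gamma_\alpha(\*y_{obs})$ to the $\mathcal F$-condition, and making it explicit is a genuine improvement in readability. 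The one place you diverge is the coverage bound: the paper routes it through Theorem~\ref{the:nuisance} (hence Lemma~\ref{lemma:nuisance} on the profile nuclear mapping), whereas you specialize the existential quantifier in \eqref{eq:mix5} to the true $\*M_0$ and invoke the unconditional part of Theorem~\ref{the:T_a_mixture} directly. The two routes are equivalent in substance --- the proof of Lemma~\ref{lemma:nuisance} is itself nothing more than restriction to the true nuisance value --- but your version is self-contained and makes transparent that the only probabilistic input is $\P\{T_a(\*U,\btheta_0)\in B_{(\tau_0,\*M_0)}(\alpha)\}\ge\alpha$. No gaps.
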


Corollary~\ref{cor:mix4} suggests that, to construct $\Xi_\alpha(\*y_{obs}),$  we only need to calculate the function $\mathcal F_{\tau, \widetilde{\*M}_\tau}(\hat\tau(\*y^{obs})|\*a, \*b)$ in (\ref{eq:mix5})
for each pair of $(\tau, \widetilde{\*M}_\tau)$ and $\{\*a, \*b\} = \{\widetilde\A_{\tau}(\*y_{obs}), \widetilde\B_{\tilde\tau}(\*y_{obs})\}$.  This  can be achieved with Monte-Carlo approach as follows.
Given $(\tau, \widetilde{\*M}_\tau)$ and $(\*a, \*b),$ we simulate a large number of $\*u^s \sim \*U$ and the collection of $\*u^s$'s forms a set $\mathcal V$.  Let 
\begin{align*}
    \*y^s_{\widetilde{\*M}_\tau, \*a, \*b} = \widetilde{\*M}_\tau\*a+\diag\{\widetilde{\*M}_\tau\*b\} \C_\tau(\*u^s).
\end{align*}
The following theorem validates that we can approximate the conditional distribution of $\*Y$ given $\A_\tau (\*Y) = \*a$ and $\B_\tau(\*Y) = \*b$ with the Monte-Carlo sampling distribution of $\*y^s_{\widetilde{\*M}_\tau, \*a, \*b}.$ See Appendix I for a proof. 
\begin{lemma}
\label{lemma:Y_s_mixture}
Given a $(\tau, \widetilde{\*M}_\tau),$ let $\widetilde{\*Y} =  \widetilde{\*M}_\tau {\*\mu}_{\tau} + \diag(\*U)\widetilde{\*M}_\tau {\*\sigma_\tau}$ and $\*Y^s_{\widetilde{\*M}_\tau, \*a, \*b}= \widetilde{\*M}_\tau\*a+ \diag\{\widetilde{\*M}_\tau\*b\}\*C_\tau(\*U^s),$ where $\*U^s \sim \*U,$ then 
\begin{align*}
    \left\{  \widetilde{\*Y} \middle| \widetilde{\A}_\tau (\widetilde{\*Y}) = \*a, \widetilde{\B}_\tau(\widetilde{\*Y}) =\*b  \right\} \quad \sim \quad \*Y^s_{\widetilde{\*M}_\tau, \*a, \*b}.
\end{align*}
\end{lemma}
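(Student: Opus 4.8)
The plan is to prove the identity of laws by exhibiting $\C_\tau(\*U)$ as a quantity that is independent of the sufficient statistics $(\widetilde\A_\tau(\widetilde{\*Y}),\widetilde\B_\tau(\widetilde{\*Y}))$, so that conditioning on the latter leaves $\C_\tau(\*U)$ with its unconditional distribution. First I would record the reconstruction identity. Applying the same block algebra that produced the decomposition of $\*Y'$ to the fixed pair $(\tau,\widetilde{\*M}_\tau)$ gives
\[
\widetilde{\*Y} = \widetilde{\*M}_\tau\,\widetilde\A_\tau(\widetilde{\*Y}) + \diag\!\big\{\widetilde{\*M}_\tau\,\widetilde\B_\tau(\widetilde{\*Y})\big\}\,\C_\tau(\widetilde{\*Y}).
\]
Within the $k$th block we have $\widetilde{\*Y}^{(k)} = \mu_k\mathbf 1 + \sigma_k\*U^{(k)}$, and since $(I-\widetilde{\*H}^{(k)}_\tau)\mathbf 1 = \*0$ for the within-block projection $\widetilde{\*H}^{(k)}_\tau = \tilde n_k^{-1}\mathbf 1\mathbf 1^\top$ (with $\tilde n_k=|\widetilde{\mathcal I}_k|$), the location $\mu_k$ drops out and the scale $\sigma_k$ cancels in the normalization. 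Hence, exactly as already noted in the text, $\C_\tau(\widetilde{\*Y}) = \C_\tau(\*U)$ is a function of $\*U$ alone.

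Next I would establish the independence that drives the result. Because the coordinates of $\*U$ are i.i.d.\ $N(0,1)$ and the blocks use disjoint index sets $\widetilde{\mathcal I}_k$, the per-block quantities are independent across $k$, so it suffices to argue inside a single block. There $\widetilde\A_\tau$ is a function of the block mean $\bar U^{(k)}$, while $\widetilde\B_\tau$ and $\C_\tau$ are functions of the centered residual $(I-\widetilde{\*H}^{(k)}_\tau)\*U^{(k)}$. By the classical Gaussian decomposition the block mean is independent of the centered residual; and the centered residual is spherically symmetric on the $(\tilde n_k-1)$-dimensional orthogonal complement of $\mathbf 1$, so its norm $\|(I-\widetilde{\*H}^{(k)}_\tau)\*U^{(k)}\|$ and its direction are independent. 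Collecting these facts across all blocks yields that $\C_\tau(\*U)$ is independent of the pair $(\widetilde\A_\tau(\widetilde{\*Y}),\widetilde\B_\tau(\widetilde{\*Y}))$.

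Finally I would conclude. By the independence just shown, the conditional law of $\C_\tau(\*U)$ given $\{\widetilde\A_\tau(\widetilde{\*Y})=\*a,\ \widetilde\B_\tau(\widetilde{\*Y})=\*b\}$ coincides with its unconditional law. Substituting $\widetilde\A_\tau(\widetilde{\*Y})=\*a$ and $\widetilde\B_\tau(\widetilde{\*Y})=\*b$ into the reconstruction identity then shows that the conditional distribution of $\widetilde{\*Y}$ equals the law of $\widetilde{\*M}_\tau\*a + \diag\{\widetilde{\*M}_\tau\*b\}\,\C_\tau(\*U^s)$ with $\*U^s\sim\*U$, which is precisely $\*Y^s_{\widetilde{\*M}_\tau,\*a,\*b}$. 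I expect the main obstacle to be the independence step: one must justify the threefold within-block split (block mean versus residual, and then norm versus direction of the residual), which relies on the Gaussianity of $\*U$ and on the group sizes being at least two so that the residual directions are almost surely well defined; the standing assumption $0<\sigma_{\min}\le\sigma_{\max}<\infty$ ensures the normalizations are nondegenerate.
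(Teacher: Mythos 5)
Your proof is correct, and it reaches the same two ingredients as the paper --- the reconstruction identity $\widetilde{\*Y} = \widetilde{\*M}_\tau\widetilde{\A}_\tau(\widetilde{\*Y}) + \diag\{\widetilde{\*M}_\tau\widetilde{\B}_\tau(\widetilde{\*Y})\}\C_\tau(\*U)$ and the independence of $\C_\tau(\*U)$ from $(\widetilde{\A}_\tau(\widetilde{\*Y}),\widetilde{\B}_\tau(\widetilde{\*Y}))$ --- but you establish the independence by a genuinely different route. The paper (in the proof of Lemma~\ref{lem:F_tau_M}, whose display \eqref{eq:Y_conditional_mixture} is all that is invoked for Lemma~\ref{lemma:Y_s_mixture}) gets independence in one line from Basu's theorem: $(\A_\tau,\B_\tau)$ is sufficient for $(\*\mu,\*\sigma)$ and $\C_\tau(\*Y)=\C_\tau(\*U)$ is ancillary, hence the two are independent. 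You instead give a direct, self-contained Gaussian argument: disjoint blocks are independent; within a block the mean is independent of the centered residual by orthogonality of the projections; and the residual's norm is independent of its direction by spherical symmetry on the orthogonal complement of $\mathbf{1}$. Your version is more elementary and avoids the completeness hypothesis that Basu's theorem actually requires (the paper writes ``minimal sufficient,'' though what Basu needs is bounded completeness, which does hold here for the per-block Gaussian exponential family); it also makes explicit the mild nondegeneracy caveats (block sizes at least two, $\sigma_{\min}>0$) that the paper leaves implicit. The paper's route is shorter and generalizes more readily to non-Gaussian location-scale families where a complete sufficient statistic is available, whereas yours is tied to normality through the spherical-symmetry step. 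Either argument completes the lemma once the conditional law of $\C_\tau(\*U)$ is identified with its unconditional law and substituted into the reconstruction identity.
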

When $|\cal V|$ is large enough, the function $\mathcal F_{\tau, \widetilde{\*M}_\tau}(w|\*a,\*b)$ in (\ref{eq:mix5}) can be approximated by the Monte-Carlo version 
\begin{align*}
     \mathcal F_{\tau, \widetilde{\*M}_\tau}^s(w|\*a,\*b) \overset{def}{=} \sum_{\{k: {\mathcal P}_{\mathcal V}^s(k|\*a,\*b)  > {\mathcal P}_{\mathcal V}^s(w|\*a,\*b) \}} {\mathcal P}_{\mathcal V}^s(k|\*a,\*b),
\end{align*}
since based on Lemma~\ref{lemma:Y_s_mixture}, $\mathcal F_{\tau, \widetilde{\*M}_\tau}^s(w|\*a,\*b) \to \mathcal F_{\tau, \widetilde{\*M}_\tau}(w|\*a,\*b),$ as $|\cal V| \to \infty$. 
Here, $\mathcal P_{\mathcal V}^s(k|\*a,\*b) = \frac{1}{|\mathcal V|}\sum_{\*u^s \in \mathcal V}I(\hat \tau(\*y^s_{\widetilde{\*M}_\tau, \*a, \*b}) =k)$ and $\*y^s_{\widetilde{\*M}_\tau, \*a, \*b} = \widetilde{\*M}_\tau\*a+\diag\{\widetilde{\*M}_\tau\*b\} \C_\tau(\*u^s)$, for
$\*u^s \in {\cal V}$. 

\subsection{Finding candidates for $(\tau_0,\*M_0)$} \label{sec:finding_candidates}
We follow  Section~\ref{sec:candidate_general} to construct a data-driven candidate set $\widehat {\it \Upsilon}(\*y_{obs})$ for $\kappa_0 = (
\tau_0, \*M_0)$ based on the observed data $\*y_{obs}.$ 
Corresponding to (\ref{eq:tau-star}), we use  a modified BIC as our lost function and define 
our many-to-one mapping for 
$\kappa^* = (\tau^*, \*M^*_\tau)$ as
\begin{equation}
    \label{eq:modefied_BIC}
   \resizebox{.91\hsize}{!}{$(\tau^*, \*M^*_{\tau^*}) =\arg\min\limits_{(\tau, \*M_\tau)} \min\limits_{(\*\mu, \*\sigma)}  \left\{n\log \bigg(\frac{\|\*y_{obs} - {\*M}_{\tau} {\*\mu}_{\tau} - \diag(\*M_{\tau}  {\*\sigma_\tau}) \*u^*\|^2
+1}{n}\bigg) + 2\lambda\tau\log(n)\right\}.$}
\end{equation}
By \eqref{eq:cand-hat}, a data-dependent candidate set for $\kappa_0 = (\tau_0, \*M_0)$ is 
\begin{align*}
  \resizebox{1.0\hsize}{!}{$\widehat{\it \Upsilon}_{{\cal V}_c}(\*y_{obs}) 
  =  \bigg\{ 
  (\tau^*, \*M^*_{\tau^*})  
  ={\footnotesize \arg\min\limits_{(\tau, \*M_\tau)} \min\limits_{(\*\mu, \*\sigma)}  \bigg\{n\log \left(\frac{\|\*y_{obs} - {\*M}_{\tau} {\*\mu}_{\tau} - \diag(\*u^c)\*M_{\tau}  {\*\sigma_\tau}\|^2
+1}{n}\right) +
\lambda\log(n)(2\tau)
  \bigg\}, 
 \*u^c \in {\cal V}_c
  \bigg\}},$}
\end{align*}
where ${\cal V}_c = \{\*u_1, \ldots, \*u_N\}$ is a collection of simulated $\*u_j^c\sim N(\*0, \*I)$, $j = 1, \ldots, N$.

Then based on (\ref{eq:G1_candidate}) and (\ref{eq:mix5}),
we define the confidence set
\begin{align}
\label{eq:cs_mixture}
\resizebox{.91\hsize}{!}{$\Xi'_{\alpha}(\*y_{obs}) = \left\{\tau: \exists \widetilde{\*M}_\tau, \mbox{s.t. } (\tau, \widetilde{\*M}_\tau) \in \widehat{\it \Upsilon}_{{\cal V}_c}(\*y_{obs}),  \mathcal F_{\tau, \widetilde{\*M}_\tau}(\hat\tau(\*y^{obs})|\widetilde\A_{\tau}(\*y_{obs}), \widetilde\B_{\tilde\tau}(\*y_{obs})) \leq \alpha \right\}.$}
\end{align}

One complication under the mixture model setting is that the equation (\ref{eq:N1}) 
in (N1) may not hold when $\*u^{rel}$ hits certain directions (i.e., the two vectors $\diag(\*M_0\sigma_0) \*u^{rel}$ and $\{I-\*M_{\tau'}(\*M_{\tau'}^\top\*M_{\tau'})^{-1} \*M_{\tau'}^\top\}\*M_0\*\mu_0$ are on the same direction for some $\*M_{\tau'} \not = \*M_0$, $\tau' \leq \tau_0$) or $\|\*u^{rel}\|$ is extremely large. Fortunately, the probabilities of these events are arbitrarily small, resulting in the arbitrarily small difference $\delta$ between the probability bound in the following Theorem~\ref{thm:bound_of_C_d_mixture} and that in Theorem~\ref{thm:p_bound_C_D_general}.
In the theorem, $C_{\min} = \underset{\small \{\*M_\tau: \, |\tau| \leq |\tau_0|, \tau \neq \tau_0\}}{\min} \|(\*I - \*M_\tau)\*M_0\*\mu_0\| > 0.$
See Appendix I for a proof.

\begin{theorem}
\label{thm:bound_of_C_d_mixture}
Suppose $n-\tau_0>4.$ Let the distance metric between $\*U$ and $\*U^*$ be $\rho(\*U, \*U^*) = \frac{\|(\*I - \*H_{\*U^*})\*U\|}{\|\*U^*\|},$ then for any arbitrarily small $\delta >0,$ there exist $\gamma_\delta>0, c>0$ and also an interval of positive width $\Lambda_{\delta} =[\frac{\gamma_{\delta}^{1.5}}{\log(n)/n},$ $\frac{\log\{0.5C^2_{\min}\sigma_{\max}\gamma_{\delta}/\sigma_{\min} + 1\}}{2 \tau_0 \log(n)/n}]$ such that when $\lambda \in \Lambda_{\delta},$ (a) $\P_{\*U | {\cal V}_c}\{(\tau_0,\*M_0)\in \widehat{\it \Upsilon}_{{\cal V}_c}(\*Y)\}$ $\geq 1- \delta - o_p(e^{-c|\mathcal V^c|})$, (b) $\P_{\*U | {\cal V}_c}\{(\tau_0,\*M_0)\in \Xi'_{\alpha}(\*Y)\}  \geq \alpha- \delta - o_p(e^{-c|\mathcal V^c|}),$
where $\Xi'_{\alpha}(\cdot)$ is defined in \eqref{eq:cs_mixture}.
 \end{theorem}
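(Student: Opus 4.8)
The plan is to treat Theorem~\ref{thm:bound_of_C_d_mixture} as the mixture-model instance of Theorem~\ref{thm:p_bound_C_D_general}, with the single difference that the exact neighborhood Condition (N1) need not hold for every realization $\*u^{rel}$: it holds only on a high-probability event, and the residual probability is absorbed into the extra $\delta$. Accordingly, I would first set up a \emph{good event} $\mathcal G\subset\mathcal U$ for $\*u^{rel}$ that excludes the two degeneracies flagged before the statement: (i) the directional alignment in which $\diag(\*M_0\*\sigma_0)\*u^{rel}$ lies (nearly) in $\mathrm{span}\{(\*I-\*H_{\tau'})\*M_0\*\mu_0\}$ for some under-parameterized $\*M_{\tau'}$, and (ii) the event that $\|\*u^{rel}\|$ is too large. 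Since $\*U\sim N(\*0,\*I)$ with $n-\tau_0>4$, a fixed direction carries zero (and a thin cone arbitrarily small) probability, and the Gaussian norm has light tails, so the thresholds can be chosen so that $\P(\mathcal G^c)\le\delta$.

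The core of the argument is the local model-selection analysis that produces the radius $\gamma_\delta$ and the admissible penalty window $\Lambda_\delta$. Using the sufficient-statistic decomposition $\*Y'=\*M_\tau\A_\tau(\*Y')+\diag\{\*M_\tau\B_\tau(\*Y')\}\C_\tau(\*Y')$ from Section~4.2, I would show block by block that, after profiling out $(\*\mu,\*\sigma)$, the residual of the \emph{true} structure $(\tau_0,\*M_0)$ evaluated at a repro copy $\*u^*$ is controlled exactly by the metric $\rho(\*u^{rel},\*u^*)$, being (up to the $\sigma^{(0)}_k$ scales) the part of $\*u^{rel}$ orthogonal to $\mathrm{span}\{\*1,\*u^*\}$ within each component. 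Hence on $\mathcal G$, whenever $\rho(\*u^{rel},\*u^*)<\gamma_\delta$ the true-model fit term in \eqref{eq:modefied_BIC} is pinned near its floor (the ``$+1$'' prevents $\log 0$), whereas any \emph{over}-parameterized structure can lower the fit only by a bounded amount while paying at least $2\lambda\log(n)$ extra penalty, and any \emph{under}-parameterized structure incurs a fit penalty bounded below through $C_{\min}$ and the ratio $\sigma_{\max}/\sigma_{\min}$. Turning these two inequalities into constraints on $\lambda$ yields precisely the lower endpoint $\gamma_\delta^{1.5}/(\log(n)/n)$ (overfitting must be suppressed) and the upper endpoint $\log\{0.5C_{\min}^2\sigma_{\max}\gamma_\delta/\sigma_{\min}+1\}/(2\tau_0\log(n)/n)$ (underfitting must not win), so that for every $\lambda\in\Lambda_\delta$ the modified-BIC minimizer equals $(\tau_0,\*M_0)$. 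This certifies the $\rho$-ball $S_{\cal N}(\*u^{rel})=\{\*u^*:\rho(\*u^{rel},\*u^*)<\gamma_\delta\}$ as a valid (N1)-type neighborhood on $\mathcal G$; being nondegenerate for $\gamma_\delta>0$, it satisfies $\P_{(\*U^*,\*U)}\{\*U^*\in S_{\cal N}(\*U)\mid\mathcal G\}\ge P_{\cal N}>0$.

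With the neighborhood in hand, part (a) follows the Monte-Carlo counting step of Lemma~\ref{lemma:p_bound_C_D_general} and Theorem~\ref{thm:p_bound_C_D_general} verbatim: conditional on $\mathcal G$, the chance that none of the $|{\cal V}_c|$ simulated $\*u^c$ lands in $S_{\cal N}(\*u^{rel})$ is at most $(1-P_{\cal N})^{|{\cal V}_c|}=o_p(e^{-c|{\cal V}_c|})$ for a suitable $c>0$; adding $\P(\mathcal G^c)\le\delta$ gives $\P_{\*U\mid{\cal V}_c}\{(\tau_0,\*M_0)\in\widehat{\it \Upsilon}_{{\cal V}_c}(\*Y)\}\ge 1-\delta-o_p(e^{-c|{\cal V}_c|})$. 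For part (b) I would combine this with the coverage guarantees already proved: Corollary~\ref{cor:mix4} gives $\P\{\tau_0\in\Xi_\alpha(\*Y)\}\ge\alpha$ for the \emph{unrestricted} set, while $\Xi'_\alpha$ in \eqref{eq:cs_mixture} differs only in requiring each witnessing $(\tau,\widetilde{\*M}_\tau)$ to lie in $\widehat{\it \Upsilon}_{{\cal V}_c}$. Restricting to the candidate set can drop coverage only on $\{(\tau_0,\*M_0)\notin\widehat{\it \Upsilon}_{{\cal V}_c}\}$, exactly as in Corollary~\ref{cor:cand_cs_coverage}; a union bound with part (a) then yields $\P_{\*U\mid{\cal V}_c}\{\tau_0\in\Xi'_\alpha(\*Y)\}\ge\alpha-\delta-o_p(e^{-c|{\cal V}_c|})$.

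The hardest step is the local model-selection calculation above---specifically, proving that $\Lambda_\delta$ has positive width. This requires making the overfitting floor (which, through $\rho<\gamma_\delta$ and the ``$+1$'' regularization, scales like a power of $\gamma_\delta$) strictly smaller than the underfitting floor (bounded away from zero through $C_{\min}^2$ and $\sigma_{\max}/\sigma_{\min}$), \emph{uniformly} over all admissible under- and over-parameterized $(\tau,\*M_\tau)$. Securing the $C_{\min}$-based lower bound uniformly is exactly what forces the directional and norm exclusions defining $\mathcal G$, and tracking the exponent (the $\gamma_\delta^{1.5}$ bookkeeping and the factor $2\tau_0$) through the profiled residuals is the most delicate part of the argument.
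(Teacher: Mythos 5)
Your proposal follows essentially the same route as the paper: a good event excluding the degenerate directions and large $\|\*u^{rel}\|$ contributing the $\delta$, a $\rho$-neighborhood certified as an (N1)-type set whose two model-selection inequalities (overfit suppressed by the penalty, underfit suppressed via $C_{\min}$ and $\sigma_{\max}/\sigma_{\min}$) yield exactly the two endpoints of $\Lambda_\delta$, then the Monte-Carlo counting plus Markov step for (a) and Corollary~\ref{cor:cand_cs_coverage} for (b). The only cosmetic difference is that the paper organizes the core step as a union bound over all wrong $(\tau,\*M_\tau)$ of joint probabilities (Lemmas~\ref{lem::angle}--\ref{lem:bound_single_ustar}) rather than a deterministic statement on the good event, but the underlying inequalities and conclusions are the same.
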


{\bf Remark 4.1}
Compared to the standard BIC, the modified BIC many-to-one mapping
(\ref{eq:modefied_BIC}) has an extreme term that involves the repro copy $\*u^*$. We modified the term inside the log function so that the mapping (\ref{eq:modefied_BIC}) can give us $(\tau_0, M_0)$ when $\*u^*$ equals
$\*u^{rel}$. 
We also 
allow the penalty on the model size to change by using $\lambda\log(n)(2\tau)$ instead of  $\log(n)(2\tau)$. 
This change adds some flexibility to help control the finite sample coverage probability as shown in Theorem~\ref{thm:bound_of_C_d_mixture}.
In practice, we usually still use $\lambda=1,$ since it is often in its required range for 
a small $\delta$. For example, to achieve a coverage error $\delta=O(e^{-n}),$ we can take $\gamma_\delta =O(1/n)$.
Considering $C_{\min}$ is usually $O(n),$ it is likely that $\frac{\gamma_\delta^{1.5}}{\log(n)/n}<1<\frac{\log\{0.5C^2_{\min}\gamma_\delta \sigma_{\max}/\sigma_{\min} + 1\}}{2 \tau_0 \log(n)/n}$.

\section{Numerical Studies: Real and simulated data}
\subsection{Real data analysis: red blood cell sodium-lithium countertransport (SLC) data}

In this section we consider a data set that consists of Red blood cell sodium-lithium countertransport (SLC) activity measurements from 190 individuals \citep{dudley_assessing_1991}. 
SLC measurement is known to be correlated with blood pressure, and perceived as an essential cause of hypertension by some researchers. Moreover, SLC is usually easier to study than blood pressure that could be influenced by numerous environmental and genetic factors  \citep{dudley_assessing_1991}.  \cite{roeder_graphical_1994} and \cite{chen_inference_2012} both analyzed the SLC data set of 190 individuals using a normal mixture model, and perform a hypothesis testing on the number of components $H_0: \tau_0 = k$ vs $H_1: \tau_0>k$ for a small integer $k.$ \cite{roeder_graphical_1994} concluded that $\tau_0=3$ is the smallest value of $\tau_0$ not rejected by the data assuming variance of each component is the same.  
\cite{chen_inference_2012} argued instead that dropping the assumption of equal variances, the hypothesis of $\tau_0=2$ would not be rejected and 
thus a good fit for the data.

We use the repro samples approach developed in Section 4 
to re-analyze the SLC data. Following \cite{roeder_graphical_1994, chen_inference_2012}, we assume that the SLC measurements follow a normal mixture distribution with unknown number of components $\tau_0$ and obtain a 95\% confidence set for $\tau_0$ that is $\{2, 3, 4\}.$  After that, we fit the normal mixture model with $\tau=2,3,4$ on the SLC data with an EM algorithm, respectively.  We provide the estimated mean, standard deviation and weight of each component in Table~\ref{tab:SLC}. 
The confidence set obtained by our repro samples approach suggests that both hypothesis $\tau_0=2$ and $\tau_0=3$ are plausible. Since, in genetics, a two-component mixture distribution corresponds to a simple dominance model and a three-component corresponds to an additive model \citep{roeder_graphical_1994},
evidently we can not rule out either of the two competing genetic models based on the SLC data alone. 

An advantage of the proposed repor samples method is that 
its
confidence set provides both a lower bound and an upper bound for plausible values of $\tau_0$ supported by the data set, whereas inverting one-sided tests proposed by \cite{roeder_graphical_1994} and \cite{chen_inference_2012} can only lead to one sided confidence set. One formidable challenge of finding an upper bound is the fact that any distribution can be arbitrarily well approximated by a normal mixture distribution, therefore $H_0: \tau_0=k$ for a large $k$ can never be rejected. Here we manage to tackle this challenge by using repro samples to find a proper candidate set as described in Section~\ref{sec:finding_candidates}. The upper bound of the confidence set for $\tau_0$ we find for the SLC data is $4.$  In Figure~\ref{fig:SLC_mixture}, we plot the histogram of the SLC measurements together with estimated normal mixture densities with number of components $\tau =2,3,4$ using EM algorithm. It appears that only with $\tau=4$, the normal mixture model could capture the three spikes in the histogram, while still smoothly fitting the rest of the histogram. Indeed, a normal mixture model with four components should demand further investigation, since it effectively represents the data without over fitting it.

\begin{table}[ht]
\centering
\resizebox{\textwidth}{!}{\begin{tabular}{lccc}
  \hline
$\tau_0$ & Component Mean & Component SD & Weight \\ 
  \hline
2 & (0.2206,0.3654) & (0.0571,0.1012) & (0.7057,0.2943) \\ 
  3 & (0.1887,0.4199,0.2809) & (0.0414,0.0886,0.0474) & (0.4453,0.168,0.3866) \\ 
  4 & (0.1804,0.3351,0.2556,0.4403) & (0.0362,0.0359,0.0268,0.086) & (0.4018,0.1742,0.2941,0.1299) \\ 
   \hline
\end{tabular}}
\caption{Estimated mean, standard deviation and weight of each component of the normal mixture distributions for the SLC data.}
\label{tab:SLC}
\end{table}

\begin{figure}[ht]
    \centering
     \includegraphics[width=12cm, height=12cm]{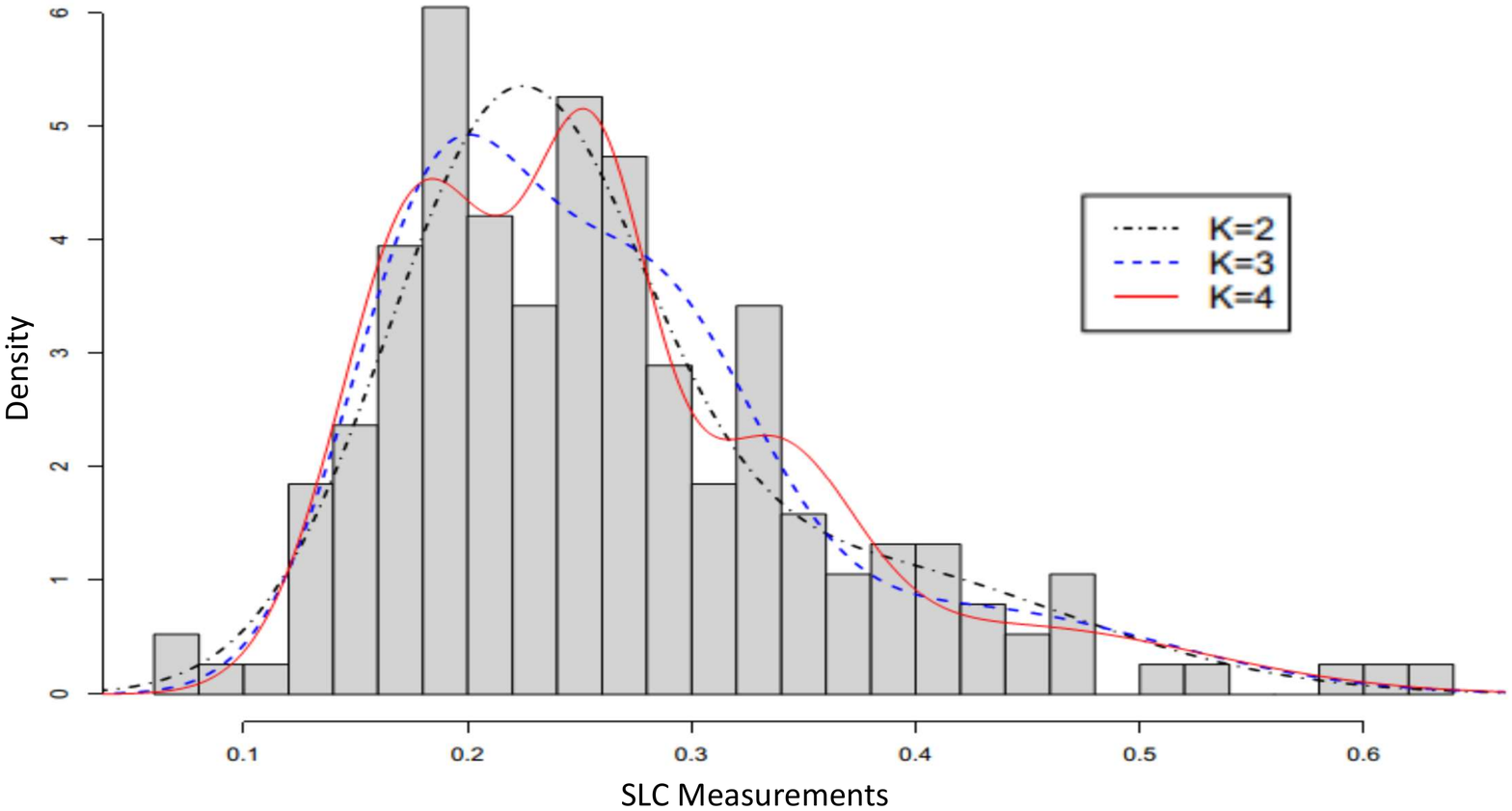}
     \vspace{-1.5cm}
    \caption{Histogram of 190 SLC measurements, including three estimated normal mixture density curves with number of components $\tau_0=2, 3, 4.$}
    \label{fig:SLC_mixture}
\end{figure}

We also conducted a Bayesian analysis on the SLD data
using 
R package {\it mixAK} \citep{komarek2014capabilities}, where 
the Bayesian inference procedure on Gaussian mixture model with unknown number of components is discussed in \cite{richardson_bayesian_1997}.
We implement four different priors on $\tau$ in our analysis: uniform, truncated Poisson(0.4), truncated Poisson(1) and truncated Poisson(5) over $\tau =1, \dots, 10.$ The priors used on $\*\mu$ and $\*\sigma$ are the default priors (i.e., Gaussian and inverse Gamma priors) in {\it mixAK}. 
The posterior distributions for the four priors are plotted in Figure~\ref{fig:SLCposteriors}, with corresponding 95\% credible sets $\{2,3\}$, $\{2\}$, $\{2,3\}$, $\{2,3,4\}$, respectively. 
It turns out 
the posterior distributions and their corresponding 95\% credible intervals (sets) of $\tau$ are quite different for different priors. Even in the cases with the same credible set using uniform and truncated Poisson(1) priors, their posterior distributions are quite different. 
It is clear that the
Bayesian inference of $\tau$ is very sensitive to the choice of prior distribution. A simulation study in the next section affirms this observation. In addition, the simulation study further suggests that the Bayesian procedure is not suited for the task of recovering the true value $\tau_0$ (under repeated experiments) and Bayesian outcomes are also affected greatly by the default priors on $\*\mu$ and~$\*\sigma$.

\begin{figure}[ht]
    \centering
    \includegraphics[width=\textwidth, height= 10cm]{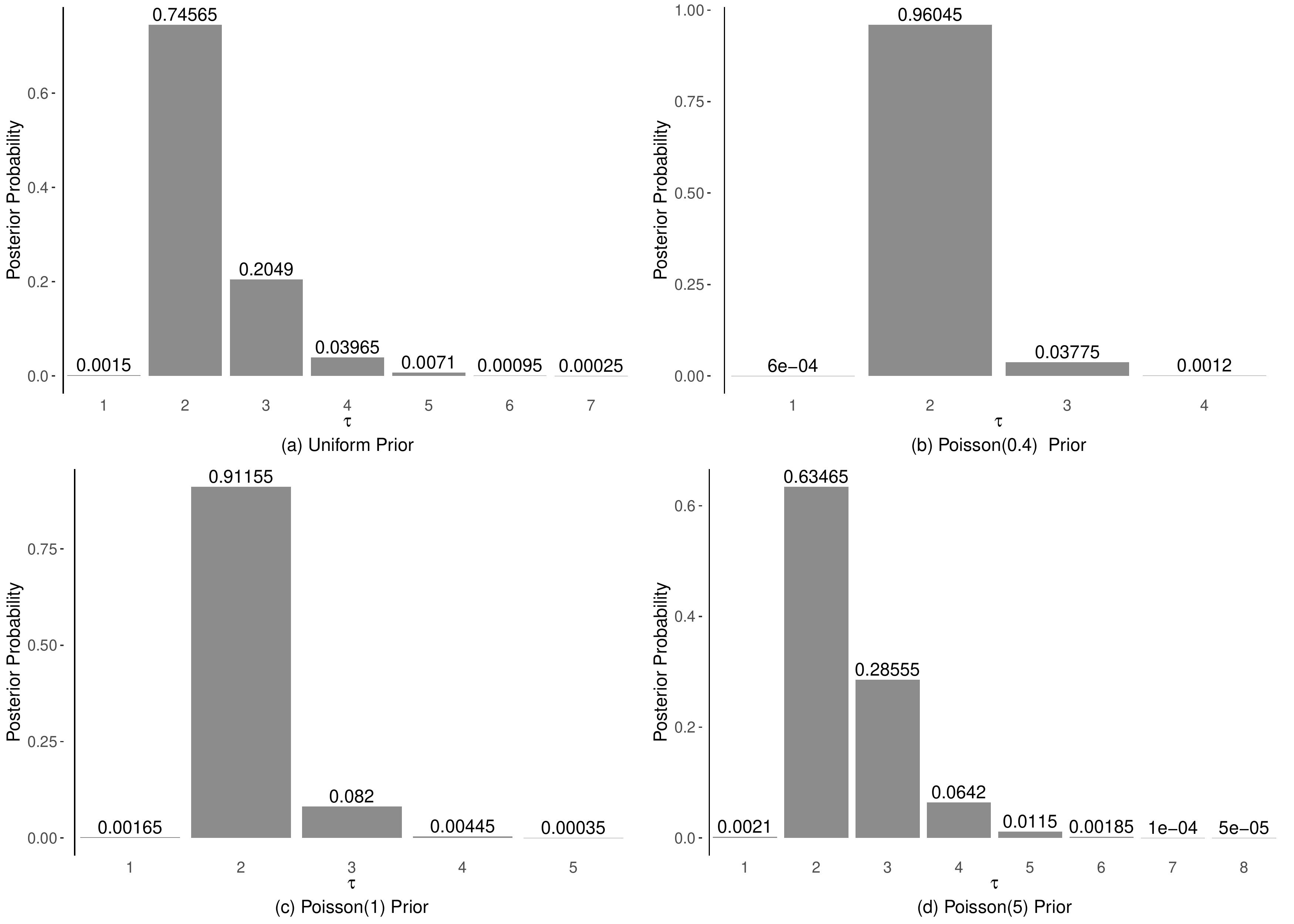}
    \caption{Posterior distributions of $\tau$ under four different priors: (a) uniform; (b) Poisson($\mu = 0.4$); (c) Poisson($\mu = 1$); (d) Poisson($\mu = 5$); All priors are truncated outside the set $\{\tau: \tau = 1, \dots, 10\}$.
    }
    \label{fig:SLCposteriors}
\end{figure}

\subsection{Simulation Study: Inference and Recovery of True $\tau_0$}
To further demonstrate the performance of the proposed confidence set, we carry out simulation studies in which the simulated data are generated using the estimated normal mixture model with $\tau_0=3$ and $\tau_0=4$ in Table~\ref{tab:SLC}, respectively. We keep the sample size at $n = 190$. For each simulated data, we apply the proposed repro samples method. 
We repeat the simulation and analysis 200 times, and subsequently summarize the 200 level-95\% confidence sets using a bar plot in  Figure~\ref{fig:simulation_mixture}, for $\tau_0 =3$ and $\tau_0 = 4$, respectively.
\begin{figure}
    \centering
     \includegraphics[width=0.85\textwidth, height=8.757cm]{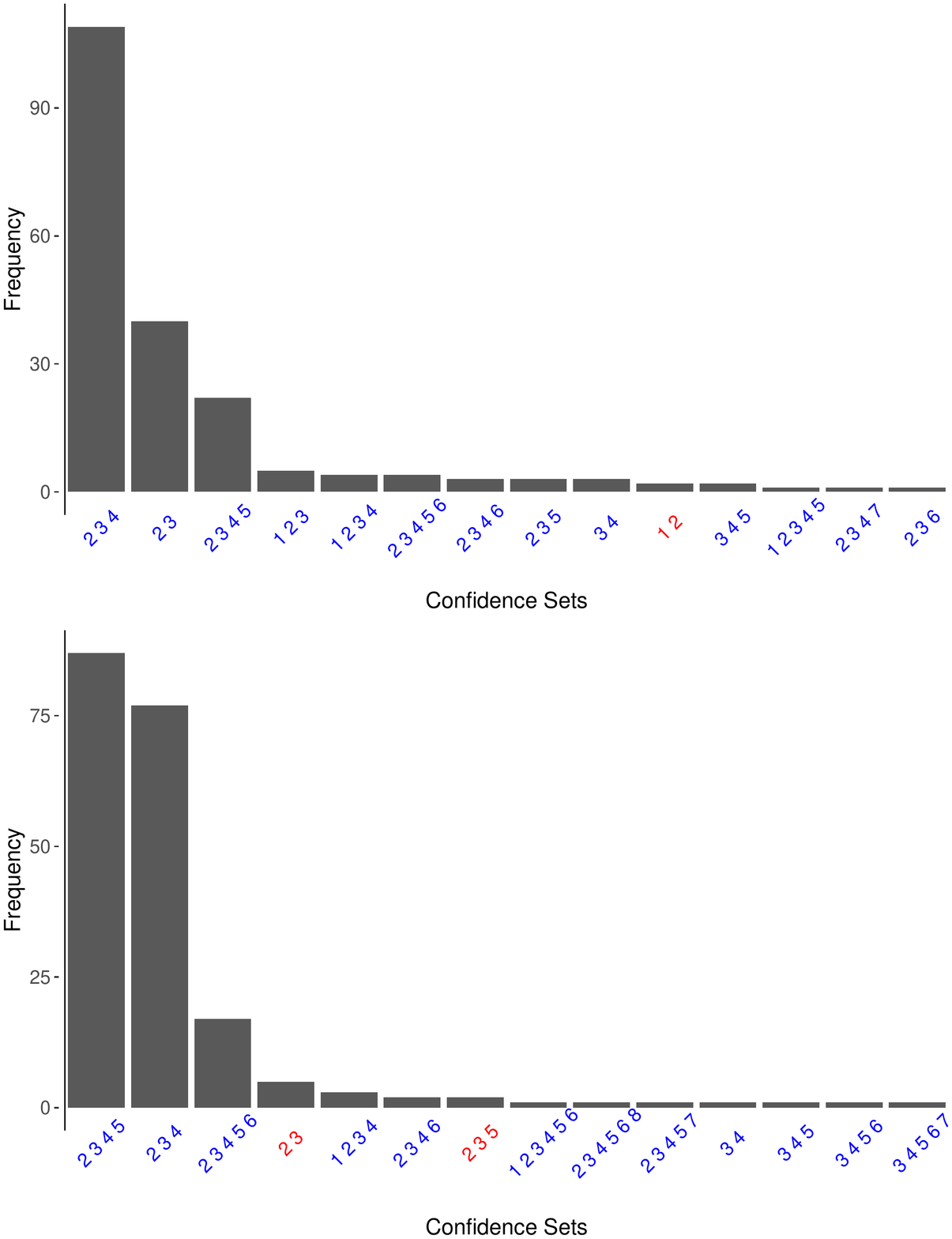}
    \caption{Bar plots of the 200 level 95\% repro samples confidence sets obtained in the simulation study (200 repetitions). 
    The upper panel is for the truth $\tau_0=3$; lower for the truth $\tau_0=4.$}
    \label{fig:simulation_mixture}
\end{figure}
It appears that for both settings of $\tau_0=3$ and $\tau_0=4$, the proposed confidence set for $\tau_0$ achieves  coverage rates of 99\% and 96.5\% respectively,  both greater than 95\%. Note that the parameter space for $\tau_0$ is discrete, therefore,  unlike confidence intervals for continuous parameters, it is not always possible to achieve an exact confidence level of 95\%. For simulated data with $\tau_0=3,$ the average size of the proposed confidence set is about 3, with a standard error of 0.05. In fact, we observe from Figure~\ref{fig:simulation_mixture} that for the majority of the 200 simulations, our approach produces a confidence set of $\{2,3,4\}.$ As for $\tau_0=4,$ more than 80\% of the times, we obtain a confidence set of either $\{2,3,4\}$ or $\{2,3,4,5\}$, making the average size of the confidence set around 3.65, with a standard error of 0.05. 

We compare the performance of the existing methods under the same simulation settings. First, the classical BIC method recovers the true value of $\tau_0 = 3$ only 13 times (6.5\%) out of the 200 repetitions.  
The number of times 
drops to $0$ (out of 200 repetitions) when the true $\tau_0=4.$ The results are not surprising, since with penalty terms point estimators of $\tau_0$ are often biased towards smaller $\tau$ values in practice, even though BIC estimator is a consistent estimator. 
These two extremely low rates of correctly recovering the true cluster sizes highlights the challenge of the traditional point estimation method in this simulation example, and more generally the risk of using just a point estimation in a normal mixture model. This result also suggests the necessity of  quantifying the uncertainty of the estimation using a confidence set with desired coverage rate. 

Additionally, when it comes to make an inference for $\tau_0$, we also illustrate the advantage of our proposed confidence set over the existing one-sided hypothesis testing approach in the literature. In practice, researchers would usually favor the smallest model that can not be rejected, as discussed in \cite{roeder_graphical_1994} and \cite{chen_inference_2012}. However, we observe from our simulation study that such practices could very well erroneously lead us to an overly small model, resulting in incorrect scientific interpretation. In particular, we find in our simulation study that, when the true $\tau_0=3$ and we perform the PLR test of \cite{chen_inference_2012} on $H_0: \tau_0=2$ versus 
$H_1: \tau_0 = 3$,
only 74\% times do we fail to reject $H_0: \tau_0=2$. When the true $\tau_0=4$ and we perform the PLR test on $H_0: \tau_0=3$ versus $H_1: \tau_0 = 4$,
we fail 93\% times  to reject $H_0: \tau_0=3$. These results  suggest that most of time the PLR test would favor a wrong model over the true one. Furthermore, inverting either of the one-sided tests does not produce a useful confidence set, since the upper bound of the confidence set is $\infty$. In conclusion, we believe that the proposed confidence set by the repro samples method is a more intact approach of conducting inference on $\tau_0$ compared to the one that  uses only hypothesis testing.

\begin{figure}[ht]
    \centering
    \includegraphics[width=\textwidth, height= 6cm]{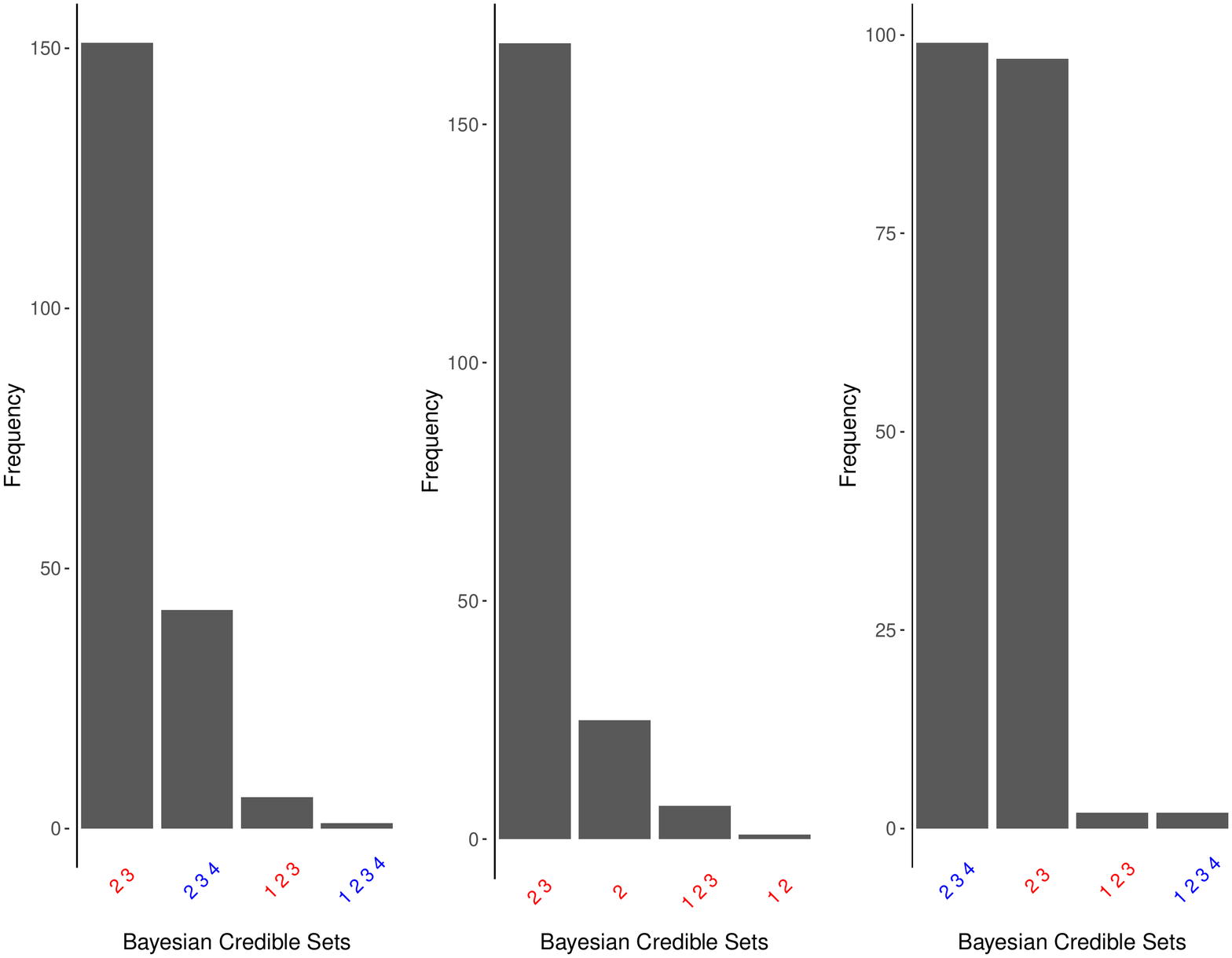}
    (a)\hspace{4.8cm}  (b)  \hspace{4.8cm} (c)
    \caption{Bar plots of the 200  95\% Bayesian credible sets obtained in the simulation study (200 repetitions) with $\tau_0 = 4$. Prior choices are (a) uniform; (b) Poisson($\mu = 1$); (c) Poisson($\mu = 5$); All priors truncated over $\{\tau=1, \dots, 10\}$.}
    \label{fig:bayes_simulation}
\end{figure}

Finally, we compare the performance of the Bayesian procedure described in \cite{richardson_bayesian_1997} on analysis of the Gaussian mixtures with unknown number of components. Figure~\ref{fig:bayes_simulation} summarizes the results from the Bayesian inference procedure produced by R package {\it mixAK} \citep{komarek2014capabilities}. The data is generated with the same setting as the lower panel of Figure~\ref{fig:simulation_mixture} with $\tau_0=4.$ The left panel of Figure~\ref{fig:bayes_simulation} illustrates the results by imposing uniform prior on $\tau = 1, \dots, 10.$ The middle panel displays results from Poisson(1) prior, which prefers smaller $\tau$, and the right panel from Poisson(5) prior, which favors the truth $\tau_0=4$ since the mode of Poisson(5) is 4. 
We can see that the Bayesian level-$95\%$ credible sets consistently underestimate and fail to cover $\tau_0$. With the uniform prior, the credible sets only covers $\tau_0$ 43 times out of the 200 simulations (21.5\%). While with a truncated Poisson(1) prior that favors smaller $\tau,$ the Bayesian approach completely fails to cover $\tau_0$ with none of the credible intervals including $\tau_0.$  Even with an ``oracle" truncated Poisson(5) prior that sides with the truth, the Bayesian approach  manages to cover the truth only $50.5\%$ of the times. An additional issue with the Bayesian method is that the outcomes are sensitive to the specification of the prior on $\tau,$ as evidenced by the differences among the three panels of Figure~\ref{fig:bayes_simulation}.

The underestimation of $\tau_0$ by the Bayesian method with a uniform prior for $\tau$ is a little surprising at the first appearance. However, further investigation reveals that the priors on $\*\mu$ and $\*\sigma$ also have a large impact on the estimation of $\tau$ and 
the underestimation is attributable to the shrinkage effect of implementing a multilevel hierarchical model (see \cite{richardson_bayesian_1997}  for details). Indeed, it is well known that shrinkage is ubiquitous when parameters are modeled hierarchically in Bayesian analysis. Here both the means and the variances of each component would share a common prior, thus over promoting smaller $\tau$ as observed across all four plots in Figure~\ref{fig:bayes_simulation}. 

In summary, we can conclude from the simulation study that the proposed repro samples method is effective and can always recover the true values of $\tau_0$ using confidence sets with guaranteed coverage rates. All existing methods, including the point estimation, existing hypothesis testing method and the Bayesian procedure, however regularly miss the true $\tau_0$, with outcomes almost always biased toward smaller $\tau$ values. Furthermore, credible sets by the Bayesian method do not have the repeated coverage rates and its outcomes are also very sensitive to the choices of priors on $\tau$ and $(\*\mu, \*\sigma)$.

\section{Further comments and discussion}

This article develops a general and effective framework, named repro samples method, to making inference for a wide range of problems, including many difficult open questions for which solutions were previously unavailable or could not be easily
obtained.
It utilizes artificial sample sets generated 
by mimicking the observed data to quantify inference uncertainty. 
Our development has explicit mathematical solutions in many settings, while others it uses a Monte-Carlo algorithm to achieve numerical solutions. The repro samples method 
has several theoretical guarantees. Theoretical discussions in the article include guaranteed frequentist coverage in various situations in both finite and large sample settings, relationship and advantages to the classical hypothesis testing procedure and associated optimality results, effective handling of nuisance parameters, and use of a data-driven candidate set to improve computing efficiency, etc.
We also discuss relationships to other 
artificial sample based approaches across Bayesian, fiducial and frequentist inferences and to that of the probabilistic IM method, with additional details provided in Appendix II.  
A distinct advantage of the proposed framework is that it does not need to rely on likelihood or large sample CLT to develop inference. It is particularly effective for a broad~collection of 
challenging inference problems, especially those involving discrete~target~parameters.

The repro samples method is broadly applicable. This article includes a number of examples to illustrate several different aspects of the development.  
It further includes a case study example to construct a finite-sample confidence set for the unknown number of components in a mixed normal model, which is a well-known unsolved problem in statistical inference. Although it appears simple, the case study example requires a number of procedures to deal with the discrete target parameter and also a large number of nuisance parameters of varying lengths. 
Beside the examples in this article, we have also applied the repro samples method to solve another difficult problem on post-selection inference in high dimensional models where the model selection uncertainty are quantified and also incorporated in the inference of regression coefficients. Due to space limitation, we summarize the results of the post-selection inference problem in a separate paper, in \cite{wang2022highdimenional}.  
Additional research involving discrete parameters (e.g., identification of root node(s) of a network, network membership, unknown number of clusters/mixtures, etc.) and also rare events data are  currently underway. The research results will be reported in separate articles.

Indeed, the repro samples method can greatly extend the scope and reach of statistical inference by considering broad types of parameters and models. 
Besides the regular type of  parameter space $\Theta$ that is continuous, the parameter space $\Theta$ in a repro samples method can be discrete or even mixed types. It sidesteps some regularity conditions commonly imposed in a traditional approach (e.g., the true $\*\theta_0$ is an inner point of parameter space $\Theta$;  \citealp[Chapters 7 and 8]{tCAS90a}). 
Furthermore, the algorithmic model (\ref{eq:1}) and its extension (\ref{eq:AA}) are very general. In addition to covering the traditional Fisherian model specifications of using density functions (as illustrated throughout the paper), 
they cover complex models such as those  specified in a sequence of differential equations in population genetics, astronomy, geology, etc., in which one  can simulate a data set from the model given the model parameter~$\btheta$. 
With the fast growth of data science, there is a pressing need to develop and expand the theoretical foundation for statistical inference and prediction. The development of the repro samples method is trans-formative and we anticipate that it will 
serve as a stepping stone to establish theoretical foundations for statistical inference and prediction 
in data science.

\bibliographystyle{agsm_nourl}
\bibliography{ref4all}

\appendix
\section*{Appendix I: Theoretical Proofs} 
\subsection*{I-A. Proofs of Theoretical Results in Section~\ref{sec:general}}
\begin{proof}[Proof of Theorem~\ref{thm:1}]

Since $\*Z = G_z( \btheta_0, \*U)$ always holds,  
Theorem~\ref{thm:1} follows from \eqref{eq:B} and 
    $\P\{\btheta_0 \in \Gamma_\alpha(\*Z)\} = \P\{ \*Z = G_z( \btheta_0, \*U),$ $T(\*U, \btheta_0) \in B_\alpha(\btheta_0)\} = \P\{T(\*U, \btheta_0) \in B_\alpha(\btheta_0)\}.$ 
\end{proof}

\begin{proof}[Proof of Corollary 1]

 Under $H_0$, we have $\theta_0 \in \Theta_0$ and thus
    \begin{align*} \P\{p(\*Z) \leq \gamma\}
    \leq  \P \left[\inf\left\{\alpha': \theta_0 \in \Gamma_{\alpha'}(\*Z)\right\} \geq 1 - \gamma \right]  
    = \P \left\{ \theta_0 \not\in \Gamma_{\alpha' }(\*Z),  %\mbox{ for any }  
    \forall \, \alpha' < 1 - \gamma \right\} \leq \gamma,
    \end{align*}
    for any significance level  $\gamma.$ Thus, the Type I error is controlled. 
\end{proof}

\begin{proof}[Proof of Theorem~\ref{thm:2}]

Since $g(\*Z, \btheta_0, \*U) =0$ always holds, Theorem~\ref{thm:2} follows from \eqref{eq:B} and 
%\begin{align*}
    $\P\{\btheta_0 \in \Gamma_\alpha(\*Z)\} = \P\{ g(\*Z, \btheta_0, \*U) =0, T(\*U, \btheta_0) \in B_\alpha(\btheta_0)\} = \P\{T(\*U, \btheta_0) \in B_\alpha(\btheta_0)\}.$ 
% \end{align*}
\end{proof}

\subsection*{I-B. Proofs of Theoretical Results in Section~\ref{sec:T}}

\begin{proof}[Proof of Lemma~\ref{Lemma:1}]
\begin{eqnarray}
\Gamma_{\alpha}(\*z_{obs}) 
& = & \big\{\btheta: \exists \*u^* \in {\cal U} \mbox{ s.t. }  \*z_{obs} = G_z({\btheta}, \*u^*), T(\*u^*, \btheta)  \in B_{\alpha}(\btheta) \big\}  \nonumber \\
& = &\big\{\btheta: \exists \*u^* \in {\cal U} \mbox{ s.t. }  \*z_{obs} = G_z({\btheta}, \*u^*), \tilde T(G_z({\btheta}, \*u^*), \btheta)  \in B_{\alpha}(\btheta) \big\} \nonumber
\\
& = &\big\{\btheta: \exists \*u^* \in {\cal U} \mbox{ s.t. }  \*z_{obs} = G_z({\btheta}, \*u^*), \tilde T(\*z_{obs}, \btheta)  \in B_{\alpha}(\btheta),  \big\} \nonumber % \label{eq:connect1}
\\
 & = &\big\{\btheta:  \*z_{obs} = G_z({\btheta}, \*u^*),  \exists \, \*u^* \in {\cal U} \big\} 
 \cap \big\{\btheta:  \tilde T(\*z_{obs}, \btheta)  \in B_{\alpha}(\btheta) \big\}
\nonumber
\\
& \subseteq & \big\{\btheta:  \tilde T(\*z_{obs}, \btheta)  \in B_{\alpha}(\btheta) \big\} \overset{denote}{=} \tilde \Gamma_{\alpha}(\*z_{obs}), \nonumber 
%\label{eq:connect}
\end{eqnarray}
The two sets  
$\Gamma_{\alpha}(\*z_{obs}) = \tilde \Gamma_{\alpha}(\*z_{obs})$, when $\tilde \Gamma_{\alpha}(\*z_{obs}) \subset \big\{\btheta:  \*z_{obs} = G_z({\btheta}, \*u^*), 
\exists \, \*u^* \in {\cal U} \big\}$.
\end{proof}

\begin{proof}[Proof of Corollary \ref{cor:UMA}]

 Let $\btheta_0$ be the true parameter with $\*z_{obs} = G_z(\btheta_0, \*u^{rel})$ (realized version) and $\*Z = G_z(\btheta_0, \*U)$ (random version). Let $\btheta' \not = \btheta_0$ be a false parameter value. 

(a) Since $\tilde
\Gamma_\alpha(\*Z)$ is an UMA confidence set, by the definition of UMA \citep[][Section 9.2.1]{tCAS90a}, we have 
$$
\P\left\{\btheta' \in \widetilde
\Gamma_\alpha(\*Z) \right\} \leq \P\left\{ \btheta' \in C_\alpha(\*Z) \right\},
$$
where $C_\alpha(\*z_{obs})$ is any level-$\alpha$ confidence set. However, by Lemma~\ref{Lemma:1}, we have $\Gamma_\alpha(\*Z) \subseteq \widetilde \Gamma_\alpha(\*Z)$. It follows that
$$
\P\left\{\btheta' \in 
\Gamma_\alpha(\*Z) \right\} \leq 
\P\big\{\btheta' \in \widetilde
\Gamma_\alpha(\*Z) \big\} \leq \P\left\{ \btheta' \in C_\alpha(\*Z) \right\}.
$$
So, the repro sample level-$\alpha$ confidence set $\Gamma_\alpha(\*z_{obs})$ is also UMA. 

(b) Since $\widetilde \Gamma_\alpha(\*Z)$ is unbiased, we have $\P\left\{\btheta' \in \widetilde \Gamma_\alpha(\*Z)\right\} \leq \alpha $, for any $\btheta' \not = \btheta_0$. By Lemma~\ref{Lemma:1}, we have $\Gamma_\alpha(\*Z) \subseteq \widetilde \Gamma_\alpha(\*Z)$, it follows that $\P\left\{\btheta' \in \Gamma_\alpha(\*Z)\right\} \leq \P\left\{\btheta' \in \widetilde \Gamma_\alpha(\*Z)\right\} \leq \alpha$, for any $\btheta' \not = \btheta_0$. So $\widetilde \Gamma_\alpha(\*Z)$ is also unbiased. The UMA part of proof is the same as in (a). 
\end{proof}

% \subsection*{Proof of Lemma~\ref{lemma:nuisance}}
\begin{proof}[Proof of Lemma~\ref{lemma:nuisance}] 

By \eqref{eq:nu}, we have $\P\left\{T^P(\*U, \btheta) \leq \alpha \right\} \ge \P\{\btheta \in \Gamma_{\alpha}(G_z(\*U, \btheta))\} \ge \alpha,$ 
where $\Gamma_{\alpha}(G_z(\*U, \btheta))$ is constructed with $T_a(\*u, \btheta) $ as in \eqref{eq:nu}.
\end{proof}

\begin{proof}[Proof of Theorem~\ref{the:nuisance}] 
By Lemma~\ref{lemma:nuisance},
    $\P\left\{\eta_0 \in \Xi_\alpha(\*Z)\right\} \geq \P\{T^P(\*U, \theta_0)\leq \alpha\} \geq \alpha.$
\end{proof}

\begin{proof}[Proof of Lemma~\ref{lem:nuisance_depth}]
Similar to Corollary 1, let $B_\alpha(\btheta) = \{\*t: F_{\mathcal V|D}\big(D_{{\cal S}_{\theta}}(\*t)\big) \geq 1-\alpha\}.$ Then $\Gamma_\alpha(\*z)$ in \eqref{eq:nu} is  
\begin{align*}
    \Gamma_\alpha(\*z) = \left\{\btheta: \exists \*u^*, \mbox{such that } z = G_z(\*u^*,\btheta),  F_{\mathcal V|D}\left(D_{{\cal S}_{\theta}}\big(T_a(\*u^*, \btheta)\big) \right)  \geq 1-\alpha \right\}.
\end{align*}
Therefore,
$\inf\limits_{\alpha'}\{\alpha': \widetilde\btheta \in \Gamma_{\alpha'}(G_z(\*u, \btheta))\}  =\inf\limits_{\alpha'}\big\{\alpha': \exists \*u^* \mbox{ s.t. }  G_z(\*u,\btheta) = G_z(\*u^*,\btheta),   \alpha \geq 1-F_{\mathcal V|D}\big(D_{{\cal S}_{\theta}}\big(T_a(\*u^*, \btheta)\big) \big) \big\} = \inf\limits_{\{\*u^*: G_z(\*u^*, \tilde\theta)= G_z(\*u, \tilde\theta)\}} \big\{1- F_{\mathcal V|D}\big(D_{{\cal S}_{\tilde\theta}}\big(T_a(\*u^*, \tilde\btheta) \big)\big)\big\}$.
\end{proof}

\begin{proof}[Proof of Corollary~\ref{cor:cand_cs_coverage}] 
By the fact that $\P\big\{T(\*U, \btheta_0) \in B_{\alpha}(\btheta_0)\big\} \geq \alpha,$ we have 
\begin{align*}
   \P\big\{T(\*U, \btheta_0) \in B_{\alpha}(\btheta_0)\big\} &  = \P\big\{T(\*U,\btheta_0) \in B_{\alpha}(\btheta_0), \btheta_0 \in \widehat\Theta\big\} +  \P\big\{T(\*U,\btheta_0) \in B_{\alpha}(\btheta_0), \btheta_0 \not\in \widehat\Theta\big\}  \\
   &  = \P\big\{\theta_0 \in \Gamma_{\alpha}'(\*Z)\big\} +  \P\big\{T(\*U,\btheta_0) \in B_{\alpha}(\btheta_0), \btheta_0 \not\in \widehat\Theta\big\} \geq \alpha.
\end{align*}
It then follows that
   $\P\big\{\theta_0 \in \Gamma_{\alpha}'(\*Z) \big\} \geq \alpha - \P\big\{T(\*U,\btheta_0) \in B_{\alpha}(\btheta_0), \btheta_0 \not\in \widehat\Theta\big\} \geq \alpha - \P\big\{\theta_0 \not\in \widehat\Theta\big\}$
and thus Corollary~\ref{cor:cand_cs_coverage}.
\end{proof}

\begin{proof}[Proof of Lemma~\ref{lemma:p_bound_C_D_general}]
Let $\P_{\*U, \mathcal V^*}$ denote the joint probability function with respect to $\*U$ and $\{\*U^*: \*U^* \in \mathcal V^*\},$ 
then  
{\small
\begin{align*}
& \P_{\*U, \mathcal V^*}\big\{\kappa_0 \not\in \widehat{\it \Upsilon}_{{\cal V}^*}(\*Z)\big\}  = \P_{\*U, \mathcal V^*}\big\{\kappa_0 \not\in \widehat{\it \Upsilon}_{{\cal V}^*}(\*Z), \bigcap_{\*U^* \in \mathcal V^*}\{\*U^* \not\in S_{\cal N}(\*U)\}\big\} \\ & \qquad \qquad  
 +  \P_{\*U, \mathcal V^*}\big\{\kappa_0 \not\in \widehat{\it \Upsilon}_{{\cal V}^*}(\*Z), \bigcup_{\*U^* \in \mathcal V^*}\{\*U^* \in S_{\cal N}(\*U)\}\big\}\\ & \qquad   = \P_{\*U, \mathcal V^*}\left\{\kappa_0 \not\in \widehat{\it \Upsilon}_{{\cal V}^*}(\*Z), \bigcap_{\*U^* \in \mathcal V^*}\{\*U^* \not\in S_{\cal N}(\*U)\}\right\} \leq \P_{\*U, \mathcal V^*}\left\{ \bigcap_{\*U^* \in \mathcal V^*}\{\*U^* \not\in S_{\cal N}(\*U)\}\right\}\\
& \qquad = E_{\*U}\left[\P_{\mathcal V^*|U}\left\{\bigcap_{\*U^* \in \mathcal V^*}\{\*U^* \not\in S_{\cal N}(\*U)\}\middle |\*U\right\}\right] =  E_{\*U}\left[\left\{1- \P_{\*U^*|\*U}\left(\*U^* \in S_{\cal N}(\*U)\middle |\*U\right)\right\}^{|\mathcal V^*|}\right]\\
& \qquad \leq \{1- E_{\*U}\P_{\*U^*|\*U}\left(\*U^* \in S_{\cal N}(\*U)\middle |\*U\right)\}^{|\mathcal V^*|} = \left\{1- \P_{\*U, \*U^*}\left(\*U^* \in S_{\cal N}(\*U)\right)\right\}^{|\mathcal V^*|} =  (1-P_{\cal N})^{|\mathcal V^*|}, \numberthis \label{eq: 1-P_delta}
\end{align*}
}
where the last inequality follows from Jensen's inequality. 
\end{proof}

\begin{proof}[Proof of Theorem~\ref{thm:p_bound_C_D_general}]
	By Lemma~\ref{lemma:p_bound_C_D_general} and Markov Inequality, 
	\begin{align*}
	\P_{\cal V^*}\left[\P_{\*U|\mathcal V^*}\{\kappa_0 \not\in \widehat{\it \Upsilon}_{{\cal V}^*}(\*Z)\} \geq (1-P_{\cal N})^{|\mathcal V^*|/2}\right] & \leq \frac{E_{\mathcal V^*}\P_{\*U|\mathcal V^*}\{\kappa_0 \not\in \widehat{\it \Upsilon}_{{\cal V}^*}(\*Z)\}}{(1-P_{\cal N})^{|\mathcal V^*|/2}}\\
	& \leq \frac{\P_{\*U,\mathcal V^*}\{\kappa_0 \not\in \widehat{\it \Upsilon}_{{\cal V}^*}(\*Z)\}}{(1-P_{\cal N})^{|\mathcal V^*|/2}} = (1-P_{\cal N})^{|\mathcal V^*|/2}.	
	\end{align*}
Theorem~\ref{thm:p_bound_C_D_general}(a) then follows by making $c<-\frac{1}{2}\log(1- P_{\cal N}).$ 

Theorem ~\ref{thm:p_bound_C_D_general}(b) follows immediately from Theorem~\ref{thm:p_bound_C_D_general}(a) Corollary~\ref{cor:cand_cs_coverage}.
\end{proof}
% \section*{Appendix III: Theoretical Proofs for Section~\ref{sec:mixture}}
\subsection*{I-C. Proofs of Theoretical Results in Section~\ref{sec:mixture}}
% \subsection*{Proof of Lemma~\ref{lem:F_tau_M}}

\begin{proof}[Proof of Lemma~\ref{lem:F_tau_M}]
Because given $(\tau, {\*M}_\tau)$,   $(\A_\tau(\*Y), \B_\tau(\*Y))$ are minimal sufficient for $(\*\mu, \*\sigma)$ and $\C_\tau(\*Y) = \C_\tau(\*U) $ is ancillary, it follows from the Basu's Theorem that 
$$\{\A_\tau(\*Y),  \B_\tau(\*Y)\} \perp \C_\tau(\*Y).$$ Also, when given  $\{\A_\tau(\*Y),  \B_\tau(\*Y)\} = (\*a, \*b),$ we have 
%\begin{align*}
 $\*Y %&  
% = \*M_\tau\A_\tau(\*Y) +  \diag\{\*M_\tau\B_\tau(\*Y)\}\C_\tau(\*Y)  
=\*M_\tau \*a+ \diag\{\*M_\tau \*b\}\C_\tau(\*Y) 
 %\\  & 
    = \*M_\tau \*a+ \diag\{\*M_\tau \*b\}\C_\tau(\*U).$
%\end{align*} 
Therefore,
%the conditional distribution of $\*Y|\A_\tau(\*Y)=\*a,\B_\tau(\*Y)=\*b$ is 
\begin{align}
\label{eq:Y_conditional_mixture}
   % \{\*M_\tau \*a+ \diag\{\*M_\tau \*b\}\C_\tau(\*U)|\A_\tau(\*Y)=\*a,\B_\tau(\*Y)=\*b\} 
   \*Y|\A_\tau(\*Y)=\*a,\B_\tau(\*Y)=\*b 
   \sim \*M_\tau \*a+ \diag\{\*M_\tau \*b\}\C_\tau(\*U),
\end{align}
which does not involve $(\*\mu, \*\sigma).$ We can then conclude that the conditional distribution $\hat\tau(\*Y)|\A_\tau(\*Y)=
\*a,\B_\tau(\*Y)=\*b$ is free of $(\*\mu, \*\sigma),$ from which Lemma~\ref{lem:F_tau_M} follows immediately. 
\end{proof}

% \subsection*{Proof of Theorem~\ref{the:T_a_mixture}}
\begin{proof}[Proof of Theorem~\ref{the:T_a_mixture}]
By the definition of $F_{(\tau, \*M_\tau)} (f| \*a,\*b)$, %for any $\*a,\*b$, 
we have 
%\begin{align*}  & 
   $\P\big\{\mathcal F_{(\tau, \*M_\tau)} (\hat\tau(\*Y)| \*a,\*b) \leq \alpha |\A_\tau(\*Y)=\*a, \B_\tau(\*Y) =\*b\big\} %\\& 
   = \sum_{\{f: \mathcal F_{(\tau, \*M_\tau)} (f| \*a,\*b) \leq \alpha \}} \P\left\{\hat\tau(Y) = f|\A_\tau(\*Y)=\*a, \B_\tau(\*Y) =\*b\right\}
    \geq \alpha,$
%\end{align*}
which implies \eqref{eq:conditional_borel_prob}. It then follows from the above that
$$\P\left\{\mathcal F_{(\tau, \*M_\tau)} (\hat\tau(Y)| \A_\tau(\*Y), \B_\tau(\*Y)) \leq \alpha |\A_\tau(\*Y), \B_\tau(\*Y) \right\} \geq \alpha.$$
Therefore, we have 
%\begin{align*}
 $   \P\big\{\tilde T_a(\*Y, \tau, \*M_\tau) \in B_{(\tau, \*M_\tau)}(\alpha)\big\} = \P\big\{\mathcal F_{(\tau, \*M_\tau)} (\hat\tau(Y)| \A_\tau(\*Y), \B_\tau(\*Y)) \leq \alpha \big\}  = \mathbb{E}_{\A_\tau(\*Y),\B_\tau(\*Y) }\big[\P\big\{\mathcal F_{(\tau, \*M_\tau)} (\hat\tau(Y)| \A_\tau(\*Y), \B_\tau(\*Y)) \leq \alpha |\A_\tau(\*Y), \B_\tau(\*Y) \big\}\big] \geq \alpha.
$ %\end{align*}
\end{proof}

\begin{proof}[Proof of Corollary~\ref{cor:mix4}]
By \eqref{eq:CIeta}, a level-$\alpha$ repro sampling confidence set based on the observed data $\*y_{obs}$ is
\begin{align}
    \Xi_\alpha(\*y_{obs})  & =  \big\{\tau: \exists \*u^* \in {\cal U}\mbox{ and $(\*M_\tau, \*\mu, \*\sigma)$ such that }  \*y_{obs} =  \*\mu + \diag\{\*M_\tau \*\sigma\}\*u^*, \nonumber \\ & \qquad\qquad T^P(\*u^*, \tau, \*M_\tau, \*\mu, \*\sigma) \leq \alpha\big\}\nonumber \\
    & = \left\{\tau: \widetilde T^P(\*y_{obs}, \tau) \leq \alpha\right\}
    \label{eq:tilde_t_y}\\
    &= 
    \left\{\tau: \min_{\widetilde{\*M}_\tau} \inf_{\alpha'}\{\alpha': (\tau, \widetilde{\*M}_\tau) \in \Gamma_{\alpha'}(\*y_{obs})\}
    \leq \alpha\right\}\nonumber\\
    & = \left\{\tau: \exists \widetilde{\*M}_\tau,  (\tau, \widetilde{\*M}_\tau) \in \Gamma_{\alpha}(\*y_{obs})\right\} \nonumber \\
     & =  \{\tau: \exists \widetilde{\*M}_\tau, \widetilde T_a(\*y_{obs}, \tau, \widetilde{\*M}_\tau) \in B_{\tau, \widetilde{\*M}_\tau}(\alpha)\} \nonumber\\
 & =   \{\tau: \exists \widetilde{\*M}_\tau, \mathcal F_{\tau, \widetilde{\*M}_\tau}(\hat\tau(\*y^{obs})|\widetilde\A_{\tau}(\*y_{obs}), \widetilde\B_{\tilde\tau}(\*y_{obs})) \leq \alpha \}.\nonumber
%  & \approx   \{(\tau, M_\tau): \mathcal F_{\mathcal V}(\hat\tau(\*y^{obs})|\A_\tau(\*y_{obs}), \B_\tau(\*y_{obs})) \leq \alpha \},
\end{align}
The equation in \eqref{eq:tilde_t_y} holds, since $\big\{\tau: \exists \*u^* \in {\cal U}\hbox{ and $(\*M_\tau, \*\mu, \*\sigma)$ such that }  \*y_{obs} =  \*\mu + $ $\diag\{\*M_\tau \*\sigma\} \*u^* \big\} = \{1, \ldots, n\}$ and $T^P(\*u, \tau, \*M_\tau, \*\mu, \*\sigma) = \widetilde T^P(\*y^*, \tau) = \widetilde T^P(\*y_{obs}, \tau)$ under matching $\*y^* = \*y_{obs}$ where $\*y^* = \*\mu + \diag\{\*M_\tau \*\sigma\} \*u^*$. 
\end{proof}

% \subsection*{Proof of Theorem~\ref{the:Y_s_mixture}}
\begin{proof}[Proof of Lemma~\ref{lemma:Y_s_mixture}]
By \eqref{eq:Y_conditional_mixture}, 
\begin{align*}
              \{\widetilde{\*M}_\tau \*a+ \diag\{\widetilde{\*M}_\tau \*b\}\widetilde{\C}_\tau(\*U)|\widetilde{\A}_\tau(\widetilde{\*Y})=\*a,\widetilde{\B}_\tau(\widetilde{\*Y})=\*b\} \sim \widetilde{\*M}_\tau \*a+ \diag\{\widetilde{\*M}_\tau \*b\}\widetilde{\C}_\tau(\*U),
\end{align*}
from which Lemma~\ref{lemma:Y_s_mixture} follows immediately.
\end{proof}

% \subsection*{Proof of Lemma~\ref{lem:bound_single_ustar}}
In order to prove Theorem~\ref{thm:bound_of_C_d_mixture}, we will introduce several other technical lemmas first. 
\begin{lemma}
\label{lem::angle} Let $\rho(\*v_1, \*v_2) = \frac{\*v^\top_1\*v_2}{\|\*v_1\|\|\*v_2\|}$ and  $\rho_{\*M_\tau^\bot}(\*v_1, \*v_2) = \frac{\*v_1^\top (I-\*H_\tau) \*v_2}{\|(I-\*H_\tau)\*v_1\|\|(I-\*H_\tau) \*v_2\|}$ for any vectors $\*v_1, \*v_2$. Suppose $\tau < n$. Let $\*W = \diag(\*U)\*M_0 \*\tau_0,$ then for any $0 \leq \gamma_1, \gamma \leq 1,$ 
\begin{align}
\label{eq: angle_U_M0}
    P_{\*U}\left\{\rho_{\*M_\tau^\bot}(\*U, \*M_0\*\mu_0)< \gamma_1^2\right\}>  1-   2 \{\arccos (\gamma_1)\}^{n-|\tau|-1},
\end{align}
and 
\begin{align}
\label{eq:angle_U_Ustar}
    P_{(\*U, \*U^*)}\{\rho(\*W, \*U^*) > 1-\gamma^2\} > \frac{\gamma^{n-2}\arcsin (\gamma)}{n-1}. 
\end{align}
Further more $\rho_{\*M_\tau^\bot}(\*U, \*M_0\*\mu_0)$ and $\rho(\*W, \*U^*)$ are independent. 
\end{lemma}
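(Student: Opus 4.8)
The plan is to use the rotational invariance of the standard Gaussian to recast each $\rho$ as the cosine of the angle between a uniform random direction on a sphere and a fixed pole, and then to read off all three claims from spherical-cap estimates. First I would record the two reductions. Because $\*U\sim N(\*0,\*I_n)$, the projected vector $(\*I-\*H_\tau)\*U\sim N(\*0,\*I-\*H_\tau)$ is isotropic on the $(n-\tau)$-dimensional subspace $\mathrm{range}(\*I-\*H_\tau)$, so its direction is uniform on the unit sphere $S^{n-\tau-1}$ of that subspace; since $\*I-\*H_\tau$ is symmetric and idempotent, $\rho_{\*M_\tau^\bot}(\*U,\*M_0\*\mu_0)=\cos\phi$, where $\phi$ is the angle between $(\*I-\*H_\tau)\*U$ and the fixed nonzero vector $(\*I-\*H_\tau)\*M_0\*\mu_0$. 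Likewise, conditioning on $\*U$ fixes $\*W$ (which is nonzero almost surely), and since $\*U^*\sim N(\*0,\*I_n)$ is isotropic in $\RR^n$, we have $\rho(\*W,\*U^*)=\cos\psi$ with $\psi$ the angle between a uniform direction on $S^{n-1}$ and $\*W/\|\*W\|$. In both cases the polar angle has density proportional to $(\sin\theta)^{d-2}$ on $[0,\pi]$ (with $d=n-\tau$ and $d=n$ respectively), so every probability in the statement is a cap ratio $\int_0^{\beta}\sin^{d-2}\theta\,d\theta\big/\int_0^{\pi}\sin^{d-2}\theta\,d\theta$.

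Next I would estimate the two caps. For \eqref{eq: angle_U_M0} the complementary event is $\{\rho_{\*M_\tau^\bot}\ge\gamma_1^2\}=\{\phi\le\arccos(\gamma_1^2)\}$, a cap of angular radius $\arccos(\gamma_1^2)$ on $S^{n-\tau-1}$. Using $\sin\theta\le\theta$ in the numerator gives $\int_0^{\arccos(\gamma_1^2)}\sin^{d-2}\theta\,d\theta\le(\arccos(\gamma_1^2))^{\,d-1}/(d-1)$ with $d=n-\tau$; controlling $\arccos(\gamma_1^2)$ in terms of $\arccos(\gamma_1)$ and bounding the normalizer $\int_0^\pi\sin^{d-2}\theta\,d\theta$ from below then yields an upper bound of the advertised form $2\{\arccos(\gamma_1)\}^{n-\tau-1}$, and \eqref{eq: angle_U_M0} follows by complementation. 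For \eqref{eq:angle_U_Ustar} I would produce a matching lower bound. Since $\cos(\arcsin\gamma)=\sqrt{1-\gamma^2}\ge 1-\gamma^2$, the smaller cap $\{\psi\le\arcsin\gamma\}$ is contained in $\{\rho(\*W,\*U^*)>1-\gamma^2\}$, so it suffices to lower-bound the measure of the smaller cap; the concavity estimate $\sin\theta\ge(\gamma/\arcsin\gamma)\,\theta$ on $[0,\arcsin\gamma]$ gives $\int_0^{\arcsin\gamma}\sin^{n-2}\theta\,d\theta\ge\gamma^{n-2}\arcsin(\gamma)/(n-1)$, and bounding the normalizer $\int_0^\pi\sin^{n-2}\theta\,d\theta$ from above delivers exactly $\gamma^{n-2}\arcsin(\gamma)/(n-1)$.

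The independence assertion needs no cap estimate. The quantity $\rho_{\*M_\tau^\bot}(\*U,\*M_0\*\mu_0)$ is a function of $\*U$ alone, hence $\sigma(\*U)$-measurable. Conditionally on $\*U$ the vector $\*W$ is a fixed, almost surely nonzero vector, and by the rotational invariance of $\*U^*$ the conditional law of $\rho(\*W,\*U^*)=\cos\psi$ is the fixed ``cosine on $S^{n-1}$'' law, which does not depend on the value of $\*U$. A conditional distribution that is constant in the conditioning variable forces $\rho(\*W,\*U^*)$ to be independent of $\*U$, and therefore independent of the $\sigma(\*U)$-measurable $\rho_{\*M_\tau^\bot}(\*U,\*M_0\*\mu_0)$.

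I expect the genuine difficulty to lie entirely in the quantitative second step, the structural first and third steps being essentially forced by rotational invariance. Matching the precise factor $2$ and the exact powers $n-\tau-1$ and $n-2$ requires two-sided control of the normalizing integral $\int_0^\pi\sin^{d-2}\theta\,d\theta$ together with the elementary bounds $\sin\theta\le\theta$ and the concavity inequality; in particular the comparison between $\arccos(\gamma_1^2)$ and $\arccos(\gamma_1)$ behind \eqref{eq: angle_U_M0} and the containment argument feeding \eqref{eq:angle_U_Ustar} are where the delicate constant-chasing lives. The reductions make clear which cap must be estimated and in which direction, but turning those caps into the explicit closed forms quoted in the statement is the only technically demanding part of the argument.
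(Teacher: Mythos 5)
Your treatment of \eqref{eq:angle_U_Ustar} and of the independence claim is essentially the paper's own argument: the paper also conditions on $\*W=\*w$, passes to the polar angle $\psi$ with density proportional to $\sin^{n-2}\psi$, uses the concavity bound $\sin s\ge s\gamma/\arcsin(\gamma)$ on $[0,\arcsin(\gamma)]$ to obtain the lower bound $\gamma^{n-2}\arcsin(\gamma)/(n-1)$, and deduces independence from the fact that the conditional law of $\rho(\*W,\*U^*)$ given $\*U$ does not depend on $\*U$. For \eqref{eq: angle_U_M0}, by contrast, the paper gives no argument at all; it simply cites Lemma 2 of \cite{wang2022highdimenional}. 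Your attempt to prove that inequality from scratch therefore goes beyond what the paper records, and it is there that a genuine gap appears.

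The gap sits exactly where you place the ``delicate constant-chasing,'' and it is not a matter of constants. Reading $\rho_{\*M_\tau^\bot}$ literally as the (unsquared) cosine, the complementary event of \eqref{eq: angle_U_M0} is the one-sided cap $\{\phi\le\arccos(\gamma_1^2)\}$, and for $\gamma_1$ near $1$ one has $\arccos(\gamma_1^2)\approx\sqrt{2}\,\arccos(\gamma_1)$. The measure of that cap is then of order $(\sqrt{2})^{\,n-\tau-1}\{\arccos(\gamma_1)\}^{n-\tau-1}$ divided by a normalizer that is only polynomial in $n-\tau$, which exceeds $2\{\arccos(\gamma_1)\}^{n-\tau-1}$ once $n-\tau$ is even moderately large (by a factor of roughly $30$ already at $n-\tau=20$); no two-sided control of $\int_0^\pi\sin^{n-\tau-2}\theta\,d\theta$ can close the step ``controlling $\arccos(\gamma_1^2)$ in terms of $\arccos(\gamma_1)$.'' The stated bound is in fact the \emph{two-cap} estimate for the event that the absolute correlation is at least $\gamma_1$, i.e.\ it corresponds to reading $\rho_{\*M_\tau^\bot}$ as the \emph{squared} correlation. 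That reading is the one actually used downstream (in the proof of Lemma~\ref{lemma:u_d_sufficent}, the hypothesis $\rho_{\*M_\tau^\bot}(\*U^*,\*M_0\*\mu_0)<\gamma_1^2$ is converted into the linear bound $\*U^{*\top}(I-\*H_\tau)\*M_0\*\mu_0\le\gamma_1\|(I-\*H_\tau)\*U^*\|\,\|(I-\*H_\tau)\*M_0\*\mu_0\|$, which requires the correlation itself to be at most $\gamma_1$) and is how the cited lemma is formulated. With that correction your machinery does deliver the claim: two caps each of angular radius $\arccos(\gamma_1)$, the bound $\sin\theta\le\theta$ in the numerator, and the observation that $(n-\tau-1)\int_0^\pi\sin^{n-\tau-2}\theta\,d\theta\ge1$.
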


\begin{proof}[Proof of Lemma~\ref{lem::angle}]
The inequality \eqref{eq: angle_U_M0} follows from Lemma 2 of \cite{wang2022highdimenional}.
 
For \eqref{eq:angle_U_Ustar}, we carry out a spherical transformation on $\*U^*.$ Then conditional on $\*W = \*w$,  we can show that 
\begin{align}\label{eq:u_epsi}
    & %P_{ \epsilon}\left\{\|(I-P_{\*u})\epsilon\|^2 <\gamma^2 \|\epsilon\|^2 \bigg| \*u \right\}
    %\nonumber\\& \qquad = 
    P_{\*U^*|\*W}\left\{{\|\*W^T \*U^*\|}\big/{(\|\*W\|\| \*U^*\|)} > \sqrt{1 - \gamma^2} \bigg| \*W^*=\*w \right\}\nonumber\\
    & \qquad =P_{ \psi}\left\{  |\sin(\psi)|  < \gamma  \right\} \nonumber \\
    & \qquad = \frac{2}{c_1}\int_0^{\arcsin \gamma} \sin^{n-2}(s) ds  \nonumber \\
    & \qquad >  \frac{2}{c_1}\int_0^{\arcsin \gamma} (\frac{s \gamma }{\arcsin \gamma})^{n-2} ds> \frac{\gamma^{n-2}\arcsin \gamma}{n-1}, 
\end{align}
where $\psi = \psi(\*U^*, \*w)$ (or $\pi - \psi$) is the angle between $\*U^*$ and $\*w$ and the normalizing constant $c_1 = \int_0^{\pi} \sin^{n-2} (\psi) d\psi = 2 \int_0^{\frac \pi 2} \sin^{n-2} (\psi) d\psi$. 
Because \eqref{eq:u_epsi} does not involve $\*U^*$ and $\*w$, we have
\begin{align*}
  P_{(\*U, \*U^*)}\big\{ \rho(\*U, \*W) > 1- \gamma^2 \big\} & =  E_{\*W}\left[ P_{\*U^*|\*W}\left\{{\|\*W^T \*U^*\|}\big/{(\|\*W\|\| \*U^*\|)} > \sqrt{1 - \gamma^2} \bigg| \*W^*\right\}\right]
  \\ & > \frac{\gamma^{n-2}\arcsin \gamma}{n-1}.
\end{align*} 
The above statement also suggests that $\rho(\*U^*,\*W)$ and $\*U$ are independent, hence $\rho(\*U^*,\*W)$ and   $\rho_{\*M_\tau^\bot}(\*U, \*M_0\*\mu_0)$ are independent.

% \begin{align*}
%     P_{ \epsilon}\left\{ \rho(\*U, \epsilon) > 1- \gamma^2 \bigg| \, \*U \in A(\gamma^2_1, \tau)  \right\} = P_{(\*U, \epsilon)}\big\{ (\*U, \epsilon) \in B(\gamma^2) \big\}, 
% \end{align*}
% thus $\big\{ (\*U, \epsilon) \in B(\gamma^2) \big\}$ and $\big\{ \*U \in A(\gamma^2_1, \tau) \big\}$ are independent. 

% {\color{blue}Further by the spherical transformation of $\epsilon$, $\|\epsilon\|^2$ and $\psi$ in \eqref{eq:u_epsi} are independent. Because $\big\{ (\*U, \epsilon) \in B(\gamma^2) \big\} \bigcap \big\{ \*U \in A(\gamma^2_1, \tau) \big\}$ are only decided by $\psi$, $\|\epsilon\|^2$ are independent with $\big\{ (\*U, \epsilon) \in B(\gamma^2) \big\} \bigcap \big\{ \*U \in A(\gamma^2_1, \tau) \big\}$.} 
\end{proof} 

\begin{lemma}\label{lemma:u_d_sufficent}
Let $g_{\*M_\tau}(\*U)= \frac{\|(I-H_{\*M_\tau})\*U\|}{\|\*U\|},$  $\tilde\gamma_1=(1-\sqrt{\tilde\gamma})\gamma_1 - \sqrt{2-2\sqrt{1-\tilde\gamma^2}}$ and $ \tilde\gamma = \frac{\sigma_{\max}}{\sigma_{\min}} \gamma.$  If $g_{\*M_\tau}(\*U) > \sqrt{\tilde\gamma}, \rho_{\*M_\tau^\bot}(\*U, \*M_0\*\mu_0)< \tilde \gamma_1^2$ and $\rho(\*W, \*U^*) > 1-\gamma^2,$ then
\begin{align*}
    \rho_{\*M_\tau^\bot}(\*U^*, \*M_0\*\mu_0)<  \gamma_1^2.
\end{align*}

% \begin{align*}
% C^1_{\gamma_1, \gamma} = \left\{\rho_{\*M_\tau^\bot}(\*U, \*M_0\*\mu_0)< \tilde \gamma_1^2,\rho(\*W, \*U^*) > 1-\gamma^2, \right\}
% \end{align*}

% \begin{align*}
%     \{\rho_{\*M_\tau^\bot}(\*U, \*M_0\*\mu_0)< \tilde \gamma_1^2,\rho(\*W, \*U^*) > 1-\gamma^2 \} \subset \{\rho_{\*M_\tau^\bot}(\*U^*, \*M_0\*\mu_0)< \gamma_1^2\}
% \end{align*}    

%     Let $$E(\gamma_1, \gamma)= \left\{ \max_{\*M_\tau \neq \*M_0, \tau\leq \tau_0} \rho_{\*M_\tau^\bot}(\*U, \*M_0\*\mu_0) < \gamma_1^2, \rho(\*U, \*w) > 1- \gamma^2  \right\},$$
% for any $0<\gamma_1, \gamma <1.$ Then under Condition C2., for $\*U_i \sim N(0,\*I), i=1, \dots, d$ and $\*w \sim N(0,\diag(\*M_0\*\tau_0^{\circ 2})),$ 
% \begin{align}
% \label{eq:bound_E_compliment}
% \begin{split}
%       & P_{({\cal U}^d, \*w)}\left( \bigcap_{i=1}^d \left[\left\{(\*U_i, \*w) \in E(\gamma_1, \gamma)\right\}^C\right] \right)\\
%       &   \leq  \left\{1- \frac{\gamma^{n-2}\arcsin (\gamma)}{n-1}\right\}^d  + 4(\arccos \tilde\gamma_1)^{n-\tau_0-1} n^{\tau_0} + 2\left(\frac{\sigma^2_{\max}\gamma\tau_0}{\sigma^2_{\min}(n-\tau_0) }\right)^{n-\tau_0-1} n^{\tau_0} \\
%       & = \bar P(\gamma_1, \gamma,d).
% \end{split}
% \end{align}
% where  $\tilde\gamma_1=(1-\sqrt{\gamma})\gamma_1 - \sqrt{2-2\sqrt{1-\gamma^2}}.$ 
\end{lemma}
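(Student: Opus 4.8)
The statement is a deterministic claim in spherical geometry: under its three hypotheses it transfers the near-orthogonality (in the $\*M_\tau$-residual space) of $\*U$ with $\*M_0\*\mu_0$ to the corresponding near-orthogonality for $\*U^*$. The plan is to read all of the quantities $\rho_{\*M_\tau^\bot}(\cdot,\*M_0\*\mu_0)$ and $\rho(\*W,\*U^*)$ as (co)sines of angles on the unit sphere and then chain a few elementary angle/chord inequalities, closing with the spherical triangle inequality. Throughout I write $P = I-\*H_\tau$ for the residual projection, $D = \diag(\*M_0\*\sigma_0)$ so that by \eqref{eq:mixture_matrix} we have $\*W = \diag(\*U)\*M_0\*\sigma_0 = D\*U$ with $\sigma_{\min}\le D_{ii}\le \sigma_{\max}$, and $\*q = P\*M_0\*\mu_0/\|P\*M_0\*\mu_0\|$ for the unit residual direction of the true mean. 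Note that, since $P$ is symmetric idempotent, $\rho_{\*M_\tau^\bot}(\*U,\*M_0\*\mu_0) = \langle \widehat{P\*U},\*q\rangle$ with $\widehat{P\*U} = P\*U/\|P\*U\|$, and similarly for $\*U^*$.

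First I would reduce the conclusion to a bound on the chord $\|\widehat{P\*U^*}-\widehat{P\*U}\|$ between the two projected unit vectors. The near-orthogonality hypothesis $\rho_{\*M_\tau^\bot}(\*U,\*M_0\*\mu_0)<\tilde\gamma_1^2$ says that $\widehat{P\*U}$ sits within a thin band around the hyperplane orthogonal to $\*q$, and the target $\rho_{\*M_\tau^\bot}(\*U^*,\*M_0\*\mu_0)<\gamma_1^2$ is the same band, wider by the slack between $\gamma_1$ and $\tilde\gamma_1$. By the triangle inequality for geodesic distance on the sphere, $|\langle\widehat{P\*U^*},\*q\rangle|$ can exceed $|\langle\widehat{P\*U},\*q\rangle|$ by at most $\|\widehat{P\*U^*}-\widehat{P\*U}\|$, so it suffices to control this chord by the quantity $\sqrt{2-2\sqrt{1-\tilde\gamma^2}}$ that appears in the definition of $\tilde\gamma_1$, after accounting for the normalizing factor $(1-\sqrt{\tilde\gamma})$.

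I would build the chord bound in two moves. The alignment hypothesis $\rho(\*W,\*U^*)>1-\gamma^2$ places $\widehat{\*U^*}$ and $\widehat{\*W}$ within chord $\sqrt2\,\gamma$ of each other; passing from $\*W=D\*U$ back to $\*U$ then costs a factor governed by the condition number $\sigma_{\max}/\sigma_{\min}$ of $D$, which is exactly how the raw $\gamma$ inflates into $\tilde\gamma=(\sigma_{\max}/\sigma_{\min})\gamma$ and yields the ambient chord $\sqrt{2-2\sqrt{1-\tilde\gamma^2}}$ between $\widehat{\*U^*}$ and $\widehat{\*U}$. I would then push this ambient chord through the $1$-Lipschitz map $P$ and renormalize; here the residual-norm hypothesis $g_{\*M_\tau}(\*U)=\|P\*U\|/\|\*U\|>\sqrt{\tilde\gamma}$ lower-bounds the length of the projected vector and hence caps the blow-up of normalization, producing the factor $(1-\sqrt{\tilde\gamma})$. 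Substituting $\tilde\gamma_1=(1-\sqrt{\tilde\gamma})\gamma_1-\sqrt{2-2\sqrt{1-\tilde\gamma^2}}$ then closes $\rho_{\*M_\tau^\bot}(\*U^*,\*M_0\*\mu_0)<\gamma_1^2$.

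The main obstacle is the middle move: the diagonal rescaling $D$ and the residual projection $P$ do not commute, so I cannot simply assert that $\*U^*$ near $\*W=D\*U$ forces $\widehat{P\*U^*}$ near $\widehat{P\*U}$. The delicate point is that projecting can amplify angles arbitrarily whenever the projected component is short, which is precisely why the hypothesis $g_{\*M_\tau}(\*U)>\sqrt{\tilde\gamma}$ is imposed; keeping the amplification factor tied to $\sqrt{\tilde\gamma}$ while simultaneously tracking the scaling distortion from $D$, so that the constants assemble into exactly the stated $\tilde\gamma_1$ rather than a looser surrogate, is the part that needs care. I would manage it by carrying out the scaling step and the projection step on honestly normalized vectors and invoking the elementary renormalization bound $\|\widehat{\*x}-\widehat{\*y}\|\le 2\|\*x-\*y\|/\max(\|\*x\|,\|\*y\|)$ at each stage.
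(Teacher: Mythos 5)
Your outline identifies the same three ingredients the paper uses, in the same roles: (i) the alignment hypothesis $\rho(\*W,\*U^*)>1-\gamma^2$ is converted into an alignment of $\*U$ and $\*U^*$ at the cost of the condition number $\sigma_{\max}/\sigma_{\min}$ of $\diag(\*M_0\*\sigma_0)$, which is exactly where $\tilde\gamma=(\sigma_{\max}/\sigma_{\min})\gamma$ comes from and yields the chord bound $\bigl\|\*U^*/\|\*U^*\|-\*U/\|\*U\|\bigr\|\le\sqrt{2-2\sqrt{1-\tilde\gamma^2}}$; (ii) that chord is pushed through the contraction $I-\*H_{\*M_\tau}$; (iii) the renormalization after projection is controlled by $g_{\*M_\tau}(\*U)>\sqrt{\tilde\gamma}$. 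So conceptually this is the paper's argument.

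The gap is in how you propose to execute step (iii). Bounding the chord between the projected unit vectors by the generic inequality $\|\widehat{\*x}-\widehat{\*y}\|\le 2\|\*x-\*y\|/\max(\|\*x\|,\|\*y\|)$ gives at best $2\sqrt{2-2\sqrt{1-\tilde\gamma^2}}/\sqrt{\tilde\gamma}$, and feeding that into your triangle inequality proves a conclusion in which the chord term is inflated by $2/\sqrt{\tilde\gamma}$. This does not recover the stated relation $\tilde\gamma_1=(1-\sqrt{\tilde\gamma})\gamma_1-\sqrt{2-2\sqrt{1-\tilde\gamma^2}}$, where the chord enters with coefficient essentially $1$ and the renormalization loss appears only as the multiplicative deflation $(1-\sqrt{\tilde\gamma})$ of $\gamma_1$; you would obtain a strictly weaker lemma with a different $\tilde\gamma_1$. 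The paper avoids this loss by never bounding the projected chord as a single object: it telescopes the bilinear form ${\*U^*}^{\top}(I-\*H_{\*M_\tau})\*M_0\*\mu_0/\|(I-\*H_{\*M_\tau})\*U^*\|$ into four exact differences, bounds the middle difference by the \emph{unprojected} chord (so no $1/\sqrt{\tilde\gamma}$ appears there), keeps the term $(1-g_{\*M_\tau}(\*U^*))$ times the unknown target on the right-hand side and moves it to the left (a self-bounding step), and separately establishes the two-sided comparison $|g_{\*M_\tau}(\*U^*)-g_{\*M_\tau}(\*U)|\le\tilde\gamma$, whence $g_{\*M_\tau}(\*U^*)/g_{\*M_\tau}(\*U)\ge 1-\tilde\gamma/g_{\*M_\tau}(\*U)\ge 1-\sqrt{\tilde\gamma}$. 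That bookkeeping is precisely the part you flag as "needing care" but do not supply, and it is where the quantitative content of the lemma lives. (A minor point you inherit rather than introduce: both your sketch and the paper's proof actually manipulate the unsquared thresholds $\tilde\gamma_1,\gamma_1$ even though the statement is written with $\tilde\gamma_1^2,\gamma_1^2$.)
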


\begin{proof}[Proof of Lemma~\ref{lemma:u_d_sufficent}]
% With a slight abuse of notation, for any $\*M_\tau$, we let $\rho(\*w, \*M_\tau) = \frac{\|H_{\*M_\tau}\*w\|^2}{\|\*w\|^2}$ be the square of cosine of the angle between $\*w$ and the space spanned by $\*M_\tau. $

% We first show that for any $\*U \sim N(0,1),$ we have 
% \begin{align}
% \label{eq:tilde}
%       P\left( \left\{(\*U, \*w) \in E(\gamma_1, \gamma)\right\}^C \middle|\*w, \max_{\*M_\tau \neq \*M_0, \tau\leq \tau_0} \rho_{\*M_\tau^\bot}(\*U, \*M_0\*\mu_0) < \tilde\gamma_1^2, \rho(\*w, \tau) < 1- \gamma^2 \right) \nonumber \\
%       =  1-  P\left\{\rho(\*U, \*w) > 1- \gamma^2\right\}
%     \leq 1- \frac{\gamma^{n-2}\arcsin (\gamma)}{n-1}.
% \end{align}
First, by  $\rho(\*W, \*U^*) > 1-\gamma^2,$
\begin{align*}
    \gamma^2 > \frac{\|(\*I- \*H_{\*W})\*U^*\|^2}{\|\*W\|^2} > \frac{\sigma^2_{\min} \|(\*I -\*H_{\*U})\*U^*\|^2}{\sigma^2_{\max}\|\*U\|^2} = \frac{\sigma^2_{\min}}{\sigma^2_{\max}}\{1 - \rho(\*U, \*U^*)\}.
\end{align*}
Therefore $\rho(\*U, \*U^*) > 1 - \frac{\sigma^2_{\max}}{\sigma^2_{\min}} \gamma^2 = 1- \tilde\gamma^2.$

 It then follows from the above that
\begin{align*}
     & \frac{1}{\|(I-H_{\*M_\tau}){\*U^*}\|}{\*U^*}^{T}(I-H_{\*M_\tau})\*M_0\*\mu_0\\ 
     & = \frac{1}{\|(I-H_{\*M_\tau})\*U\|}\*U^T(I-H_{\*M_\tau})\*M_0\*\mu_0 +  \left(\frac{1}{\|\*U\|} - \frac{1}{\|(I-H_{\*M_\tau})\*U\|}\right)(\*U^T(I-H_{\*M_\tau})\*M_0\*\mu_0\\
     & \hspace{2cm} + \left({\*U^*}^T/\|{\*U^*}\|-\*U^T/\|\*U\|\right)(I-H_{\*M_\tau})\*M_0\*\mu_0   \\
     & \hspace{4cm} +  \left(\frac{1}{\|(I-H_{\*M_\tau}){\*U^*}\|}-\frac{1}{\|{\*U^*}\|}\right){\*U^*}^T(I-H_{\*M_\tau})\*M_0\*\mu_0\\
     & \leq \frac{\|(I-H_{\*M_\tau})\*U\|}{\|\*U\|}\frac{1}{\|(I-H_{\*M_\tau})\*U\|}\*U^T(I-H_{\*M_\tau})\*M_0\*\mu_0  \\
     & \hspace{2cm} +  \left\|\frac{{\*U^*}^T}{\|{\*U^*}\|}-\frac{\*U^T}{\|\*U\|}\right\|\|(I-H_{\*M_\tau})\*M_0\*\mu_0\| \\
     &  \hspace{4cm} + \frac{\|{\*U^*}\|-\|(I-H_{\*M_\tau}){\*U^*}\|}{\|{\*U^*}\|}\frac{1}{\|(I-H_{\*M_\tau}){\*U^*}\|}{\*U^*}^T(I-H_{\*M_\tau})\*M_0\*\mu_0 \\
     &  \leq g_{\*M_\tau}(\*U)\tilde\gamma_1\|(I-H_{\*M_\tau})\*M_0\*\mu_0 \|  +  \sqrt{2-2\frac{{\*U^*}^t \*U}{\|{\*U^*}^t\| \|\*U\|}}\|(I-H_{\*M_\tau})\*M_0\*\mu_0\| \\ 
     & \hspace{2cm} +(1-g_{\*M_\tau}({\*U^*}))\frac{1}{\|(I-H_{\*M_\tau}){\*U^*}\|}{\*U^*}^T(I-H_{\*M_\tau})\*M_0\*\mu_0\\
     &   \leq g_{\*M_\tau}(\*U) \tilde\gamma_1\|(I-H_{\*M_\tau})\*M_0\*\mu_0 \|  +  \sqrt{2-2\sqrt{1-\tilde\gamma^2}}\|(I-H_{\*M_\tau})\*M_0\*\mu_0\|\\
     & \hspace{2cm}+ (1-g_{\*M_\tau}({\*U^*}))\frac{1}{\|(I-H_{\*M_\tau}){\*U^*}\|}{\*U^*}^T(I-H_{\*M_\tau})\*M_0\*\mu_0.
     %&  = \gamma_1\|(I-H_{\*M_\tau})X_0 \beta_0\|,
\end{align*}
%where $g_{\*M_\tau}({\*U^*})= \|(I-H_{\*M_\tau}){\*U^*}\|/\|{\*U^*}\|$ and $g_{\*M_\tau}(\*U)= \|(I-H_{\*M_\tau})\*U\|/\|\*U\|.$  
It then follows that
\begin{align*}
    & \frac{1}{\|(I-H_{\*M_\tau}){\*U^*}\|}{\*U^*}^T(I-H_{\*M_\tau})\*M_0\*\mu_0\\ 
    & \leq  \frac{g_{\*M_\tau}(\*U)}{g_{\*M_\tau}({\*U^*})} \tilde\gamma_1\|(I-H_{\*M_\tau})\*M_0\*\mu_0\|  + \frac{1}{g_{\*M_\tau}({\*U^*})} \sqrt{2-2\sqrt{1-\tilde\gamma^2}}\|(I-H_{\*M_\tau})\*M_0\*\mu_0\|.
\end{align*}
Further because $\|(I-H_{\*M_\tau}){\*U^*}\|  \leq \|(I-H_{\*M_\tau} P_{\*U}){\*U^*}\|$, if $\rho({\*U^*}, \*U)  > 1- \tilde\gamma^2,$ we have
\begin{align*}
    g_{\*M_\tau}({\*U^*}) & \leq \frac{\|(I-P_{\*U}){\*U^*}\|}{\|{\*U^*}\|} + \frac{\|(P_{\*U}-H_{\*M_\tau} P_{\*U}){\*U^*}\|}{\|{\*U^*}\|} \\
    & \leq \tilde\gamma + \frac{\|(I-H_{\*M_\tau})P_{\*U}{\*U^*}\|}{\|P_{\*U}{\*U^*}\|} = \tilde\gamma + g_{\*M_\tau}(\*U).
\end{align*}
Similarly, we can show that $g_{\*M_\tau}(\*U)\leq  g_{\*M_\tau}({\*U^*}) + \tilde\gamma.$
Then 
\begin{align*}
\frac{g_{\*M_\tau}({\*U^*})}{g_{\*M_\tau}(\*U)}\geq 1 - \frac{\tilde\gamma}{g_{\*M_\tau}(\*U)}.     
\end{align*}
It then follows that a sufficient condition for $\frac{1}{\|(I-H_{\*M_\tau}){\*U^*}\|}{\*U^*}^T(I-H_{\*M_\tau})\*M_0\*\mu_0 \leq \gamma_1$ is 
\begin{align*}
\tilde\gamma_1\leq \left(1 - \frac{\tilde\gamma}{g_{\*M_\tau}(\*U)}\right) \gamma_1 - \sqrt{2-2\sqrt{1-\tilde\gamma^2}}\leq  \frac{g_{\*M_\tau}({\*U^*})}{g_{\*M_\tau}(\*U)}\gamma_1 - \sqrt{2-2\sqrt{1-\tilde\gamma^2}}.     
\end{align*}
Because $g_{\*M_\tau}(\*U) = \sqrt{1-\rho^2(\*U, \tau) } > \sqrt{\tilde\gamma},$ the above inequality holds for $\tilde\gamma_1=(1-\sqrt{\tilde\gamma})\gamma_1 - \sqrt{2-2\sqrt{1-\tilde\gamma^2}}.$ 
\end{proof}

\begin{lemma}
\label{lem:bound_single_ustar}
Suppose $n-\tau_0>4$ and let $ \widetilde\gamma = \frac{\sigma_{\max}}{\sigma_{\min}} \gamma$, then for any $\gamma>0$ such that $2.1 \widetilde\gamma^{1/4}<1$, and $\lambda \in [\frac{\gamma^{1.5}}{\log(n)/n},\frac{\log\{0.5C^2_{\min}\widetilde\gamma + 1\}}{2 \tau_0 \log(n)/n}],$ %under Condition C1. and C2.
\begin{align}
\label{eq:prob_bound_C_D_simple}
       P\left\{(\tau_0, \*M_0) \not\in \widehat{\it \Upsilon}_{{\cal V}^*}(\*Y)\right\} &\leq \sum^{\tau_{\max}}_{\tau=\tau_0+1}
    \exp\left\{-\frac{2(\tau-\tau_0)}{3\sigma^2_{\min}\sqrt{\gamma}}    + 0.25n +  \tau\log(n) \right\} \nonumber \\
    & \quad \quad  +  \sum_{K=1}^{\tau_0} \exp\left\{-\frac{\sigma^{1/2}_{max}}{15\sigma^{5/2}_{\min}\sqrt\gamma} C_{\min}\ + 0.25n + \tau\log(n)\right\}\nonumber\\
     & \quad \quad  +  2n^{3\tau_0/2} \widetilde\gamma^{\frac{n-\tau_0}{2}-1}
     + 4(20 \widetilde\gamma)^{\frac{n-\tau_0-1}{4}} n^{\tau_0} + \left(1-\frac{\gamma^{n-1}}{n-1}\right)^{|\mathcal V^*|}. 
\end{align}
\end{lemma}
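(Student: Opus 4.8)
The plan is to bound $P\{(\tau_0,\*M_0)\notin\widehat{\it\Upsilon}_{{\cal V}^*}(\*Y)\}$ by splitting the failure into three essentially independent sources: an atypical realization of the noise $\*U$ generating $\*Y$, a Monte-Carlo set ${\cal V}^*$ containing no draw well aligned with $\*U$, and, given such a good draw, a failure of the modified-BIC map \eqref{eq:modefied_BIC} to return the truth. Writing $\*W=\diag(\*U)\*M_0\*\sigma_0$ for the noise part of $\*Y$, note that $(\tau_0,\*M_0)\in\widehat{\it\Upsilon}_{{\cal V}^*}(\*Y)$ as soon as one $\*u^c\in{\cal V}^*$ maps to the truth, so the failure event is the intersection over all $\*u^c$ of the per-draw failure events. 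First I would fix the realized $\*U$ and isolate the good draws $\{\*u^c:\rho(\*W,\*u^c)>1-\gamma^2\}$; by inequality \eqref{eq:angle_U_Ustar} of Lemma~\ref{lem::angle} each draw is good with probability at least $\gamma^{n-1}/(n-1)$ (using $\arcsin\gamma\ge\gamma$), and since this bound is uniform in $\*U$ and the draws are independent, the probability that none of the $|{\cal V}^*|$ draws is good is at most $(1-\gamma^{n-1}/(n-1))^{|{\cal V}^*|}$, the last term of \eqref{eq:prob_bound_C_D_simple}.

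Next I would show that on a typical $\*U$ a good draw $\*u^c$ inherits the separation needed for correct selection. Lemma~\ref{lemma:u_d_sufficent} states precisely that $g_{\*M_\tau}(\*U)>\sqrt{\widetilde\gamma}$, $\rho_{\*M_\tau^\bot}(\*U,\*M_0\*\mu_0)<\widetilde\gamma_1^2$, and $\rho(\*W,\*u^c)>1-\gamma^2$ together force $\rho_{\*M_\tau^\bot}(\*u^c,\*M_0\*\mu_0)<\gamma_1^2$. The two hypotheses on $\*U$ fail only on events I would control by, respectively, a Beta lower-tail estimate for $\|(\*I-\*H_{\*M_\tau})\*U\|^2/\|\*U\|^2$ and inequality \eqref{eq: angle_U_M0} of Lemma~\ref{lem::angle}; the condition $2.1\widetilde\gamma^{1/4}<1$ relates $\widetilde\gamma_1$ to $\widetilde\gamma$ through $\widetilde\gamma_1=(1-\sqrt{\widetilde\gamma})\gamma_1-\sqrt{2-2\sqrt{1-\widetilde\gamma^2}}$. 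After absorbing the count of the relevant membership configurations into polynomial-in-$n$ prefactors, these tails give the two geometric terms $2n^{3\tau_0/2}\widetilde\gamma^{(n-\tau_0)/2-1}$ and $4(20\widetilde\gamma)^{(n-\tau_0-1)/4}n^{\tau_0}$.

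It then remains, on the good event and with the transferred separation in hand, to verify that the objective in \eqref{eq:modefied_BIC} is minimized at $(\tau_0,\*M_0)$. I would compare the truth against over-fitting models ($\tau>\tau_0$) and against under-fitting or mis-clustering models ($\tau\le\tau_0$, $\*M_\tau\ne\*M_0$) separately. For over-fitting, the residual-sum-of-squares reduction from adding $\tau-\tau_0$ spurious components is a Gaussian quadratic form whose upper tail concentrates; taking $\lambda\ge\gamma^{1.5}/(\log(n)/n)$ makes the penalty $2\lambda\tau\log n$ dominate this reduction except on an event bounded by $\exp\{-\tfrac{2(\tau-\tau_0)}{3\sigma_{\min}^2\sqrt\gamma}+0.25n+\tau\log n\}$, which summed over $\tau$ gives the first term. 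For under-fitting, the separation $\rho_{\*M_\tau^\bot}(\*u^c,\*M_0\*\mu_0)<\gamma_1^2$ forces the residual sum of squares of any wrong model to exceed that of the truth by an amount governed by $C_{\min}$, and the upper constraint $\lambda\le\log\{0.5C_{\min}^2\widetilde\gamma+1\}/(2\tau_0\log(n)/n)$ guarantees the smaller penalty of a wrong model cannot close this gap, except on the event bounded by the second term. A union bound over the three sources then yields \eqref{eq:prob_bound_C_D_simple}.

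The hardest part will be this last step: one must convert the purely geometric angle and norm conditions into sharp, simultaneous two-sided control of the residual sums of squares across every competing $(\tau,\*M_\tau)$, and calibrate the admissible window for $\lambda$ so that the penalty is at once large enough to suppress over-fitting and small enough not to disadvantage the truth against under-fitting alternatives. Keeping the combinatorial union bound over membership matrices from overwhelming the exponential rates (the $0.25n$ and $\tau\log n$ factors) is the principal technical obstacle.
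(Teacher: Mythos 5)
Your proposal is correct and follows essentially the same route as the paper's proof: the same three-way decomposition (no well-aligned Monte-Carlo draw, atypical $\*U$ controlled via the Beta/F tail for $g_{\*M_\tau}(\*U)$ and inequality \eqref{eq: angle_U_M0}, and failure of the modified-BIC comparison split into over-fitting versus under-fitting cases), the same transfer of the angle condition from $\*U$ to $\*U^*$ via Lemma~\ref{lemma:u_d_sufficent}, the same chi-square moment-generating-function tail for $\|\*W\|^2$, and the same $n^\tau$ union bound over membership matrices. The only cosmetic difference is that you justify the last term by the uniformity of the per-draw bound in $\*U$ rather than by the Jensen-type argument the paper cites, which is if anything the cleaner justification.
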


\begin{proof}[Proof of Lemma~\ref{lem:bound_single_ustar}]
Let $RSS(\*M_\tau, \*U^*) = \|(\*I-\*H_{\*M_\tau, \*U^*})\*Y\|^2$ and $RSS(\*M_0, \*U^*) = \|(\*I-\*H_{\*M_0, \*U^*})\*Y\|^2$. Then we have
\begin{align*}
    & P(\hat{\*M}^*  = \*M_\tau) \\
    & \leq P\left\{RSS(\*M_\tau, \*U^*) - e^{2\lambda(\tau_0-\tau)\log(n)/n} RSS(\*M_0, \*U^*) + 1 -e^{2\lambda(\tau_0-\tau)\log(n)/n}\leq 0 \right\}\\
    & = P\big\{ \|(\*I-\*H_{\*M_\tau, \*U^*})\*M_0 \*\mu_0\|^2 + \|(\*I-\*H_{\*M_\tau, \*U^*})\*W\|^2 + 2 (\*W)^T (\*I-\*H_{\*M_\tau, \*U^*})\*M_0 \*\mu_0 \\
    & \quad  \qquad  \qquad \qquad - e^{2\lambda(\tau_0-\tau)\log(n)/n} \|(\*I-\*H_{\*M_0, \*U^*})\*W\|^2 +1 -e^{2\lambda(\tau_0-\tau)\log(n)/n} < 0\big\}.
\end{align*}

Let $C_{\*M_0, \*M_\tau} = \|(\*I - \*M_\tau)\*M_0\*\mu_0\|.$ Then for $\tau > \tau_0,$ , when $\lambda \geq \frac{\gamma^{1.5}}{\log(n)/n},$ we have
\begin{align*}
     & P\big\{\hat{\*M}^* = \*M_\tau, \rho(\*W, \*U^*) > 1-\gamma^2\big\} \\
    & \leq P\big\{ - e^{2\lambda(\tau_0-\tau)\log(n)/n} \|(\*I-\*H_{\*M_0, \*U^*})\*W\|^2+1 -e^{2\lambda(\tau_0-\tau)\log(n)/n} < 0,\rho(\*W, \*U^*) > 1-\gamma^2 \big\}\\
    & \leq P\big\{ - e^{2\lambda(\tau_0-\tau)\log(n)/n} \|(\*I-\*H_{\*U^*})\*W\|^2+1 -e^{2\lambda(\tau_0-\tau)\log(n)/n} < 0,  \rho(\*W, \*U^*) > 1-\gamma^2 \big\}\\
    & \leq P\big\{ - e^{2\lambda(\tau_0-\tau)\log(n)/n}\gamma^2\|\*W\|^2 +1 -e^{2\lambda(\tau_0-\tau)\log(n)/n} < 0, \rho(\*W, \*U^*) > 1-\gamma^2 \big\}\\
    & = P\bigg\{ \|\*W\|^2 > \frac{1 -e^{2\lambda(\tau_0-\tau)\log(n)/n}}{ e^{2\lambda(\tau_0-\tau)\log(n)/n}\gamma^2} , \rho(\*W, \*U^*) > 1-\gamma^2 \bigg\}\\
    & \leq P\bigg\{ \|\*W\|^2 > \frac{1 -e^{2\lambda(\tau_0-\tau)\log(n)/n}}{ e^{2\lambda(\tau_0-\tau)\log(n)/n}\gamma^2} \bigg\}\\
    & \leq P\bigg\{ \|\*W\|^2 > \frac{ -2\lambda(\tau_0-\tau)\log(n)/n}{ \gamma^2} \bigg\}\\
    & \leq P\bigg\{ \|\*W\|^2 > \frac{ -2(\tau_0-\tau)}{ \sqrt{\gamma}} \bigg\}\\
    & =  P\left\{\|\*W\|^2  >  C^I_{\gamma, \tau}\right\}. \numberthis \label{eq:CI_gamma}
 \end{align*}
 
 The second-to-last inequality holds because $\frac{1-x}{x}\geq  -\log(x)$ for any $0 < x \leq 1.$

Because $\|\*W\|^2$ follows a weighted chi-square distribution, then for any $c>0$, we can bound the probability
\begin{align}
\label{eq:markov_ineuqality}
    P(\|\*W\|^2 > c)  \leq E \exp\{t\|\*W\|^2 -tc\} 
    = \prod_{i=1}^N\left(\frac{1}{1-2\sigma^2_it}\right)^{1/2}e^{-tc}.
\end{align}

Make $t = 1/(3\sigma_{\min}),$ then for any $\*M_\tau, \tau > \tau_0$
\begin{align*}
%\label{eq:bound_tau_large}
    P(\hat{\*M}^*  = \*M_\tau) \leq \exp\left\{-\frac{1}{3\sigma^2_{\min}} C^I_{\gamma, \tau} + 0.25n \right\}.
\end{align*}

For $\tau \leq \tau_0,$  we would need $\lambda$ to satisfy
% \begin{align*}
%     C^2_{\*M_0, \*M_\tau}   >  e^{2\lambda(\tau_0-\tau)\log(n)/n} -1. 
% \end{align*}
\begin{align}
\label{eq:lambda_upper}
    \lambda \leq \frac{\log\{C^2_{\min}(1- \gamma^2_1)/2 + 1\}}{2 \tau_0 \log(n)/n}.
\end{align}
Then on the event $\{\rho(\*W, \*U^*) > 1-\gamma^2, \rho_{\*M_\tau^\bot}(\*U^*, \*M_0\*\mu_0)<  \gamma_1^2\},$ we have
\begin{align*}
%\label{eq:bound_c_\tau_small}
%\begin{split}
    & P(\hat{\*M}^*  = \*M_\tau, \rho(\*W, \*U^*) > 1-\gamma^2, \rho_{\*M_\tau^\bot}(\*U^*, \*M_0\*\mu_0)<  \gamma_1^2) \\
    & \leq P\big\{\|(\*I-\*H_{\*M_\tau, \*U^*})\*M_0 \*\mu_0\|^2  + 2 (\*W)^T (\*I-\*H_{\*M_\tau, \*U^*})\*M_0 \*\mu_0  \\
    & \quad  \qquad  \qquad \qquad   - e^{2\lambda(\tau_0-\tau)\log(n)/n} \|(\*I-\*H_{\*M_0, \*U^*})\*W\|^2 +1 -e^{2\lambda(\tau_0-\tau)\log(n)/n} < 0,\\
        & \quad  \qquad  \qquad \qquad  \qquad \rho(\*W, \*U^*) > 1-\gamma^2, \rho_{\*M_\tau^\bot}(\*U^*, \*M_0\*\mu_0)<  \gamma_1^2\big\}\\
   & \leq P\big\{\|(\*I-\*H_{\*M_\tau, \*U^*})\*M_0 \*\mu_0\|^2  + 2 (\*W)^T (\*I-\*H_{\*M_\tau, \*U^*})\*M_0 \*\mu_0  \\
    & \quad  \qquad  \qquad \qquad   - e^{2\lambda(\tau_0-\tau)\log(n)/n} \|(\*I-\*H_{\*M_0, \*U^*})\*W\|^2 +1 -e^{2\lambda(\tau_0-\tau)\log(n)/n} < 0\big\}\\
     & \leq P\big\{(1 - \gamma^2_1)\|(\*I-\*H_{\*M_\tau})\*M_0 \*\mu_0\|^2  - 2 \|(\*I-\*H_{\*M_\tau, \*U^*})(\*W)\| \|(\*I-\*H_{\*M_\tau, \*U^*})\*M_0 \*\mu_0\|  \\
    & \quad  \qquad  \qquad \qquad   - e^{2\lambda(\tau_0-\tau)\log(n)/n} \|(\*I-\*H_{\*M_0, \*U^*})\*W\|^2 +1 -e^{2\lambda(\tau_0-\tau)\log(n)/n} < 0\big\}\\
    & \leq P\big\{(1 - \gamma^2_1)\|(\*I-\*H_{\*M_\tau})\*M_0 \*\mu_0\|^2  - 2 \gamma\|\*W\| \|(\*I-\*H_{\*M_\tau})\*M_0 \*\mu_0\|  \\
    & \quad  \qquad  \qquad \qquad   - e^{2\lambda(\tau_0-\tau)\log(n)/n} \gamma^2\|\*W\|^2 +1 -e^{2\lambda(\tau_0-\tau)\log(n)/n} < 0\big\}\\
     & \leq P\bigg\{{\|\*W\| \gamma}  > \\
     &  \frac{-C_{\*M_0, \*M_\tau}  + \sqrt{ C^2_{\*M_0, \*M_\tau}  +e^{2\lambda(\tau_0-\tau)\log(n)/n} \{(1- e^{2\lambda(\tau_0-\tau)\log(n)/n} )+ (1- \gamma_1^2) C^2_{\*M_0, \*M_\tau}\}   } }{e^{2\lambda(\tau_0-\tau)\log(n)/n} }\bigg\}\\
     & \leq P\bigg\{{\|\*W\|}  >  \frac{-C_{\*M_0, \*M_\tau}  + \sqrt{ C^2_{\*M_0, \*M_\tau}  + (1- \gamma_1^2) C^2_{\*M_0, \*M_\tau}/2   } }{\gamma }\bigg\}\\
     & \leq P\bigg\{\|\*W\| > C_{\min} \frac{-1 + \sqrt{1 + (1-\gamma_1^2)/2}}{\gamma}\bigg\}\\
     & \leq P\bigg\{\|\*W\| > C_{\min} \frac{(1-\gamma_1^2)}{5\gamma}\bigg\}\\
     & = P\left\{\|\*W\|  >  C^{II}_{\gamma_1,\gamma, \tau}\right\}\\
     & \leq \exp\left\{-\frac{1}{3\sigma^2_{\min}} C^{II}_{\gamma,\gamma_1, \tau}\ + 0.25n \right\}, 
%\end{split}
\end{align*}
where the last inequality follows from \eqref{eq:markov_ineuqality} and making $t = 1/(3\sigma_{\min}).$ The second to last inequality holds because $-1+\sqrt{1+x} \geq 0.4x$ for any $0 \leq x \leq 1.$ 

 %$\alpha_{\gamma_1}=(e^{2\lambda(\tau_0-\tau)\log(n)/n} -1)/(\sqrt{1-\gamma_1^2} C^2_{\*M_0, \*M_\tau}),$

 It then follows from the above and Lemma~\ref{lemma:u_d_sufficent} that for any $\*M_\tau, \tau \leq \tau_0$
 \begin{align*}
    & P\big\{\hat{\*M}^* = \*M_\tau, \rho(\*W, \*U^*) > 1-\gamma^2\} \\
    &  = P\big\{\hat{\*M}^* = \*M_\tau, \rho(\*W, \*U^*) > 1-\gamma^2, \rho_{\*M_\tau^\bot}(\*U^*, \*M_0\*\mu_0)<  \gamma_1^2\}\\ & \qquad + P\big\{\hat{\*M}^* = \*M_\tau, \rho(\*W, \*U^*) > 1-\gamma^2, \rho_{\*M_\tau^\bot}(\*U^*, \*M_0\*\mu_0) \geq  \gamma_1^2\}\\
    & \leq P\left\{\|\*W\|  >  C^{II}_{\gamma_1,\gamma, \tau}\right\} +   P\big\{\rho(\*W, \*U^*) > 1-\gamma^2, \rho_{\*M_\tau^\bot}(\*U^*, \*M_0\*\mu_0) \geq  \gamma_1^2\}\\
    & \leq P\left\{\|\*W\|  >  C^{II}_{\gamma_1,\gamma, \tau}\right\} +  P\left\{g_{\*M_\tau}(\*U) \leq \sqrt{\tilde\gamma}\right\} +  P\left\{\rho_{\*M_\tau^\bot}(\*U, \*M_0\*\mu_0) \leq  \tilde \gamma_1^2\right\}.
 \end{align*}
  To bound the probability  $P\left\{g_{\*M_\tau}(\*U) \leq \sqrt{\tilde\gamma}\right\},$
 we have
 \begin{align*}
   & P(g_{\*M_\tau}(\*U) \leq  \sqrt{\tilde\gamma})= P\left(\frac{\|(I-H_{\*M_\tau})\*U\|^2}{\|H_{\*M_\tau}\*U\|^2} \leq \frac{\tilde\gamma}{1-\tilde\gamma}\right)=  F_{n-\tau, \tau}\left(\frac{\tilde\gamma/(n-\tau)}{(1-\tilde\gamma)/\tau}\right).
\end{align*}
 When $n-\tau>4$ and $\tilde\gamma<0.6$,
\begin{align*}
    F_{n-\tau, \tau}\left(\frac{\tilde\gamma/(n-\tau)}{(1-\tilde\gamma)/\tau}\right) & = \frac{{\int}_0^{\tilde\gamma} t^{\frac{n-\tau}{2}-1} (1-t)^{\frac{\tau}{2}-1}dt}{\mathcal B(\frac{n-\tau}{2}, \frac{\tau}{2})}\leq \frac{\tilde\gamma^{\frac{n-\tau}{2}-1}}{\mathcal B(\frac{n-\tau}{2}, \frac{\tau}{2})} \leq \left(\frac{n-\tau}{2}\right)^{\frac{\tau}{2}}\tilde\gamma^{\frac{n-\tau}{2}-1},
\end{align*}
since the beta function $\mathcal B(\frac{n-|\tau|}{2}, \frac{|\tau|}{2}) \geq \left(\frac{n-|\tau|}{2}\right)^{-\frac{|\tau|}{2}}.$
Moreover, By Lemma~\ref{lem::angle}, we have
 \begin{align*}
     P\left\{\rho_{\*M_\tau^\bot}(\*U, \*M_0\*\mu_0)\right\} \leq 2(\arccos \tilde\gamma_1)^{n-\tau-1}.
 \end{align*}

Then for any $\*M_\tau, \tau \leq \tau_0,$ we now have the following probability bound
\begin{align*}
    & P\big\{\hat{\*M}^* = \*M_\tau, \rho(\*W, \*U^*) > 1-\gamma^2\}\\
    & \leq  \exp\left\{-\frac{1}{3\sigma^2_{\min}} C^{II}_{\gamma,\gamma_1, \tau}\ + 0.25n \right\} + \left(\frac{n-\tau}{2}\right)^{\frac{\tau}{2}}\tilde\gamma^{\frac{n-\tau}{2}-1} +   2(\arccos \tilde\gamma_1)^{n-\tau-1}.
\end{align*}
Therefore, 
\begin{align*}
     & \sum_{\{\*M_\tau, \tau \leq \tau_0\}}P\big\{\hat{\*M}^* = \*M_\tau, \rho(\*W, \*U^*) > 1-\gamma^2\}\\
     & \leq \sum_{\tau=1}^{\tau_0} {n \choose \tau} \left\{\exp\left(-\frac{1}{3\sigma^2_{\min}} C^{II}_{\gamma,\gamma_1, \tau}\ + 0.25n \right) + \left(\frac{n-\tau}{2}\right)^{\frac{\tau}{2}}\tilde\gamma^{\frac{n-\tau}{2}-1} +  2 (\arccos \tilde\gamma_1)^{n-\tau-1}\right\}\\
     & \leq \sum_{\tau=1}^{\tau_0} n^\tau \left\{\exp\left(-\frac{1}{3\sigma^2_{\min}} C^{II}_{\gamma,\gamma_1, \tau}\ + 0.25n \right) + \left(\frac{n-\tau}{2}\right)^{\frac{\tau}{2}}\tilde\gamma^{\frac{n-\tau}{2}-1} +  2 (\arccos \tilde\gamma_1)^{n-\tau-1}\right\}\\
     & \leq  \sum_{\tau=1}^{\tau_0}\exp\left\{-\frac{1}{3\sigma^2_{\min}} C^{II}_{\gamma_1,\gamma, \tau}\ + 0.25n + \tau\log(n)\right\} + n^{3\tau_0/2} \tilde\gamma^{\frac{n-\tau_0}{2}-1}\sum_{t=0}^{\tau_0-1} \left(\frac{\tilde\gamma}{n^3}\right)^t \\
     & \qquad \qquad \qquad +  2(\arccos \tilde\gamma_1)^{n-\tau_0-1} n^{\tau_0} \sum_{t=0}^{\tau_0-1} \left(\frac{\arccos \tilde\gamma_1}{n}\right)^{t}\\
     & \leq \sum_{\tau=1}^{\tau_0}\exp\left\{-\frac{1}{3\sigma^2_{\min}} C^{II}_{\gamma_1,\gamma, \tau}\ + 0.25n + \tau\log(n)\right\} + 2n^{3\tau_0/2} \tilde\gamma^{\frac{n-\tau_0}{2}-1}
     + 4(\arccos \tilde\gamma_1)^{n-\tau_0-1} n^{\tau_0}.
\end{align*}

It then follows from \eqref{eq:CI_gamma} and the above that 
\begin{align*}
\P\{\hat{\*M}^* \neq \*M_0\} &  = P\{\hat{\*M}^* \neq \*M_0,  \rho(\*W, \*U^*) > 1-\gamma^2\} + P\{\hat{\*M}^* \neq \*M_0,  \rho(\*W, \*U^*) \leq  1-\gamma^2\}\\ 
    & \leq P\{\hat{\*M}^* \neq \*M_0,  \rho(\*W, \*U^*) > 1-\gamma^2\} +  \P\{\rho(\*W, \*U^*) \leq  1-\gamma^2\}\\
    &\leq \sum^{K_{\max}}_{\tau=\tau_0+1}
    \exp\left\{-\frac{1}{3\sigma^2_{\min}} C^I_{\gamma, \tau} + 0.25n +  \tau\log(n) \right\} \\
    & \quad \quad  +  \sum_{\tau=1}^{\tau_0} \exp\left\{-\frac{1}{3\sigma^2_{\min}} C^{II}_{\gamma_1,\gamma, \tau}\ + 0.25n + \tau\log(n)\right\}\\
     & \quad \quad  +  2n^{3\tau_0/2} \tilde\gamma^{\frac{n-\tau_0}{2}-1}
     + 4(\arccos \tilde\gamma_1)^{n-\tau_0-1} n^{\tau_0} + P\{\rho(\*W, \*U^*) \leq  1-\gamma^2\}. 
\end{align*}
The above is similar to Condition (N1) with the difference being the first four terms, all of which, we will show below, can be arbitrarily small. 
It then follows immediately from \eqref{eq: 1-P_delta} in the proof of  Lemma~\ref{lemma:p_bound_C_D_general}and Lemma~\ref{lem::angle} that
\begin{align}
\label{eq:finite_p_bound_C_D}
     P\{\*M_0 \not\in \widehat{\it \Upsilon}_{{\cal V}^*}(\*Y)\} &\leq \sum^{K_{\max}}_{\tau=\tau_0+1}
    \exp\left\{-\frac{1}{3\sigma^2_{\min}} C^I_{\gamma, \tau} + 0.25n +  \tau\log(n) \right\} \nonumber \\
    & \quad  +  \sum_{\tau=1}^{\tau_0} \exp\left\{-\frac{1}{3\sigma^2_{\min}} C^{II}_{\gamma_1,\gamma, \tau}\ + 0.25n + \tau\log(n)\right\}\nonumber\\
     & \quad +  2n^{3\tau_0/2} \tilde\gamma^{\frac{n-\tau_0}{2}-1}
     + 4(\arccos \tilde\gamma_1)^{n-\tau_0-1} n^{\tau_0} + \left(1-\frac{\gamma^{n-2}\arcsin (\gamma)}{n-1}\right)^{|{\cal V}^*|}. 
\end{align}

We now will try to simplify the bound by
first making $\gamma_1= \sqrt{1- \tilde\gamma^{1/2}},$ then $\tilde\gamma_1=(1-\sqrt{\tilde\gamma}^{1/2})\sqrt{1- \tilde\gamma} - \sqrt{2-2\sqrt{1-\tilde\gamma^2}} \geq 1 - 2\sqrt{\tilde\gamma} \geq 0$ for $0 \leq \tilde\gamma \leq 0.25,$ from which we have $\arccos \tilde\gamma_1 \leq  \arccos (1 - 2\sqrt{\tilde\gamma}) \leq 2.1 \tilde\gamma^{1/4}<1.$  Further $C^{II}_{\gamma_1,\gamma, \tau} = \frac{1-\gamma_1^2}{5\gamma}C_{\min} = \frac{\sigma^{1/2}_{\max}}{5\sigma^{1/2}_{\min}\gamma^{1/2}} C_{\min},$  and the bound for $\lambda$ in \eqref{eq:lambda_upper} is reduced to   $\lambda \leq \frac{\log\{C^2_{\min}(1- \gamma^2_1)/2 + 1\}}{2 \tau_0 \log(n)/n} = \frac{\log\{0.5C^2_{\min}\tilde\gamma^{1/2} + 1\}}{2 \tau_0 \log(n)/n} .$

Moreover, because $\arcsin x \geq x$, we have
\begin{align*}
    1-\frac{\gamma^{n-2}\arcsin (\gamma)}{n-1} \leq 1-\frac{\gamma^{n-1}}{n-1}.
\end{align*}

Then the bound in \eqref{eq:finite_p_bound_C_D} is simplified as
\begin{align*}
       P\{\*M_0 \not\in \widehat{\it \Upsilon}_{{\cal V}^*}(\*Y)\} &\leq \sum^{K_{\max}}_{\tau=\tau_0+1}
    \exp\left\{-\frac{2(\tau-\tau_0)}{3\sigma^2_{\min}\sqrt{\gamma}}    + 0.25n +  \tau\log(n) \right\} \nonumber \\
    & \quad \quad  +  \sum_{\tau=1}^{\tau_0} \exp\left\{-\frac{\sigma^{1/2}_{max}}{15\sigma^{5/2}_{\min}\sqrt\gamma} C_{\min}\ + 0.25n + \tau\log(n)\right\}\nonumber\\
     & \quad \quad  +  2n^{3\tau_0/2} \tilde\gamma^{\frac{n-\tau_0}{2}-1}
     + 4(20 \tilde\gamma)^{\frac{n-\tau_0-1}{4}} n^{\tau_0} + \left(1-\frac{\gamma^{n-1}}{n-1}\right)^{|{\cal V}^*|}. 
\end{align*}
\end{proof}

% \subsection*{Proof of Theorem~\ref{thm:bound_of_C_d_mixture}}

\begin{lemma}
\label{lemma:bound_of_C_d_mixture}
Suppose $n-\tau_0>4.$ Let the distance metric between $\*U$ and $\*U^*$ be $\rho(\*U, \*U^*) = \frac{\|(\*I - \*H_{\*U^*})\*U\|}{\|\*U^*\|},$ then for any $\delta >0,$ there exists a $\gamma_\delta>0$ and an interval of positive width $\Lambda_{\delta} =[\frac{\gamma_{\delta}^{1.5}}{\log(n)/n},\frac{\log\{0.5C^2_{\min}\sigma_{\max}\gamma_{\delta}/\sigma_{\min} + 1\}}{2 \tau_0 \log(n)/n}]$ such that when $\lambda \in \Lambda_{\delta},$
\begin{align*}
    \P_{\*U, {\cal V}^*}\{(\tau_0,\*M_0)\not\in \widehat{\it \Upsilon}_{{\cal V}^*}(\*Y)\} & \leq \delta + \left[1- \P\left\{\mathcal \rho(\*U, \*U^*) \leq \gamma_{\delta}\right\}\right]^{|\mathcal V^*|} \\
   &  \leq\delta + \left(1-\frac{\gamma_\delta^{n-1}}{n-1}\right)^{|\mathcal V^*|}.
   % \rightarrow \delta, 
   % \mbox{ as $|\mathcal V^*| \rightarrow \infty$. }
\end{align*}
    % C_{\*M_0, \*M_\tau} >0 $, where  $C_{\*M_0, \*M_\tau} = \|(\*I - \*M_\tau)\*M_0\*\mu_0\|,$
\end{lemma}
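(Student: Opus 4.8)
The plan is to obtain this statement as a direct corollary of the finite-sample bound in Lemma~\ref{lem:bound_single_ustar}, by choosing the tuning radius $\gamma$ small enough that the first four terms of that bound are swept into $\delta$. Write $\gamma_\delta$ for the value to be selected and recall $\widetilde\gamma = (\sigma_{\max}/\sigma_{\min})\gamma$, with $n$ and $\tau_0$ held fixed. First I would argue that each of the four leading terms tends to $0$ as $\gamma_\delta \downarrow 0$. The two exponential sums vanish because their exponents carry a factor $-c/\sqrt{\gamma_\delta}$ with a strictly positive constant $c$ — for the first sum since $\tau - \tau_0 \ge 1$ throughout its (finite) summation range, and for the second since $C_{\min} > 0$ under the identifiability assumption of \eqref{eq:C1} — so that $\exp\{-c/\sqrt{\gamma_\delta}\}$ dominates the fixed polynomial factors $\exp\{0.25 n + \tau\log n\}$. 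The two power terms $2 n^{3\tau_0/2}\widetilde\gamma^{(n-\tau_0)/2 - 1}$ and $4(20\widetilde\gamma)^{(n-\tau_0-1)/4} n^{\tau_0}$ vanish because the hypothesis $n - \tau_0 > 4$ forces both exponents to be strictly positive. Hence there exists $\gamma_\delta > 0$ for which the sum of the first four terms is at most $\delta$.

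Next I would check that this $\gamma_\delta$, shrinking it further if necessary, can be taken small enough that the admissibility hypotheses of Lemma~\ref{lem:bound_single_ustar} hold and that the interval $\Lambda_\delta$ is non-degenerate. The conditions $2.1\,\widetilde\gamma^{1/4} < 1$ and $\widetilde\gamma \le 0.25$ plainly hold for all small $\gamma_\delta$. For the width of $\Lambda_\delta = \bigl[\,\gamma_\delta^{1.5}/(\log n/n),\ \log\{0.5 C_{\min}^2 \sigma_{\max}\gamma_\delta/\sigma_{\min} + 1\}/(2\tau_0 \log n/n)\,\bigr]$, I would use $\log(1 + x) \sim x$ as $x \downarrow 0$ to see that the right endpoint is of exact order $\gamma_\delta$, while the left endpoint is of order $\gamma_\delta^{1.5} = o(\gamma_\delta)$; consequently the left endpoint lies strictly below the right endpoint once $\gamma_\delta$ is small, so $\Lambda_\delta$ has positive width and any $\lambda \in \Lambda_\delta$ is admissible. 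This endpoint comparison is the one genuinely quantitative step, and I regard it as the main (if still routine) obstacle: one must confirm that the allowable polynomial-penalty range does not collapse at the same rate at which the probability terms are being driven to zero.

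Finally I would assemble the two displayed inequalities. Applying Lemma~\ref{lem:bound_single_ustar} with $\gamma = \gamma_\delta$ but stopping one step before its closing relaxation, the only contribution beyond the four controlled terms is the Monte-Carlo miss probability $[1 - \P\{\rho(\*U,\*U^*) \le \gamma_\delta\}]^{|\mathcal V^*|}$, where $\{\rho(\*U,\*U^*) \le \gamma_\delta\}$ is precisely the favorable-neighborhood event whose probability is lower-bounded in Lemma~\ref{lem::angle}; combined with the $\delta$ bound on the four terms this yields the first inequality. The second inequality is then the relaxation used at the close of the proof of Lemma~\ref{lem:bound_single_ustar}: since $\arcsin\gamma_\delta \ge \gamma_\delta$, estimate \eqref{eq:angle_U_Ustar} gives $\P\{\rho(\*U,\*U^*) \le \gamma_\delta\} \ge \gamma_\delta^{n-1}/(n-1)$, so $1 - \P\{\rho(\*U,\*U^*) \le \gamma_\delta\} \le 1 - \gamma_\delta^{n-1}/(n-1)$, and raising to the power $|\mathcal V^*|$ completes the chain.
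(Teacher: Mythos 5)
Your proposal is correct and follows essentially the same route as the paper: invoke the finite-sample bound of Lemma~\ref{lem:bound_single_ustar}, shrink $\gamma_\delta$ until the four non-Monte-Carlo terms sum to at most $\delta$ (which works because the exponential sums carry a $-c/\sqrt{\gamma}$ factor and the power terms have strictly positive exponents under $n-\tau_0>4$), verify that $\Lambda_\delta$ is non-degenerate because its left endpoint is of order $\gamma_\delta^{1.5}$ while its right endpoint is of order $\gamma_\delta$ (the paper phrases this as $x^{1.5}\le \frac{1}{\tau_0}\log(cx+1)$ near $0$), and finish with the $\arcsin\gamma\ge\gamma$ relaxation from Lemma~\ref{lem::angle}. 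Your write-up is in fact more explicit than the paper's about why each term vanishes and about the identification of the Monte-Carlo miss-probability term, but the underlying argument is identical.
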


\begin{proof}[Proof of Lemma~\ref{lemma:bound_of_C_d_mixture}]
Because the first four terms of \eqref{eq:prob_bound_C_D_simple} converges to 0 as $\gamma$ goes to 0, for any $\delta >0$, there exist a $\gamma_{\delta}$ that the probability bound in  \eqref{eq:prob_bound_C_D_simple} reduces to $ \delta + \left(1-\frac{\gamma_\delta^{n-1}}{n-1}\right)^{|\mathcal V^*|}.$ Moreover, the interval $\Lambda_\delta$ is of positive length because $x^{1.5} \leq \frac{1}{\tau_0}\log(cx+1)$ in a neighbourhood of 0 for any constant $c>0.$ Hence Lemma~\ref{lemma:bound_of_C_d_mixture} follows immediately. 
 \end{proof}
 
 \begin{proof}[Proof of Theorem~\ref{thm:bound_of_C_d_mixture}]
 Let $P_\delta = \left(1-\frac{\gamma_\delta^{n-1}}{n-1}\right)$, by Lemma~\ref{lemma:bound_of_C_d_mixture} and Markov Inequality,
 \begin{align*}
      & \P_{{\cal V}^*}\left[\P_{\*U| {\cal V}^*}\{(\tau_0,\*M_0)\not\in \widehat{\it \Upsilon}_{{\cal V}^*}(\*Y)\} -\delta \geq (1-P_{\delta})^{|\mathcal V^*|/2}\right]\\
      & \leq  \frac{E_{\mathcal V^*}\P_{\*U|\mathcal V^*}\{(\tau_0,\*M_0)\not\in \widehat{\it \Upsilon}_{{\cal V}^*}(\*Y)\}}{(1-P_{\delta })^{|\mathcal V^*|/2}} \\
      & =  \frac{\P_{\*U,\mathcal V^*}\{(\tau_0,\*M_0)\not\in \widehat{\it \Upsilon}_{{\cal V}^*}(\*Y)\}}{(1-P_{\delta })^{|\mathcal V^*|/2}} 
      \leq {(1-P_{\delta })^{|\mathcal V^*|/2}},
 \end{align*}
from which (a) follows by making $c<-\frac{1}{2}\log(1- P_\delta).$ Then (b) follows from (a) and Corollary~\ref{cor:cand_cs_coverage}.
 
%  	\begin{align*}
% 	\P_{\cal V^*}\left[\P_{\*U|\mathcal V^*}\{\kappa_0 \not\in \widehat{\it \Upsilon}_{{\cal V}^*}(\*Z)\} \geq (1-P_{\cal N})^{|\mathcal V^*|/2}\right] & \leq \frac{E_{\mathcal V^*}\P_{\*U|\mathcal V^*}\{\kappa_0 \not\in \widehat{\it \Upsilon}_{{\cal V}^*}(\*Z)\}}{(1-P_{\cal N})^{|\mathcal V^*|/2}}\\
% 	& \leq \frac{\P_{\*U,\mathcal V^*}\{\kappa_0 \not\in \widehat{\it \Upsilon}_{{\cal V}^*}(\*Z)\}}{(1-P_{\cal N})^{|\mathcal V^*|/2}} = (1-P_{\cal N})^{|\mathcal V^*|/2}.	
% 	\end{align*}
\end{proof}

\newpage
\section*{Appendix II: A Brief Review and Comparison with Existing and Relevant Inference Procedures}

The repro samples method is an entirely frequentist approach, but
its development inherits some key ideas from several 
existing inferential procedures
across  Bayesian, fiducial and frequentist (BFF) paradigms.  
For readers' convenience, we first provide in Subsection II-A a brief review of several relevant BFF inferential approaches. We then provide in Subsection II-B a comparative discussion to highlight the strengths and uniqueness of the proposed repro samples method, and also describe a role that the  samples method can play in bridging across BFF inferences.

\subsection*{II-A. A Brief Review of Existing and 
Relevant BFF procedures}

The repro samples method utilizes artificial sample data sets for inference. There are several relevant artificial sample-based inferential approaches across the BFF paradigms. 

\vspace{2mm} \noindent
{\bf (i) Approximate Bayesian Computing (ABC)}
refers to a family of techniques used to approximate posterior densities of $\btheta$ by bypassing direct likelihood evaluations \citep[cf.][]{Rubin1984,Tavare1997, 
Peters2012}.  
A basic version of the so-called {\it rejection ABC algorithm} has the following [a] - [c] steps:
{\it 
   [a] Simulate an artificial model parameter $\btheta^*$ from a prior $\pi(\btheta)$ and, given $\btheta^*$, simulate an artificial dataset ${\bf y}^*$ 
(e.g., set ${\bf y}^* = G(\btheta^*, {\bf u}^*)$ for a simulated ${\bf u}^*$); 
[b] If 
$ {\bf y}^* \approx {\bf y}_{obs}$,
 we collect the artificial parameter $\btheta^*$; 
 [c] Repeat steps [a] \& [b] to obtain a large set of $\btheta^*$.}
ABC is in fact 
a Bayesian inversion method,
since the match ${\bf y}^* \approx {\bf y}_{obs}$ is equivalent to solving $G(\btheta, {\bf u}^*) \approx {\bf y}_{obs}$ for $\btheta^*$.

Operationally,
because it is difficult to match ${\bf y}^* \approx {\bf y}_{obs}$, ABC instead matches
$|S({\bf y}_{obs}) - S({\bf y}^*)|\leq \epsilon$ for a pre-chosen summary statistic $S(\cdot)$ and tolerance $\epsilon >0$. The kept $\btheta^*$ form  
{\it an ABC posterior},  $p_{\epsilon}(\btheta|\*y_{obs})$.  
If $S(\cdot)$
is a {\it sufficient statistic},  then $p_{\epsilon}(\btheta | \*y_{obs}) \to p(\btheta | \*y_{obs})$,  
the target posterior distribution,  as $\epsilon \to 0$ at~a~fast~rate~\citep{Barber2015,Li2016}.
To improve computing efficiency, this basic ABC algorithm has been extended to, for instance, more complex ABC-Markov Chain Monte Carlo (ABC-MCMC) type of algorithms, but the key matching idea remains. 
Note that, the ABC method seeks to match each single of artificial sample copy $\*y^*$ with the observed data $\*y_{obs}$ through a summary statistic and a preset tolerance level $\epsilon$ to form a rejection rule. 

Bayesian literature has 
 pointed out two remaining issues in ABC~\citep[cf.,][]{Li2016,Thornton2018}. First, When $S(\cdot)$
is not sufficient, $p_{\epsilon}(\btheta | \*y_{obs}) \not\to p(\btheta | \*y_{obs})$.  
In this case,
there is no guarantee that an ABC posterior is a Bayesian posterior thus 
the interpretation of the inferential result by the ABC posterior is unclear.  
However,  
deriving a sufficient statistic is not possible when the likelihood is intractable, thus it places a limitation on the practical use of ABC as a likelihood-free inference approach. 
Second, an ABC method requires the preset threshold $\epsilon \to 0$ at a fast rate,
leading to a degeneracy of computing efficiency 
(since the acceptance probability in Step [b] goes to $0$ for a very small $\epsilon \to 0$). 
In practice, there is  no clear-cut choice for an appropriate $\epsilon$ to balance the procedures' computational efficiency and inference validity
is an unsolved question \cite[e.g.,][]{Li2016}.

\vspace{2mm} \noindent
{\bf (ii) Generalized fiducial inference (GFI)} is a generalization of Fisher's fiducial method, which is understood in contemporary statistics  as {an inversion method} to solve a pivot equation for model parameter $\btheta$ \citep{Zabell1992, Hannig2016}. 
GFI extends the inversion from a pivot to actual data; i.e., solves for $\btheta^*$ from
${\bf y}_{obs} = G(\btheta, {\bf u}^*)$ for artificially generated ${\bf u}^*$'s. But, since a solution is not always possible, thus GFI 
considers an optimization under an $\epsilon$-approximation: 
\begin{equation}
 \btheta_\epsilon^* =  {\rm argmin}_{\theta \in \{\theta: \,\, || {\bf y}_{obs} - G(\theta, {\bf u}^*)||^2 \leq \epsilon \}}  || {\bf y}_{obs} - G({\theta}, {\bf u}^*)||^2
\label{eq:GFI}
\end{equation}
and let $\epsilon \to 0$ at a fast rate~\citep{Hannig2016}.
GLI often has a guaranteed statistical (frequentist) performance, provided $n \to \infty$ and $G$ is smooth/differentiable in $\btheta$~\citep{Hannig2009, Hannig2016}. Different than the ABC that uses rejection sample, the GFI instead uses optimizations to approximately solve the equation $\*y_{obs} = G(\btheta, \*u^*)$ within a $\epsilon$-constrained set in $\Theta$.
Operationally, 
GLI suffers the same computational issue as in ABC on the choice of the pre-specified $\epsilon$ since, 
as $\epsilon \to \infty$ at a fast rate, equation (\ref{eq:GFI}) may not have a solution. 

\vspace{2mm} \noindent
{\bf (iii) Efron's bootstrap and other related artificial sampling methods}, in which many copies of artificial data are generated, are a popular way to help quantify uncertainty in complicate estimation problems in statistics. Let $\hat \btheta$ be a point estimator of $\btheta$ and $\btheta^*$ be the corresponding bootstrap estimator
The key justification relies on the so-called {\it bootstrap central limit theorem} (CLT) $\btheta^* - \hat \btheta | \*y_{obs} \sim \hat \btheta -\btheta | \btheta$ \citep[e.g.,]{Singh1981,Bickel1981}, in which the (multinomial distributed) uncertainty in resampled artificial data asymptotically matches the  (often non-multinomial distributed) uncertainty in $\hat \btheta$ inherited from the sampling population. 
When CLT does not apply (e.g., the parameter space $\Theta$ is a discrete set), the methods are not supported by theory and often perform poorly.  

\vspace{3mm}\noindent
In addition to the above artificial-sample-based approaches, repro samples method is also related the so called {\it Inferential Model} (IM) that seeks to produce  (epistemic)  probabilistic inference for $\btheta_0$:  

\noindent
{\bf (iv) Inferential model (IM)}  is an attempt to develop a general framework for ``prior-free exact  probabilistic inference'' \citep{Martin2015}. As described in \cite{Martin2015} and under the model assumption (\ref{eq:1}), an IM procedure include three steps: 
{\it [\it A-step] Associate $(\*Y, \btheta)$ with the unobserved auxiliary statistic $\*U$, i.e., $\*Y = G({\btheta}, {\*U})$;
[\it P-Step] Predict the unobserved auxiliary statistic $\*U$ with a random set ${\cal S}$ with distribution~${\mathbb P}_{\cal S}$;
[\it C-step] Combine observed $\*y_{obs}$ with random set ${\cal S}$  into a new data-dependent random set 
$\Theta_{\*y_{obs}}({\cal S}) = \bigcup_{\*u \in {\cal S}} \left\{\btheta: \*y_{obs} = G(\btheta, \*u)\right\}$.} 
A novel aspect of the IM development that is beyond the typical fiducial procedures is that the uncertainty quantification of $\*Y$ is through the parameter free $\*U$ using a random set in [P-step]. This separation removes the impact of parameter $\*\theta$ and  makes the task of uncertainty quantification easier.  
In order to fully express the outcomes into probabilistic forms, IM development needs to use a system of imprecise probability known as {\it Dempster-Shafer Calculus}. 
The outcomes of an IM algorithm are the so-called plausible and belief functions (or lower and upper probability functions) for $\btheta_0$, which can be used for frequentist inference. The plausible and belief functions are akin to the so-called {\it upper or lower confidence distributions} \citep[e.g.,][]{Thornton2022}, although the IM method is not considered by most (including Martin and Liu) as a frequentist development.  

\subsection*{II-B. Repro samples method and Connections to BFF Inferential Approaches}

The repro samples method borrows two important ideas from the aforementioned BFF procedures. First, the repro samples method utilizes artificial data samples to help quantify the uncertainty of a statistical inference. This idea has been used in bootstrap, ABC, GFI, and Monte-Carlo hypothesis testing procedures, among others.  A key aspect of utilizing artificial data samples for inference in these procedures is the common attempt to match (in a certain way) the artificial data with the observed sample. This matching is done by directly comparing either the value or distribution of the artificial $\*y$ or its summary statistic with that of $\*y_{obs}$. The proposed repro samples method fully utilizes the artificial samples and the matching idea to both develop inference and solve computational questions. Second, the repro samples method borrows an idea from IM that we can quantify the inference uncertainty by first assessing the uncertainty of the unobserved $\*u^{rel}$. This allows us to develop both finite and asymptotic inferences under the frequentist framework without the need to use any likelihood or decision theoretical criteria. 
We consider that the repro samples method is a further development (generalization and refinements) of these existing BFF procedures. We provide some comparative discussions below. 

\vspace{2mm} \noindent
{\bf Repro samples method versus the classical Monte-Carlo hypothesis testing.}  In a few plain statements, the Monte-Carlo Algorithm~1 for the repro samples method can be described as follows: For any potential value $\btheta \in \Theta$, we generate multiple copies of $\*u^s$ (denote the set of their collection ${\cal V}$) and compute corresponding $T(\*u^s, \btheta)$'s. If there exists an $\*u^* \in {\cal U}$ such that $\*y_{obs} = G(\btheta, \*u^*)$ and the value $T(\*u^*, \btheta)$
is {\it conformal} with the multiple copies of $T(\*u^s, \btheta)$, $\*u^s \in {\cal V}$, at the level $\alpha$, then we keep the $\btheta$ in  $\Gamma_{\alpha}(\*y_{obs})$. Here, the matching of a single copy of $T(\*u^*, \btheta)$ with the multiple copies of $T(\*u^s, \btheta)$'s is evaluated by a {\it conformal measure} at level-$\alpha$. Here, the concept {\it conformal} is 
borrowed from recent developments of {\it conformal prediction} \citep{vovk2005algorithmic}. 

In the special case when the nuclear mapping is defined through a test statistic $T(\*u, \btheta) = \tilde T(\*y, \btheta)$ where $\*y = G(\btheta, \*u)$ (as discussed in Section 3.1) and suppose we can always find a $\*u$ such that $\*y_{obs} = G(\btheta, \*u^*)$ for any $\btheta$, the Monte-Carlo Algorithm~1 can be simplified to: for any potential value $\btheta \in \Theta$, we generate multiple copies of artificial data 
$\*y^s = G(\btheta, \*u^s)$ and compute the corresponding $\tilde T(\*y^s, \btheta)$, $\*u^s \in {\cal V}$. If the test statistic $\tilde T(\*y_{obs}, \btheta)$ is conformal with the many copies of $\tilde T(\*y^s, \btheta)$'s at level $\alpha$, then we keep this $\btheta$ to form the level-$\alpha$ confidence set $\Gamma_{\alpha}(\*y_{obs})$. 
This simplified algorithm is exactly an Monte-Carlo implementation of the classical hypothesis test of $H_0: \btheta_0 = \btheta$, and the conformal statement above corresponds to the statement of not rejecting $H_0$. Again, we see that the classical testing approach of our repro samples method. In the Monte-Carlo version, we utilize multiple copies of artificial sample $\*y^s$'s and compare them to a single copy of the observed data $\*y_{obs}$.

\vspace{2mm} \noindent
{\bf Repro samples method versus ABC and GFI.}
Both GFI and repro samples developments highlight the matching of equation $\*y_{obs} = G(\btheta, \*u^*)$ and the fact that, when $\*u^*$ matches or is close to $\*u^{rel}$, the solution of $\btheta$ from the equation is equal or close to the true $\btheta_0$. The difference is that the GFI approach compares each single copy of artificial sample $\*y^* = G(\btheta, \*u^*)$ with the single copy of the observed $\*y_{obs}$, so does the ABC method, while the repro samples method compares the single copy of the observed $\*y_{obs}$ with multiple copies of $\*y^s = G(\btheta, \*u^s)$
(or, more accurately, the realized $T(\btheta, \*u^{rel})$ with multiple copies of $T(\btheta, \*u^s)$'s) for $\*u^s \in {\cal V}$. By a single comparing observed sample $\*y_{obs}$ (or realized $\*u^{rel}$) with many copies of artificiallt generated $\*y^s$ (or $\*u^s \in {\cal V}$), the repro samples method has the advantage to use a level $\alpha$ to calibrate the uncertain and side-step the difficult problem on how to appropriately preset the
threshold $\epsilon$ in both GFI and ABC. Furthermore, the use of the nuclear mapping adds much flexibility to the repro samples method. Unlike the ABC method, the often inconvenient requirement of using a sufficient statistic is not needed to ensure the validity of the repro samples method. 

\vspace{2mm} \noindent
{\bf Repro samples method versus IM.}
The repro samples and IM methods both promote to first quantify the uncertainty of $\*U$ and use it to help address the overall uncertainty inherited in the sampling data. In order to produce a level $\alpha$ confidence set, the repro sample method use a single fixed Borel set and a $\btheta$-dependent nuclear mapping function to help quantify the uncertainty in $\*U$. The repro samples method is a fully frequentist approach
developed using only the standard probability tool. The IM on the other hand attempts to achieve a higher goal of producing a prior-free probabilistic inference for the unknown parameter $\btheta$. By doing so, it requires to use random sets and a complex imprecise probability system of Dempster-Shafer calculus. 
Furthermore, IM focuses on finite sample inference, while the repro samples can be used for both finite and large sample inferences. 
Finally, the repro samples method  stresses the actual matching of simulated artificial $\*u^*$ with the (unknown) realized $\*u^{rel}$ as well. The development of candidate set for discrete parameter $\kappa$ in Section 3.3 is rooted on the fact that, when $\*u^*$ is equal or in a neighborhood of $\*u^{rel}$ can give us the same $\kappa_0$ in the many-to-one mapping function (\ref{eq:tau-star}). This cannot be foreseen or easily adapted in an IM approach. 

\vspace{2mm} \noindent
{\bf Artificial-sample-based inference and a bridge of BFF paradigms}. There have been several recent developments on the foundation of statistical inference across Bayeisan, fiducial and frequentist (BFF) paradigms. \cite{Reid2022} and \cite{Thornton2022} provide comprehensive overviews on several recent BFF research developments at the foundation level. Based on the development of confidence distribution,  \cite{Thornton2022} also explores the aspect of making inference though matching artificial samples with observed data across BFF procedures. It argues that the matching of simulated artificial randomness with the sampling randomness inherited form a statistical model provides a unified
bridge to connect BFF inference procedures.
Here, the artificial randomness includes, for examples, bootstrap randomness in bootstrap, MCMMC randomness in a Bayesian analysis, the randomness of $\*U^*$ in fiducial procedures such as 
GLM and IM. 
The repro samples method is another development that seeks to align simulated artificial randomness with the sampling randomness. By doing so we can effective measure and quantify the uncertainty in our statistical inferential statements. 
The development can be used to address many difficult inference problems, especially those involving discrete parameter space and also those where large sample CLT does not apply. It provides a further advancement on the foundation of statistical inference to meet the growing need for ever-emerging data science.

\end{document}